\theoremstyle{plain}
\tikzset{
  mid arrow/.style={postaction={decorate,decoration={
        markings,
        mark=at position .575 with {\arrow[#1]{stealth}}
      }}},
  near arrow/.style={postaction={decorate,decoration={
        markings,
        mark=at position .275 with {\arrow[#1]{stealth}}
      }}},
   far arrow/.style={postaction={decorate,decoration={
        markings,
        mark=at position .800 with {\arrow[#1]{stealth}}
      }}},
}
\newtheorem{theorem}{Theorem}
\newtheorem*{theorem*}{Theorem}
\newtheorem{lemma}{Lemma}
\newtheorem{definition}{Definition}
\newtheorem{corollary}{Corollary}
\newtheorem*{corollary*}{Corollary}
\newcommand{\nn}{\nonumber}
\newcommand{\dkap}{\delta\kern-1.25pt\varkappa}
\newcommand{\CSS}{\operatorname{CSS}}
\newcommand{\dnorm}[1]{\left\lVert #1 \right\rVert_{\Diamond}}
\newcommand{\infnorm}[1]{\left\lVert #1 \right\rVert_{\infty}}
\newcommand{\infnormsm}[1]{\lVert#1 \rVert_{\infty}}
\newcommand{\onenorm}[1]{\left\lVert #1 \right\rVert_{1}}
\newcommand{\onenormsm}[1]{\lVert #1 \rVert_{1}}
\newcommand{\calE}{\mathcal{E}}
\newcommand{\calH}{\mathcal{H}}
\newcommand{\calM}{\mathcal{M}}
\newcommand{\calP}{\mathcal{P}}
\newcommand{\calS}{\mathcal{S}}
\newcommand{\calZ}{\mathcal{Z}}
\newcommand{\bbZ}{\mathbb{Z}}
\newcommand{\rmI}{\mathrm{I}}
\newcommand{\rmII}{\mathrm{II}}
\newcommand{\rmIII}{\mathrm{III}}
\newcommand{\rmIV}{\mathrm{IV}}
\newcommand{\rmN}{\mathrm{N}}
\newcommand{\bbrakket}[2]{\mbox{$ \langle\!\langle #1 | #2 \rangle\!\rangle $}}
\newcommand{\kket}[1]{\mbox{$| #1 \rangle\!\rangle$}}
\newcommand{\bbra}[1]{\mbox{$\langle\!\langle #1 |$}}
\newcommand{\Wg}{\operatorname{Wg}}
\newcommand{\sym}{\operatorname{sym}}
\newcommand{\EPR}{\operatorname{EPR}}
\newcommand{\C}{\operatorname{C}}
\renewcommand{\H}{\operatorname{H}}
\renewcommand{\epsilon}{\varepsilon}
\newcommand{\id}{\operatorname{id}}
\newcommand{\poly}{\operatorname{poly}}
\newcommand{\boldu}{\boldsymbol{u}}
\newcommand{\boldv}{\boldsymbol{v}}
\newcommand{\boldx}{\boldsymbol{x}}
\newcommand{\boldy}{\boldsymbol{y}}
\newcommand{\boldone}{\boldsymbol{1}}
\newcommand{\boldzero}{\boldsymbol{0}}
\newcommand{\Ens}{\mathbb{E}}
\newcommand{\sgn}{\operatorname{sgn}}
\newcommand*{\wideboxed}[1]{\setlength{\fboxsep}{1ex}%
  \fbox{\m@th$\displaystyle#1$}}
\title{
Designs from magic-augmented Clifford circuits
}
\author[1]{Yuzhen Zhang\thanks{yuzhenzhang@ucsb.edu}}
\author[1]{Sagar Vijay\thanks{sagarvijay@ucsb.edu}}
\author[2]{Yingfei Gu\thanks{guyingfei@tsinghua.edu.cn}}
\author[3]{Yimu Bao\thanks{baoyimu@gmail.com}}
\affil[1]{\normalsize\it Department of Physics, University of California, Santa Barbara, CA 93106, USA}
\affil[2]{\normalsize\it Institute for Advanced Study, Tsinghua University, Beijing, 100084, China}
\affil[3]{\normalsize\it Kavli Institute for Theoretical Physics, Santa Barbara, CA 93106, USA}
\date{\today}
\begin{document}

\maketitle

\begin{abstract}
We introduce magic-augmented Clifford circuits -- architectures in which Clifford circuits are preceded and/or followed by constant-depth circuits of non-Clifford (``magic") gates -- as a resource-efficient way to realize approximate $k$-designs, with reduced circuit depth and usage of magic. We prove that shallow Clifford circuits, when augmented with constant-depth circuits of magic gates, can generate approximate unitary and state $k$-designs with $\epsilon$ relative error. The total circuit depth for these constructions on $N$ qubits is $O(\log (N/\epsilon)) +2^{O(k\log k)}$ in one dimension and $O(\log\log(N/\epsilon))+2^{O(k\log k)}$ in all-to-all circuits using ancillas, which improves upon previous results for small $k \geq 4$. Furthermore, our construction of relative-error state $k$-designs only involves states with strictly local magic. The required number of magic gates is parametrically reduced when considering $k$-designs with bounded additive error. As an example, we show that shallow Clifford circuits followed by $O(k^2)$ single-qubit magic gates, independent of system size, can generate an additive-error state $k$-design. We develop a classical statistical mechanics description of our random circuit architectures, which provides a quantitative understanding of the required depth and number of magic gates for additive-error state $k$-designs. We also prove no-go theorems for various architectures to generate designs with bounded relative error.
\end{abstract}

\maketitle

\newpage
\tableofcontents

\section{Introduction}
Random unitaries have broad applications in quantum information science, condensed matter physics, and high-energy physics. 
On the theoretical side, random unitary circuits provide a valuable toolset for investigating complex quantum many-body phenomena, such as thermalization and information scrambling in quantum dynamics~\cite{Hosur:2015ylk,Roberts:2016hpo,fisher2023random} and the physics of black holes~\cite{Page1993Average,Hayden:2007cs,Sekino2008Fast,Yoshida:2017non}.
On the practical side, random unitary operations provide a resource for benchmarking quantum devices~\cite{emerson2005scalable,Magesan:2012mfg}, performing efficient measurements~\cite{Elben2023Randomized} and demonstrating quantum advantage~\cite{arute2019quantum}. 
However, generating random unitaries drawn from the Haar measure is a challenging task, typically requiring resources that scale exponentially with system size~\cite{Knill:1995kz}.
This challenge has motivated significant interest in designing quantum circuits that can efficiently generate approximate random unitaries.

A useful concept for characterizing the degree to which a circuit functions as a random unitary generator is the {\it approximate unitary $k$-design}~\cite{Emerson2003Pseudo,Gross2007Designs}. 
An ensemble of unitaries, denoted by $\mathcal{E}_u$, is said to form an approximate $k$-design if it is indistinguishable from the Haar ensemble $\H$ by quantum experiments making at most $k$ queries to the unitary.
More precisely, for the unitary ensemble, one can consider two types of error constraints controlling the distinguishability in two classes of quantum experiments.
\begin{itemize}
    \item Additive error. A unitary ensemble $\calE_u$ forms a unitary $k$-design with $\epsilon$ additive error if
    \begin{equation}
        \lVert\Phi^{(k)}_{\calE_u} - \Phi^{(k)}_{\H}\rVert_\Diamond \leq \epsilon, \label{eq:def_unitary_additive}
    \end{equation}
    where $\lVert\cdot\rVert_\Diamond$ is the diamond norm~\cite{Aharonov:1998zf}, and $\Phi_{\calE_u}^{(k)}$ is the $k$-fold twirling channel defined as follows 
    \begin{equation}
        \Phi_{\calE_u}^{(k)}(\cdot) := \Ens_{U\sim \calE_u}  U^{\otimes k} (\cdot) U^{\dagger \otimes k} 
    \end{equation}
    and similarly for the Haar ensemble. A bounded additive error ensures that the ensemble is indistinguishable from the Haar ensemble in experiments that make $k$ {\it parallel} queries.
    \item Relative error (or multiplicative error)~\cite{Brandao:2012zoj}. 
    Relative error is a stronger notion than additive error.
    A unitary ensemble $\calE_u$ forms a unitary $k$-design with $\epsilon$ relative error if
    \begin{equation}
         (1-\epsilon)\Phi^{(k)}_{\H}\preceq \Phi^{(k)}_{\calE_u}\preceq(1+\epsilon)\Phi^{(k)}_{\H},\label{eq:def_unitary_relative}
    \end{equation}
    where $\Phi \preceq \Phi'$ means that $\Phi'-\Phi$ is a completely-positive map. A bounded relative error ensures that the expectation values of any positive operator in the output states of the two channels differ by at most a multiplicative factor~\cite{Brandao:2012zoj}. Furthermore, it ensures indistinguishability in any quantum experiment involving up to $k$ queries, including {\it adaptive} strategies~\cite{Schuster:2024ajb}.
\end{itemize}

A closely related notion is that of an {\it approximate state $k$-design}, which characterizes how well an ensemble of pure quantum states approximates the statistical properties of Haar random states. Specifically, an ensemble $\mathcal{E}_s$ of pure states is said to form an approximate state $k$-design if its $k$-th moment,
\begin{equation}
    \rho_{\mathcal{E}_s}^{(k)} := \mathbb{E}_{\psi \sim \mathcal{E}_s} |\psi\rangle \langle \psi|^{\otimes k},
\end{equation}
approximates the corresponding moment of the Haar ensemble. Again, the approximation can be quantified either in terms of additive error or relative (multiplicative) error: 
    \begin{align}
        \text{additive error} & \qquad \lVert \rho^{(k)}_{\calE_s} - \rho^{(k)}_{\H} \rVert_1 \leq \epsilon, \label{eq:def_state_additive}\\
        \text{relative error} & \qquad (1-\epsilon)\rho^{(k)}_{\H}\preceq \rho^{(k)}_{\calE_s}\preceq(1+\epsilon)\rho^{(k)}_{\H}, \label{eq:def_state_relative}
    \end{align}
    where, for Hermitian operators, $\rho_1 \preceq \rho_2$ denotes that $\rho_2 - \rho_1$ is positive semidefinite. 
Unlike the case of unitary designs—where only the relative error guarantees operational indistinguishability in adaptive experiments—the additive error for state designs already suffices to ensure indistinguishability even under adaptive protocols (see Appendix~\ref{app:operation_state_additive_error} for further discussion).
Similar to the unitary case, a bounded relative error further ensures that any positive operator in the two states acquires expectation values differ by at most a multiplicative factor.
We further remark that approximate state $k$-designs constitute a weaker notion than unitary $k$-designs and typically require fewer resources to construct. In particular, a state $k$-design can be constructed by applying random unitaries drawn from a unitary $k$-design to a fixed reference state, such as a product state.

Substantial progress has been made in constructing quantum circuits that generate approximate unitary $k$-designs. 
In a seminal work, Ref.~\cite{Brandao:2012zoj} considered unitary circuits composed of local two-qubit Haar random unitary gates and showed that the circuit over $N$ qubits with depth $O(k^{11}N)$ forms an approximate unitary $k$-design up to bounded relative error.
The required circuit depth was later improved in a series of works~\cite{nakata2016efficient,Haferkamp:2022hpm,Hunter-Jones:2019lps,Haferkamp2021PRA,haah2024efficient,chen2024efficient,chen2024efficient2,metger2024simple,Chen:2024wzq} to $O(kN)$ (linear in $k$)~\cite{Chen:2024wzq}.
The design construction is further simplified in a major recent breakthrough~\cite{Schuster:2024ajb,LaRacuente:2024rnh}, which showed that one-dimensional shallow quantum circuits with depth $O(k\log N)$ can achieve bounded relative error.

While the newer result significantly simplifies the theoretical requirements for constructing $k$-designs—particularly in terms of circuit depth—it still relies on the use of generic two-qubit gates, which remain challenging to implement on realistic quantum hardware. 
To this end, one may consider circuit architectures composed primarily of Clifford gates, with only a limited number of non-Clifford (``magic") gate insertions, which can reproduce various aspects of Haar random unitaries~\cite{Zhou_2020,True:2022ypf,Szombathy:2024tow,Magni:2025xbm,Turkeshi:2024pnj}. 
Clifford gates are often regarded as ``free'' operations, as they can be efficiently simulated on classical computers and are naturally compatible with fault-tolerant implementation in stabilizer codes~\cite{eastin2009restrictions,Bravyi:2012rnv}.
Along this line, Ref.~\cite{Haferkamp:2020qel} investigates circuits composed of global random Clifford unitaries interleaved with single-qubit magic gates.
They rigorously show that $O(k^4)$ magic gates are sufficient to generate a unitary $k$-design with bounded additive error, and that $O(kN)$ magic gates suffice to achieve bounded relative error. 
However, global random Clifford unitaries require local quantum circuits of linear depth for their implementation~\cite{Bravyi:2020xdv}, which results in a large overall circuit depth for the full protocol. This naturally raises the question of whether unitary $k$-designs can be realized by shallow circuits that consist predominantly of Clifford gates. 

In this work, we consider \emph{magic-augmented Clifford circuits}, which consist of random Clifford circuits preceded and/or followed by generic, constant-depth quantum circuits containing magic gates. 
We explore the possibility for the output state ensemble to form an approximate state $k$-design or the circuit ensemble to form an approximate unitary $k$-design with both bounded relative and additive errors\footnote{In this work, a ``constant" circuit depth will refer to one which is independent of the system size, but may depend on $k$.}.
A key feature of these circuits is that the magic gates and Clifford gates are not interspersed.  
Since the random Clifford gates are relatively easy to generate~\cite{Bravyi:2020xdv}, we are thus able to show that these circuit architectures can realize $k$-designs at an improved circuit depth.

More explicitly, we establish that the following circuit architectures -- many of which involve low-depth Clifford circuits augmented by constant-depth magic -- can form approximate state and unitary $k$-designs. Our results, along with a comparison to previous works, is also summarized in Tables~\ref{tab:main_results_rel_err} and~\ref{tab:main_results_add_err}). 

\begin{itemize}
\item[]{\bf Relative-error state and unitary designs:}  
The state ensemble generated by two-layer Clifford circuits operating on qubit blocks of size $O(\log(N/\epsilon)+k^2)$ followed by a product of unitary gates drawn from exact $k$-designs acting on clusters of qubits of size $O(k\log k)$ forms a state $k$-design with $\epsilon$ {relative} error for $k < O(\sqrt{N})$ (Sec.~\ref{sec:rel_error_state}).  

Furthermore, the circuit ensemble of two-layer Clifford circuits operating on qubit blocks of size $O(\log(N/\epsilon)+k^2)$, sandwiched by local gates drawn from exact $k$-designs acting on clusters of qubits of size $O(k\log k)$ forms a unitary $k$-design with $\epsilon$ relative error for $k < O(\sqrt{N})$ (Sec.~\ref{sec:rel_error_unitary}).

\item[]{\bf Additive-error state and unitary designs:} The state ensemble generated by Clifford circuits of depth $O(\log(N/\epsilon)+k^2)$ followed by $O(k^{2}+\log(1/\epsilon))$ single-qubit magic gates can form a state $k$-design with $\epsilon$ {additive} error for $k < O(\sqrt{N})$ (Sec.~\ref{sec:state_additive})\footnote{During the completion of our work, we became aware of Ref.~\cite{Leone:2025aes} on arXiv. This paper considered doped Clifford circuits consisting of global Clifford gates interspersed with single-qubit magic gates. Although the required number of magic gates has the same scaling, their circuit depth is linear in $N$, much larger than that in our construction.}.

Furthermore, the circuit ensemble of a unitary gate drawn from $k$-design acting on $O(k^{3}+\log(1/\epsilon))$ qubits, followed by a global random Clifford unitary, forms a unitary $k$-design with $\epsilon$ additive error for $k < N+1$ (Sec.~\ref{sec:unitary_additive}). 

\end{itemize}

A key intermediate result that allows us to prove these results is the ``uniformity condition", which we introduce in Sec.~\ref{sec:uniformity}. 
Roughly speaking, it measures the deviation of an ensemble of Clifford quantum circuits from a global, random Clifford unitary gate. 
This condition can be approximately satisfied by low-depth Clifford circuits and guarantees that such circuits, when augmented with magic gates, can form approximate $k$-designs under various measures.

\begin{table*}
\centering
\begin{tabular}{|c|c|c|}
\hline\hline
& \makecell{State $k$-design \\ w/ $\epsilon$ relative error} & \makecell{Unitary $k$-design \\ w/ $\epsilon$ relative error} \\
\hline
\makecell{Haar \\ random  \\
Ref.~\cite{Schuster:2024ajb}} & \multicolumn{2}{c|}{$d_{\text{1D}}= O(\log\frac{N}{\epsilon}\cdot k\poly\log k)$} \\
\hline
\makecell{Doped \\ Clifford \\
Ref.~\cite{Haferkamp:2020qel}} 
& \multicolumn{2}{c|}{\makecell{$d_{\text{1D}}= O(N^2k+N\log\frac{1}{\epsilon})$ \\\# of magic gates$= O(Nk+\log\frac{1}{\epsilon})$}} \\
\hline
\multirow{2}{*}{\makecell{Magic \\ augmented \\ Clifford \\(this work)}} 
& \multicolumn{2}{c|}{\makecell{$d_{\text{1D}} =  O(\log \frac{N}{\epsilon}) + 2^{O(k\log k)}$ 
\\ $d_{\text{all-to-all}}=O(\log\log\frac{N}{\epsilon}) + 2^{O(k\log k)}$
\\ magic depth$ = 2^{O(k\log k)}$}} \\
\cline{2-3}
& \makecell{strictly local magic\\ \includegraphics[width=6cm]{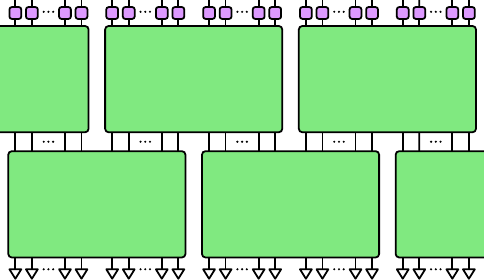}} 
& \makecell{\\ \includegraphics[width=6cm]{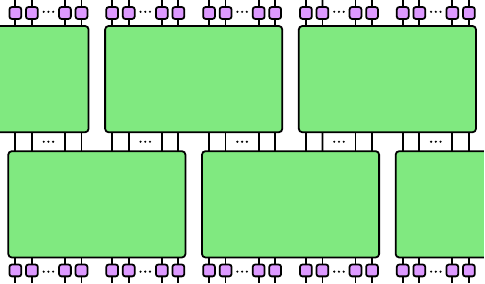}} \\
\hline\hline
\end{tabular}
\caption{{\bf Relative-error designs:} Circuit parameters for the constructions of approximate $k$-designs over $N$ qubits with $\epsilon$ relative error for $k\geq 4$. $d_{\text{1D}}$ is the depth of 1D circuits with geometrically local gates, $d_{\text{all-to-all}}$ is the depth of circuits with all-to-all connectivity and ancillas. Here, magic depth refers to the number of instances in time where magic gates are applied in the circuit. }
\label{tab:main_results_rel_err}
\end{table*}

\begin{table*}
\centering
\begin{tabular}{|c|c|c|}
\hline\hline
& \makecell{State $k$-design \\ w/ $\epsilon$ additive error} & \makecell{Unitary $k$-design \\ w/ $\epsilon$ additive error} \\
\hline
\makecell{Haar \\ random \\
Ref.~\cite{Schuster:2024ajb}} & \multicolumn{2}{c|}{$d_{\text{1D}} = O(\log\frac{N}{\epsilon}\cdot k\poly\log k)$} \\
\hline
\makecell{Doped \\ Clifford \\
Ref.~\cite{Haferkamp:2020qel,Leone:2025aes}} &\makecell{$d_{\text{1D}} =O(N)$~\cite{fux2024disentangling}\\\# of magic gates$= O(k^2+\log\frac{1}{\epsilon})$}  & \makecell{$d_{\text{1D}} = O(N(k^4+k\log\frac{1}{\epsilon}))$\\\# of magic gates$= O(k^4+k\log\frac{1}{\epsilon})$} \\
\hline
\makecell{Magic \\ augmented \\ Clifford \\
(this work)} & \makecell{$d_{\text{1D}}= O(\log \frac{N}{\epsilon}+k^2)$
\\$d_{\text{all-to-all}}=O(\log \log\frac{N}{\epsilon}+\log k)$
\\\# of magic gates$= O(k^2+\log\frac{1}{\epsilon})$\\magic depth $ = 1$ \\ \includegraphics[width=6cm]{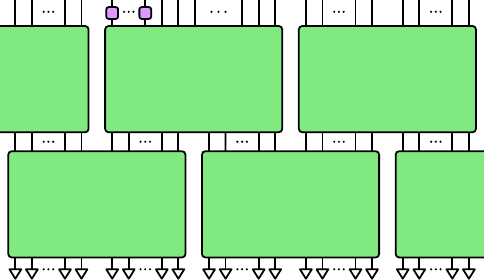}} 
& \makecell{$d_{\text{1D}}=O(N+k\poly\log\frac{k}{\epsilon})$ \\
support of magic gates$=O(k^3+\log\frac{1}{\epsilon})$ 
\\\# of magic gates$=O(k^4\poly\log\frac{k}{\epsilon})$  \vspace{0.2cm}\\ \includegraphics[width=4cm]{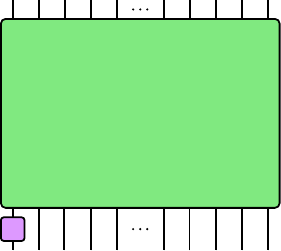}} \\
\hline\hline
\end{tabular}
\caption{{\bf Additive-error designs:} Circuit parameters for the constructions of approximate $k$-designs over $N$ qubits with $\epsilon$ additive error for $k\geq 4$.}
\label{tab:main_results_add_err}
\end{table*}

Our results have a few implications. 
\begin{enumerate}
\item[] {\bf Improved circuit depth for unitary designs:} Our construction of the unitary $k$-design with bounded relative error only requires a 1D local circuit depth of $O(\log(N/\epsilon))+2^{O(k\log k)}$. The construction has a decoupled scaling of depth with $N$ and $k$, which was an open problem\footnote{We acknowledge Thomas Schuster for bringing this point to our attention.
}. The scaling improves the known result of $O(\log(N/\epsilon)\cdot k\poly\log k)$ in Ref.~\cite{Schuster:2024ajb} for small $k\geq 4$. Using circuits with all-to-all connectivity and $O(N(\log (N/\epsilon)+k^2))$ ancillas, our construction requires only depth $O(\log\log(N/\epsilon))+2^{O(k\log k)}$.

\item[] {\bf State designs from local magic:} Our result suggests that an ensemble of states with strictly local magic~\cite{Amy_2013,Zhang:2024jlz,Korbany:2025noe,Wei:2025irp,Andreadakis:2025mfw,Parham:2025sxj} can form an approximate state $k$-design (with bounded relative error). 
Here, a state is said to contain only local magic if one can convert it to a stabilizer state using a constant-depth local unitary.
Furthermore, we provide a shallow-circuit construction for an additive-error state $k$-design with a system-size independent number, $O(k^2+\log(1/\epsilon))$, of single-qubit magic gates. 
This construction with a constant number of magic gates (constant ``magic") can be efficiently generated on a classical computer~\cite{Bravyi_2016} and is easy to learn~\cite{Leone:2023avk,Grewal:2023hzn}.

\item[] {\bf Additive-error stabilizer state designs at low depth:} We demonstrate that Clifford circuits can generate stabilizer state $k$-designs with $\epsilon$ additive error in depth $O(\log(N/\epsilon) + k^{2})$ using 1D circuits and in depth $O(\log\log(N/\epsilon)+\log k)$ using all-to-all circuits with ancillas.

\item[] {\bf Representational power of Clifford-augmented states:} Consider an ensemble of Clifford-augmented states generated by Clifford gates acting on matrix product states with bond dimension $D$ or the output states of local unitary circuits of a finite depth $t$ in any finite dimension. We show that these ensembles cannot form a relative-error state $k$-design with $D = \tilde{o}((N/\epsilon)^{1/4})$ or for any finite $t$.  %
\end{enumerate}

\subsection{Main results}
In this subsection, we introduce the specific circuit architectures used to construct the approximate designs presented in this work and provide a comprehensive summary of our main results. Also see Tables~\ref{tab:main_results_rel_err} and~\ref{tab:main_results_add_err} for comparison with existing literature.

\subsubsection{State designs with bounded relative error}

In Sec.~\ref{sec:rel_error_state}, we provide a construction for a state designs with bounded relative error (defined in Eq.~\eqref{eq:def_state_relative}) using low-depth circuits.
\begin{theorem*}[Relative-error state designs from low-depth Clifford unitaries followed by constant depth magic gates]
Consider a state ensemble $\calE = \{UV\ket{0}^{\otimes N}\}$ generated by the circuit architecture below
\begin{equation*}
\includegraphics[width=10.8cm]{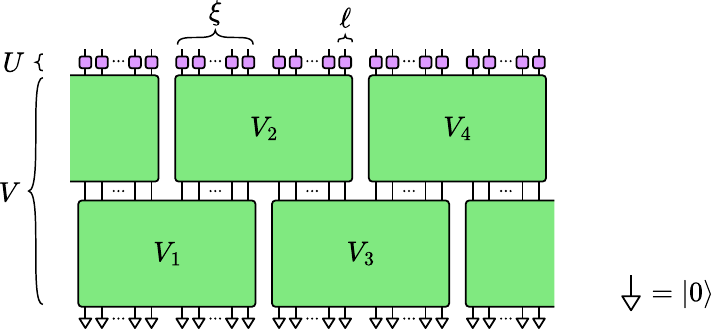}
\end{equation*}
Here, $U$ is a product of random unitaries drawn from exact unitary $k$-designs acting on disjoint qubit clusters of size $\ell$. 
The two-layer Clifford unitary $V$ consists of Clifford gates acting on $2\xi$ qubits. 
Each Clifford gate is drawn independently from an exact Clifford $k$-design.  
In an $N$-qubit system with $N=\omega(k^2+\log(1/\epsilon))$, the ensemble $\calE$ can form a state $k$-design with $\epsilon$ relative error for $\xi=O(\log(N/\epsilon)+k^2)$ and $\ell=O(k\log k)$.
\end{theorem*}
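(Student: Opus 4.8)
The plan is to prove the statement in two steps. First, I would invoke the uniformity condition of Sec.~\ref{sec:uniformity} to replace the low-depth Clifford layer $V$ by a single global Haar-random Clifford unitary, at the cost of $\epsilon/2$ relative error. Second, I would analyze the resulting idealized ensemble $\{UC\ket{0}^{\otimes N}\}$, with $C$ a global random Clifford and $U$ the magic layer, and show that its $k$-th moment is $\epsilon/2$-relative close to the Haar moment $\rho^{(k)}_{\H}$.

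\textbf{Step 1: reduction via uniformity.} Because the magic layer acts after $V$ and independently of it, $\rho^{(k)}_{\calE}=\Phi_U\!\big(\rho^{(k)}_{V}\big)$, where $\rho^{(k)}_{V}=\Ens_V(V\ket0\bra0V^\dagger)^{\otimes k}$ and $\Phi_U(X)=\Ens_U U^{\otimes k}XU^{\dagger\otimes k}$. Writing $\rho^{(k)}_{\STAB}$ for the $k$-th moment of the uniform ensemble of $N$-qubit stabilizer states (equivalently, of a global random Clifford applied to $\ket0^{\otimes N}$), the uniformity condition applied to the two-layer architecture with block size $2\xi$ should give, for $\xi=O(\log(N/\epsilon)+k^2)$, the relative-error sandwich $(1-\tfrac{\epsilon}{2})\rho^{(k)}_{\STAB}\preceq\rho^{(k)}_{V}\preceq(1+\tfrac{\epsilon}{2})\rho^{(k)}_{\STAB}$. (Two layers suffice because the Clifford commutant is spanned by operators that act identically on every qubit, so the depth-two brick wall already forms a connected network of qubit blocks forcing their labels to agree; in the statistical-mechanics picture this is a chain whose domain-wall weight is $2^{-\Omega(\xi)}$, and requiring the total domain-wall free energy $\sim (N/\xi)\cdot 2^{O(k^2)}\cdot 2^{-\Omega(\xi)}$ to be $\le\epsilon$ fixes $\xi=O(\log(N/\epsilon)+k^2)$.) Since $\Phi_U$ is positive and linear, it preserves the sandwich, so it remains to show $(1-\tfrac{\epsilon}{2})\rho^{(k)}_{\H}\preceq\Phi_U\!\big(\rho^{(k)}_{\STAB}\big)\preceq(1+\tfrac{\epsilon}{2})\rho^{(k)}_{\H}$; chaining the two sandwiches and absorbing the $O(\epsilon^2)$ cross term yields the claimed $\epsilon$ relative error.

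\textbf{Step 2: magic upgrades the stabilizer moment.} Here I would expand $\rho^{(k)}_{\STAB}$ over the Clifford commutant via the Gross--Nezami--Walter classification, $\rho^{(k)}_{\STAB}=\sum_{T}\alpha_T\,r(T)^{\otimes N}$ with $T$ running over stochastic Lagrangian subspaces (the same subspace on every qubit). The permutation-type $T$ (graphs of $\sigma\in S_k$) recombine into $\rho^{(k)}_{\H}$, which $\Phi_U$ fixes because global copy-permutation operators commute with $U^{\otimes k}$. For each non-permutation $T$ the crucial input is that on a single qubit the overlap $\operatorname{Tr}[w_\pi^\dagger r(T)]$ of $r(T)$ with the copy-permutation operator $w_\pi$ is at most $2^{k-1}$ for every $\pi\in S_k$, strictly below the diagonal value $\operatorname{Tr}[w_\pi^\dagger w_\pi]=2^k$; hence on a magic cluster of $\ell$ qubits the cluster twirl sends $r(T)^{\otimes\ell}$ into the permutation span with all coefficients damped by $2^{-\Omega(\ell)}$ relative to the permutation weights, so that $\Phi_U\!\big(r(T)^{\otimes N}\big)$ acquires a factor $2^{-\Omega(N)}$; combined with the (volume-law) entanglement structure of a global random stabilizer state that further suppresses misaligned cluster-permutation configurations, the total non-permutation remainder is controlled once $\ell$ is large enough that $2^{\ell}$ beats the $S_k$ multiplicity $k!=2^{O(k\log k)}$, i.e. $\ell=O(k\log k)$. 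Since both $\rho^{(k)}_{\calE}$ and $\rho^{(k)}_{\H}$ are supported on the symmetric subspace, on which $\rho^{(k)}_{\H}$ equals $1/\binom{2^N+k-1}{k}$ times the symmetric projector, it then suffices to bound the operator norm of the remainder $\Phi_U\!\big(\rho^{(k)}_{\STAB}\big)-\rho^{(k)}_{\H}$ on that subspace by $\tfrac{\epsilon}{2}/\binom{2^N+k-1}{k}$; the damping estimate supplies this, and both sides of the sandwich follow.

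\textbf{Main obstacle.} I expect the crux to be the quantitative control in Step 2: showing that a local Haar twirl on clusters of size only $O(k\log k)$ damps every non-permutation sector of $\rho^{(k)}_{\STAB}$ enough to beat the (exponential-in-$N$) combinatorial multiplicity of cluster-permutation configurations while reproducing the permutation sector essentially exactly — the term-by-term bound is too lossy and one must exploit cancellation, e.g.\ through the spectrum of the associated transfer matrix. Converting these estimates into genuine positive-semidefinite operator inequalities on the symmetric subspace (not merely diamond- or trace-norm bounds) while tracking the normalization of $\rho^{(k)}_{\STAB}$ — whose top eigenvalue exceeds that of $\rho^{(k)}_{\H}$ by a factor growing with $N$, which is exactly why stabilizer states alone fail to be a relative-error $k$-design — is the delicate part. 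A secondary point is making sure the uniformity input of Step 1 is formulated with relative rather than merely additive error, so that composing it through the positive map $\Phi_U$ is legitimate.
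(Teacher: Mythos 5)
Your Step 2 is essentially the right analysis of the idealized ensemble (permutation sector reproduces the Haar moment exactly; non-permutation sectors are damped per cluster because the single-qubit overlap of a non-permutation $T$ with any $\tau\in S_k$ satisfies $|\tau,T|\geq 1$, and $\ell=O(k\log k)$ is needed so that the per-cluster damping $2^{-\ell}$ beats the Weingarten factor $k!$). This matches how the paper bounds $\infnormsm{\rho_{\rmII}^{(k)}}$ against the flat spectrum of $\rho_{\H}^{(k)}$ on the symmetric subspace.

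The genuine gap is Step 1. You propose to first establish the PSD sandwich $(1-\tfrac{\epsilon}{2})\rho^{(k)}_{\C}\preceq\rho^{(k)}_{V}\preceq(1+\tfrac{\epsilon}{2})\rho^{(k)}_{\C}$ from the uniformity condition and then push it through the positive map $\Phi_U$. But the uniformity condition is only an $\ell_1$ bound on the coefficients, $\sum_{\vec T}|f(\vec T)|\leq\epsilon'$; it does not control $\infnormsm{\sum_{\vec T}f(\vec T)r(\vec T)}$ relative to $\rho_{\C}^{(k)}$, because the operators $r(\vec T)$ have spectral norms as large as $2^{Nk/2}$ while $\rho_{\C}^{(k)}$ has a highly non-flat spectrum (this is exactly why random stabilizer states fail to be relative-error $k$-designs, and the paper explicitly remarks in Sec.~\ref{sec:stat_mech_uniformity} that uniformity does not imply a relative-error Clifford design). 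So the intermediate sandwich you need is unproven and very plausibly false; you flag it as a "secondary point," but it is the central obstruction to your architecture. The paper avoids it by never comparing to $\rho_{\C}^{(k)}$ in the PSD order: it carries the uniformity decomposition in coefficient form \emph{through} the magic twirl, observes that after the cluster twirl every configuration $\vec T$ contributes at most $2^{-Nk}|f(\vec T)|$ in spectral norm (the per-cluster factor $\lVert\sum_{\sigma,\tau}r(\sigma)\Wg(\sigma,\tau)\bbrakket{\tau}{T_x}\rVert_\infty\leq 1$ for $\ell\gtrsim k\log k$), and only then converts $\sum_{\vec T}|f(\vec T)|\le\epsilon'$ into $\infnormsm{\rho_{\rmIII}^{(k)}}\leq 2^{-Nk}\epsilon'$, which can be compared against $\infnormsm{\rho_{\H}^{(k)}}\approx k!\,2^{-Nk}$. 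The magic layer is thus not merely "upgrading the stabilizer moment"; it is what makes the $\ell_1$ coefficient bound usable as a relative-error bound at all.
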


In Sec.~\ref{sec:rel_error_nogo}, we prove no-go theorems for state $k$-design with bounded relative error. 
In particular, we show that a state ensemble generated by Clifford gates acting on matrix product states with a low bond dimension or general short-range entangled states prepared by finite-depth local unitaries in any dimension cannot form a state $k$-design with bounded relative error for $k \geq 4$.
The key observation is that the stabilizer R\'enyi entropy of these states is not close enough to that of Haar random states as required by a small relative error.
\begin{theorem*}
Consider an ensemble of states $\calE = \{V\ket{\psi}\}$, where $V$ is a Clifford unitary.
\begin{itemize}
    \item For $\ket{\psi}$ being matrix product states with bond dimension $D$, the state ensemble cannot form a state $k$-design for $k\geq 4$ with $\epsilon$ relative error for $D=\tilde{o}((N/\epsilon)^{1/4})$.
    \item For $\ket{\psi}$ being states prepared by $d$-dimensional local unitary circuits of depth $t$, the ensemble cannot form a state $k$-design with $\epsilon$ relative error for $t = \tilde{o}((\log(N/\epsilon))^{1/d})$.
\end{itemize}
\end{theorem*}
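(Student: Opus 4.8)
The plan is to prove both items by showing that the stabilizer Rényi entropy (or an equivalent magic monotone) of the relevant states $V\ket{\psi}$ is bounded away from the Haar value by an amount incompatible with the relative-error condition \eqref{eq:def_state_relative} for $k \geq 4$. The first step is to recall that $\eqref{eq:def_state_relative}$ forces the $k$-th moment $\rho^{(k)}_{\calE}$ to be $\epsilon$-close (in the appropriate sense) to $\rho^{(k)}_{\H}$, and that all quantities extractable from $\rho^{(k)}_{\H}$ — in particular linear combinations of permutation-operator overlaps — are reproduced by $\calE$ up to a factor $1\pm\epsilon$. For $k=4$ this includes the quantity controlling the average stabilizer $2$-Rényi entropy, which for Haar-random states is $\Theta(N)$ (more precisely $N - O(1)$), so we need the ensemble average of the corresponding purity-of-Pauli-distribution functional to match the Haar value to within a multiplicative $1\pm\epsilon$. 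Since $V$ is Clifford and magic is a Clifford invariant, it suffices to lower-bound the magic (equivalently upper-bound the relevant $4$-th moment functional) of $\ket{\psi}$ alone.

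The second step is the state-complexity input for each of the two cases. For matrix product states of bond dimension $D$, the number of distinct Pauli expectation values, or more directly the flatness of the characteristic (Pauli-weight) distribution, is constrained: one shows the $4$-th moment functional $\mathbb{E}\,\mathrm{Tr}[(\rho^{(4)})\,\Pi]$ for the relevant permutation $\Pi$ deviates from its Haar value by a factor that grows with $N/D^{2}$ (heuristically, an MPS of bond dimension $D$ "looks like" it lives in an effective Hilbert space whose magic can be at most $O(\log D)$ per cut, so the total stabilizer Rényi entropy is $O(\mathrm{poly}(D)\cdot\log N)$ or similar — the precise bound is what yields $D = \tilde o(\sqrt{N/\epsilon})$). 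For finite-depth local circuits in $d$ dimensions, one uses a light-cone/locality argument: the magic of the output state is concentrated in $O(N)$ overlapping regions of size $O(t^{d})$, and the stabilizer Rényi entropy — or the relevant moment — is controlled by summing local contributions, giving a value that can differ from the Haar value $\Theta(N)$ only if $t^{d} \gtrsim \log(N/\epsilon)$, i.e. $t = \tilde o((\log(N/\epsilon))^{1/d})$ is incompatible with a relative error $\epsilon$. In both cases the chain is: small relative error $\Rightarrow$ magic of the ensemble matches Haar up to $1\pm\epsilon$ $\Rightarrow$ contradiction with the structural bound on magic of $\ket{\psi}$.

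The third step is to assemble the bookkeeping: translate the $1\pm\epsilon$ operator inequality on $\rho^{(k)}$ into a two-sided bound on the scalar $\mathrm{Tr}[\rho^{(k)} \Pi]$ for the specific permutation $\Pi$ (an element of $S_k$, or a signed/phase-weighted version adapted to stabilizer Rényi entropy), check that this scalar equals a monotone function of the average magic, and verify that the Haar value of this scalar is $1 + \Theta(1)$ or scales with $N$ in the way claimed, so that the structural upper bound on $\ket\psi$'s magic produces a violation once $D$ (resp. $t$) is below the stated threshold. One should also note $k \geq 4$ is essential because $k \leq 3$ moments of Haar-random and stabilizer ensembles coincide (stabilizer states form a $3$-design), so no magic-based obstruction exists below $k = 4$.

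The main obstacle I expect is the first bullet: getting a clean, tight relation between bond dimension $D$ and the achievable stabilizer Rényi entropy (or the $4$-th moment functional) of an MPS, with the right $\sqrt{\,\cdot\,}$ power. The naive bound "magic $\leq$ something like $\mathrm{rank}$-based entropy" is too weak; one likely needs to exploit that the Pauli spectrum of an MPS is supported on a low-dimensional structure, e.g. by bounding the number of Paulis with non-negligible expectation value via a transfer-matrix/\,Schmidt-rank argument across every cut simultaneously, and then feeding that into the definition of $M_2$. The finite-depth case is more routine once the MPS case is done, since a depth-$t$ circuit in $d$ dimensions gives effectively an MPS (in 1D) of bond dimension $2^{O(t)}$, and in higher $d$ a direct light-cone decomposition of the magic works; the $\tilde o$ simply absorbs $\mathrm{poly}\log$ factors from converting the additive $\Theta(N)$ gap into a relative one.
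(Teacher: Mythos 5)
Your overall skeleton matches the paper's: both arguments test the ensemble against the positive Clifford-commutant observable $r(T_S)=2^{-N}\sum_{P}P^{\otimes 4}$ (the Pauli fourth moment, whose negative logarithm is the stabilizer R\'enyi entropy $\calM_2$), use its invariance under the Clifford layer $V$ to reduce everything to the pre-Clifford state $\ket{\psi}$, and use the fact that a relative-error $k$-design with $k\ge 4$ is, after partial trace, a relative-error $4$-design, so that $\Tr[r(T_S)\rho^{(4)}_\calE]$ must lie within a factor $1+O(\epsilon)$ of the Haar value $4/(2^N+3)$. (One sentence of your first step has the direction reversed --- to contradict Haar's near-maximal magic you must \emph{upper}-bound the magic, i.e.\ \emph{lower}-bound the fourth-moment functional --- though your later steps use the correct direction. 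Also, the multiplicative control from Eq.~\eqref{eq:def_state_relative} applies to \emph{positive} observables such as $r(T_S)$, not to arbitrary linear combinations of permutation overlaps.)

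The genuine gap is the quantitative step connecting $D$ (resp.\ $t$) to the fourth moment, which you yourself flag as the unresolved ``main obstacle,'' and for which your proposed route would not work. The paper's mechanism is elementary: partition the system into $\Theta(N/N_i)$ disjoint regions $M_i$, use the identity $\sum_{P_i \text{ on } M_i}\abs{\bra{\psi}P_i\ket{\psi}}^2 = 2^{N_i}\Tr(\rho_{M_i}^2)=2^{N_i-\calS_i^{(2)}}$, and apply Cauchy--Schwarz to the induced distribution over the local Paulis to obtain $\sum_{P_i\neq I}\abs{\bra{\psi}P_i\ket{\psi}}^4 \geq 2^{-2\calS^{(2)}_i}(1-2^{\calS^{(2)}_i-N_i})$. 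Choosing $N_i=\lceil\log D+1\rceil$, so that bond dimension $D$ forces $N_i-\calS_i^{(2)}\ge 1$, yields a contribution $\gtrsim 1/D^2$ per region and $\gtrsim N/(D^2\log D)$ in total, sitting on top of the Haar baseline $\approx 4\cdot 2^{-N}$; this excess must be $O(\epsilon)$, whence $D=\tilde{\Omega}(\sqrt{N/\epsilon})$. The finite-depth case is the same argument with the area-law bound $\calS_i^{(2)}\le 4dl^{d-1}t$ on cubes of side $l=\lceil 4dt+1\rceil$. By contrast, your structural intuition for the MPS case --- that the total stabilizer R\'enyi entropy of a bond-dimension-$D$ MPS is $O(\mathrm{poly}(D)\log N)$, or that its Pauli spectrum is supported on a low-dimensional set --- is false (generic MPS have extensive magic); the paper only establishes a deficit $\calM_2 \le N - O(\log N)$ from the maximum, and it is this logarithmic deficit, amplified by the exponentially small Haar value of the fourth moment, that violates the multiplicative error bound. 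Without the local-purity/Cauchy--Schwarz lemma your argument does not close.
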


\subsubsection{Unitary designs with bounded relative error}

In Sec.~\ref{sec:rel_error_unitary}, we construct unitary designs with bounded relative error (defined in Eq.~\eqref{eq:def_unitary_relative}) using low-depth circuits.
\begin{theorem*}[Relative-error unitary designs from low-depth Clifford unitaries sandwiched by constant depth magic gates]
Consider an ensemble $\calE = \{U V U'\}$ of unitary circuits given by the architecture below
\begin{equation*}
\includegraphics[width=8.2cm]{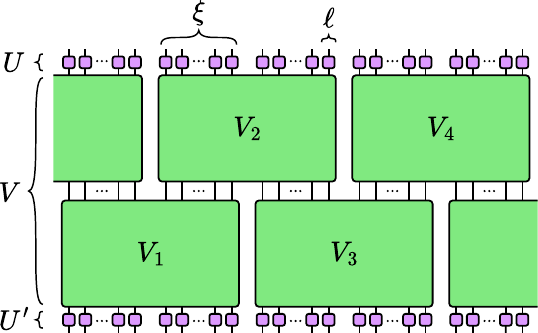}
\end{equation*}
where $U$ and $U'$ are a product of random unitaries drawn from exact $k$-design acting on disjoint qubit clusters of size $\ell$, $V$ is a two-layer Clifford unitary.
Each gate in the two-layer Clifford unitary acts on $2\xi$ qubits and is drawn independently from an exact Clifford $k$-design. 
In an $N$-qubit system with $N=\omega(k^2+\log(1/\epsilon))$, the ensemble $\calE$ can form a unitary $k$-design with $\epsilon$ relative error for $\xi = O(\log(N/\epsilon) + k^2)$ and $\ell = O(k\log k)$.
\end{theorem*}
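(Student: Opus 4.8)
The plan is to work entirely with the $k$-fold twirling channels. Since $U$, $V$, $U'$ are sampled independently, $\Phi^{(k)}_{\calE}=\Phi^{(k)}_{U}\circ\Phi^{(k)}_{V}\circ\Phi^{(k)}_{U'}$; because $U,U'$ are drawn from exact unitary $k$-designs on the clusters and each gate of $V$ from an exact Clifford $k$-design, $\Phi^{(k)}_{U}$ and $\Phi^{(k)}_{U'}$ are \emph{exactly} products of cluster-local Haar twirls, and $\Phi^{(k)}_{V}$ depends only on the (exact) Clifford distribution of the two-layer architecture. The target $\Phi^{(k)}_{\H}$ is the projector onto the span of permutation operators $r(\pi)$, $\pi\in S_k$. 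A first observation is that $\Phi^{(k)}_{\calE}$ already acts as the identity on this span (permutation operators commute with $W^{\otimes k}$ for every unitary $W$), so the relative-error bound \eqref{eq:def_unitary_relative} reduces to showing (i) the range of the associated moment operator is \emph{exactly} contained in the permutation span, and (ii) on that span the moment operator is within a $(1\pm\epsilon)$ window of the Haar value; only (i) and the upper half of (ii) are nontrivial.

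The argument rests on two pillars: (A) a uniformity estimate for $V$, and (B) a ``magic upgrade'' identity. For (A) I invoke the uniformity condition of Section~\ref{sec:uniformity}: the two-layer brick-wall of width-$2\xi$ random Clifford gates satisfies it with parameter $O(\epsilon)$ once $\xi$ is a large enough multiple of $\log(N/\epsilon)+k^{2}$, so that $\Phi^{(k)}_{V}=\Phi^{(k)}_{\mathrm{Cl}(N)}+\Delta$ with $\Delta$ controlled in the one-sided (completely positive) sense supplied by that condition; heuristically the $\log(N/\epsilon)$ term comes from a union bound over $O(N)$ sites of an event of probability $2^{-\Omega(\xi)}$, and the $k^{2}$ term from beating the $2^{O(k^{2})}$-dimensional Clifford commutant term by term. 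For (B) I prove that conjugating the \emph{global} Clifford twirl by cluster-local exact $k$-design twirls on both sides collapses it exactly to the Haar twirl, $\Phi^{(k)}_{U}\circ\Phi^{(k)}_{\mathrm{Cl}(N)}\circ\Phi^{(k)}_{U'}=\Phi^{(k)}_{\H}$. Combining the two, $\Phi^{(k)}_{\calE}=\Phi^{(k)}_{\H}+\Phi^{(k)}_{U}\circ\Delta\circ\Phi^{(k)}_{U'}$, and the one-sided control on $\Delta$ together with the containment established in (B) is what must then be pushed to $(1-\epsilon)\Phi^{(k)}_{\H}\preceq\Phi^{(k)}_{\calE}\preceq(1+\epsilon)\Phi^{(k)}_{\H}$.

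For step (B) I would use the Gross--Nezami--Walter description of the Clifford commutant: for $k\geq 4$ and $N=\omega(k^{2}+\log(1/\epsilon))$, well above the $N\geq k-1$ threshold where the structure stabilizes, it is spanned by operators $r(T)=r_{1}(T)^{\otimes N}$ indexed by stochastic Lagrangian subspaces $T$, those with $T$ not the graph of a permutation spanning a complement of the permutation commutant. For such $T$, the single-qubit matrix $r_{1}(T)$ has strictly smaller overlap with every single-qubit permutation operator than a permutation does, so the ratio is bounded by a constant $<1$; conjugating $r(T)$ by a Haar twirl on a cluster of $\ell$ qubits then produces coefficients that are this ratio raised to the power $\ell$, and since only $\le k!=2^{O(k\log k)}$ permutations are relevant, $\ell=O(k\log k)$ suffices to drive the non-permutation contributions out of the cluster-twirled image; applying the twirl on the other side as well leaves only the global permutation span, which is $\Phi^{(k)}_{\H}$. (If exact annihilation right at this threshold is too optimistic, the fallback is to let $\ell$ pick up an additive $O(\log(1/\epsilon'))$ term and absorb the resulting residual — which is still range-contained in the permutation span — into $\epsilon$.)

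I expect the main obstacle to be upgrading from an \emph{additive} (operator-norm) statement to the \emph{relative} statement \eqref{eq:def_unitary_relative}. Norm-closeness of $\Phi^{(k)}_{\calE}$ to $\Phi^{(k)}_{\H}$ is not enough: the relative bound demands that $\Phi^{(k)}_{\calE}$ have \emph{no} component outside the permutation span, and that its component inside lie in a $(1\pm\epsilon)$ window. This forces the uniformity condition to be formulated not as a norm bound but as a two-sided completely-positive domination, and requires showing that the magic sandwich sends the \emph{entire} error $\Phi^{(k)}_{U}\circ\Delta\circ\Phi^{(k)}_{U'}$ — not just its leading part — cleanly into the permutation span while shrinking it by the $\epsilon$-dependent factor; reconciling ``approximate $V$'' with ``exact but only cluster-local magic'' to get this clean semidefinite containment is the technical heart. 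A secondary point is verifying that $N=\omega(k^{2}+\log(1/\epsilon))$ simultaneously secures the generic commutant structure, room for the width-$2\xi$ gates, and the decay of the union-bound terms.
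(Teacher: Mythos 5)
Your high-level picture --- a uniformity statement for the two-layer Clifford circuit plus the observation that sandwiching by cluster-local $k$-design twirls suppresses the non-permutation elements of the Clifford commutant --- is the same mechanism the paper uses. However, the proposal has genuine gaps at both of its load-bearing steps. First, your ``magic upgrade'' identity (B), $\Phi^{(k)}_{U}\circ\Phi^{(k)}_{\mathrm{Cl}(N)}\circ\Phi^{(k)}_{U'}=\Phi^{(k)}_{\H}$, is false as an exact statement: for $T\notin S_k$ the cluster twirl does not annihilate $r(T)$ but maps it to $\sum_{\vec\sigma,\vec\tau}\Wg(\vec\sigma,\vec\tau)\,r(\vec\sigma)\,\bbrakket{\vec\tau}{T}$, whose coefficients are suppressed by $2^{-N}$ per side but multiplied by a combinatorial factor of order $\left[k!(1+k^2 2^{-\ell})\right]^{N/\ell}2^{O(k^2)}$ coming from the sum over cluster-wise permutation configurations and Weingarten functions. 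The entire content of the conditions $\ell=O(k\log k)$ and $N=\omega(k^2+\log(1/\epsilon))$ is that $2^{-2N}$ beats this prefactor; your argument never confronts this competition, and your fallback of adding $\log(1/\epsilon')$ to $\ell$ both weakens the stated theorem and is not carried out. Relatedly, the uniformity condition is an $\ell_1$ bound on the coefficients $f(\vec T_1,\vec T_2)$, not the ``one-sided completely positive domination'' you assume for $\Delta$.

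Second, you identify the passage from a norm bound to the relative-error statement as the unresolved ``technical heart,'' but you have also mischaracterized what it requires. One does not need the Choi state to be supported exactly in the span of permutation operators. The standard reduction (used in the paper, following Ref.~\cite{Schuster:2024ajb}) is: the Choi state $\varrho^{(k)}_{\calE}=[\Phi^{(k)}_{\calE}\otimes\id](P_{\EPR})$ of \emph{any} unitary ensemble is automatically supported in the subspace invariant under the diagonal $S_k$ action, on which $\varrho_0^{(k)}=2^{-2Nk}\sum_{\pi}r(\pi)\otimes r(\pi)$ equals $k!\,2^{-2Nk}$ times the projector; hence $\infnormsm{\varrho^{(k)}_{\calE}-\varrho_0^{(k)}}\le\delta$ immediately yields relative error $\lesssim 2^{2Nk}\delta/k!$ up to $k^2 2^{-N}$ corrections. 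With that reduction in hand, the proof is a three-part decomposition of $\varrho^{(k)}_{\calE}$ (aligned permutations, which reproduce $\varrho_0^{(k)}$ exactly; aligned non-permutations, controlled by the suppression discussed above; non-uniform configurations, controlled by the uniformity error $N2^{-\xi-\log\xi+O(k^2)}$). As written, your proposal neither establishes the reduction nor supplies the quantitative bounds, so it does not constitute a proof.
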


The circuit depth of our construction follows from the results for constructing random Clifford and Haar random unitary gates~\cite{Bravyi:2020xdv,Knill:1995kz}:
\begin{corollary*}
In an $N$-qubit system with $N=\omega(k^2+\log(1/\epsilon))$, one can realize approximate unitary and state $k$-designs with $\epsilon$ relative error in depth 
\begin{itemize}
\item $O(\log (N/\epsilon)) + 2^{O(k\log k)}$ using 1D circuits.
\item $O(\log\log (N/\epsilon)) +2^{O(k\log k)}$ using all-to-all circuits with $O(N(\log (N/\epsilon)+k^2))$ ancillas.
\end{itemize}
\end{corollary*}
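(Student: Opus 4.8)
The plan is to read the claimed depths directly off the two preceding theorems. Each of those realizes the design by a circuit of the form $UV$ (for states) or $UVU'$ (for unitaries) built from only a constant number of layers: the magic layers $U,U'$, each a product of gates drawn from exact unitary $k$-designs acting on \emph{disjoint} clusters of $\ell=O(k\log k)$ qubits, and the Clifford unitary $V$, which is two layers of gates acting on $2\xi$ contiguous qubits with $\xi=O(\log(N/\epsilon)+k^2)$, each drawn from an exact Clifford $k$-design. So I would bound the depth contributed by each layer separately and add, the total being a sum of $O(1)$ terms.

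For the magic layers I would instantiate the exact $k$-design on each $\ell$-qubit cluster by the $\ell$-qubit Haar measure, which is an exact $k$-design for every $k$. By the standard synthesis of an arbitrary unitary on a $2^\ell$-dimensional space~\cite{Knill:1995kz}, such a gate is realizable by a circuit of depth $2^{O(\ell)}=2^{O(k\log k)}$, with a fixed gate layout whose continuous parameters depend on the sampled unitary (on a line there is at most an extra $O(\ell)$ routing overhead, still $2^{O(k\log k)}$). Since the clusters are disjoint, all cluster gates run in parallel, so each magic layer — and hence the total magic depth — is $2^{O(k\log k)}$.

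For the Clifford layers I would instantiate each gate by a uniformly random $n$-qubit Clifford with $n:=2\xi$, and compile it using known constructions for Clifford unitaries~\cite{Bravyi:2020xdv}, which give depth $O(n)=O(\xi)$ with nearest-neighbor gates on a line and depth $O(\log n)=O(\log\xi)$ with $O(n^2)=O(\xi^2)$ ancillas under all-to-all connectivity. Each of the two Clifford layers consists of $O(N/\xi)$ gates on disjoint blocks run in parallel, so in the all-to-all case the total ancilla count is $O\big((N/\xi)\cdot\xi^2\big)=O(N\xi)=O\big(N(\log(N/\epsilon)+k^2)\big)$, matching the statement. Adding the $O(1)$ layer depths then yields, in 1D, $2^{O(k\log k)}+O(\log(N/\epsilon)+k^2)=O(\log(N/\epsilon))+2^{O(k\log k)}$ (using $k^2\le 2^{O(k\log k)}$), and on all-to-all architectures $2^{O(k\log k)}+O(\log\log(N/\epsilon)+\log k)=O(\log\log(N/\epsilon))+2^{O(k\log k)}$.

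I expect the only content beyond bookkeeping to be confirming that (i) the generic-unitary synthesis on $\ell=O(k\log k)$ qubits fits in depth $2^{O(k\log k)}$ — which holds since the gate count is $O(4^\ell)=2^{O(k\log k)}$ — and (ii) the exact unitary and Clifford $k$-designs required by the theorems can actually be sampled and compiled with a continuous gate set, for which taking the cluster-wise Haar measure and the uniform Clifford measure respectively suffices. The genuine ``obstacle'', such as it is, is verifying that the architectures really involve only a constant number of layers so that the additive combination of the depth bounds above is legitimate; once that is checked against the theorem statements, the corollary follows.
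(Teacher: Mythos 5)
Your proposal is correct and follows essentially the same route as the paper: bound each of the constantly many layers separately, using the $O(\xi)$-depth (1D) and $O(\log\xi)$-depth-with-$O(\xi^2)$-ancillas (all-to-all) compilations of Clifford unitaries together with the $O(\ell 4^\ell)$-gate synthesis of a generic $\ell$-qubit unitary, then add. The only cosmetic difference is attribution — the paper credits the all-to-all Clifford compilation to Jiang et al.\ rather than Bravyi--Maslov — which does not affect the argument.
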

The depth of our construction exhibits a decoupled scaling of $k$ and $N$.
For 1D circuits with geometrically local gates, our result improves the previous result  $O(\log(N/\epsilon)\cdot k\poly\log k)$ for small $k\geq 4$ obtained in 1D shallow Haar random circuits~\cite{Schuster:2024ajb}. 
For circuits with all-to-all connectivity and ancillas, the $O(\log\log (N/\epsilon)) +2^{O(k\log k)}$ scaling was not known previously for $k\geq 4$.

Having presented our construction for unitary designs with bounded relative error, we prove a no-go theorem in Sec.~\ref{sec:rel_error_nogo}. 
We show that finite-depth local unitaries sandwiched by Clifford gates cannot form relative-error unitary $k$-designs with $k \geq 4$.
\begin{theorem*}
Let $V_1$ and $V_2$ be Clifford unitaries, and $U$ be d-dimensional local circuits with depth bounded by $t$. Then, the unitary ensemble $\{V_1 U V_2\}$ does not form a unitary $k$-design for $k\geq 4$ with $\epsilon$ relative error for $t=\tilde{o}((\log(N/\epsilon)^{1/d})$.
\end{theorem*}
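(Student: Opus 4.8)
The plan is to reduce this statement to the no-go theorem for relative-error \emph{state} designs (Sec.~\ref{sec:rel_error_nogo}) by passing to Choi states and exploiting that the stabilizer R\'enyi entropy (magic) is invariant under Clifford unitaries. Write $W = V_1 U V_2$, and note that a relative-error unitary $k$-design is also a relative-error unitary $4$-design for $k\ge4$ (restrict to four of the copies, a completely positive operation), so it suffices to rule out $k=4$. Let $\kket{W} := (W\otimes\id)\kket{\Omega}$ be the normalized Choi state on $2N$ qubits, with $\kket{\Omega}$ the maximally entangled state, a product of $N$ Bell pairs. Using $(A\otimes\id)\kket{\Omega} = (\id\otimes A^{T})\kket{\Omega}$ together with the fact that the transpose of a Clifford is Clifford, one finds $\kket{W} = (V_1\otimes V_2^{T})(U\otimes\id)\kket{\Omega}$ --- a Clifford unitary on $2N$ qubits applied to $\kket{\phi_U} := (U\otimes\id)\kket{\Omega}$. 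The structural point is that $\kket{\phi_U}$ is the output of a geometrically local circuit of depth $t + O(1)$ in $d$ dimensions on $2N$ qubits: embed the ``output'' and ``reference'' lattices into a single $d$-dimensional lattice by placing each reference qubit adjacent to its output partner, so that $\kket{\Omega}$ is prepared from $\ket{0}^{\otimes 2N}$ by a depth-$2$ local Clifford circuit and $U\otimes\id$ acts on the output sublattice with the same $d$-dimensional locality and depth $\le t$. Hence every Choi state of the ensemble lies in precisely the class of states handled by the state no-go theorem, with $N\mapsto 2N$ and $t\mapsto t+O(1)$.

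I would then extract the consequence of the unitary-design hypothesis in a form that matches the state-design argument. Let $\Pi := 4^{-2N}\sum_{P\in\calP_{2N}} P^{\otimes 4}$, where $\calP_{2N}$ denotes the Pauli operators on $2N$ qubits modulo phase; a short computation with the Pauli multiplication rules shows $\Pi$ is a Hermitian idempotent, hence a positive semidefinite orthogonal projector. For a $2N$-qubit pure state $\ket{\chi}$ set $Z(\chi) := \sum_{P\in\calP_{2N}}\langle\chi|P|\chi\rangle^{4}$, so that $\operatorname{Tr}[\Pi\,(|\chi\rangle\langle\chi|)^{\otimes 4}] = 4^{-2N} Z(\chi)$ and the stabilizer $2$-R\'enyi entropy is $M_2(\chi) = 2N - \log_2 Z(\chi)$. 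If $\calE = \{W\}$ were a relative-error $4$-design, then $(1+\epsilon)\Phi^{(4)}_{\H} - \Phi^{(4)}_{\calE}$ is completely positive (Eq.~\eqref{eq:def_unitary_relative}); applying it, tensored with the identity on the reference qubits, to the positive operator $(\kket{\Omega}\bbra{\Omega})^{\otimes 4}$ and tracing against $\Pi$ --- both positivity-preserving steps --- yields $\mathbb{E}_{W\sim\calE}[Z(\kket{W})] \le (1+\epsilon)\,\mathbb{E}_{W\sim\H}[Z(\kket{W})]$. A short Weingarten computation gives $\mathbb{E}_{W\sim\H}[Z(\kket{W})] = O(1)$: writing $\bbra{W}(P_1\otimes P_2)\kket{W} = \pm\, 2^{-N}\operatorname{Tr}[W^{\dagger}P_1 W P_2]$ for $P_1,P_2\in\calP_N$, the only terms surviving the Haar average are $P_1 = P_2 = \id$ and the $(4^N-1)^2$ pairs of nonidentity Paulis, each of the latter contributing $4^{-2N}$ times a bounded fourth Haar moment of $\operatorname{Tr}[W^{\dagger}P_1 W P_2]$. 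Thus the design hypothesis forces $\mathbb{E}_{\calE}[Z(\kket{W})] = O(1)$, with the constant essentially independent of $N$ and $\epsilon$.

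The contradiction then comes from the quantitative core of the state no-go theorem. Since $Z$ is Clifford-invariant, $Z(\kket{W}) = Z(\kket{\phi_U})$ for every member of $\calE$, independent of $V_1, V_2$ and of the distribution on $U$. The stabilizer-R\'enyi-entropy estimate underlying the state no-go shows that a depth-$t$ geometrically local circuit state on $m$ qubits in $d$ dimensions satisfies $M_2 \le m - \Omega\!\big(m\, 2^{-c_d t^{d}}\big)$, i.e. $Z \ge 2^{\Omega(m\, 2^{-c_d t^{d}})}$; with $m = 2N$ and depth $t + O(1)$ this gives $Z(\kket{\phi_U}) \ge 2^{\Omega(N\, 2^{-c_d t^{d}})}$, and hence the same lower bound on $\mathbb{E}_{\calE}[Z(\kket{W})]$. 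For $t = \tilde{o}\!\big((\log(N/\epsilon))^{1/d}\big)$ this lower bound is $\omega(1)$ --- in fact super-polynomially large in $N/\epsilon$ --- once $N$ is large relative to $1/\epsilon$, contradicting the bound of the previous paragraph. Therefore $\{V_1 U V_2\}$ cannot form a relative-error unitary $4$-design, and a fortiori not a relative-error unitary $k$-design for any $k\ge4$.

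I expect the main obstacle to be the geometric bookkeeping of the Choi reduction rather than any new estimate: one must check that the doubled lattice can be taken $d$-dimensional with $U\otimes\id$ still $O(1)$-local and of depth $\le t$, and that the $O(1)$ depth overhead and the $N\mapsto 2N$ doubling are absorbed by the $\tilde{o}$ in the depth threshold. A second delicate point is that the Choi ensemble of Haar-random unitaries is \emph{not} a state design on $2N$ qubits, so the state no-go theorem cannot be invoked as a black box; instead one reuses only its quantitative input --- the lower bound on $Z$ for shallow states --- together with the new observation that the Haar value of $Z$ on Choi states is $O(1)$. The Weingarten estimate and the $M_2$-versus-depth bound are otherwise routine, the latter being a light-cone argument showing that within each light cone a shallow state retains enough stabilizer structure that $M_2$ falls short of maximal by an amount that is extensive after rescaling by the cone volume $2^{\Theta(t^d)}$.
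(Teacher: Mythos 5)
Your proposal is correct and follows essentially the same route as the paper: pass to the Choi state $\kket{W}=(V_1\otimes V_2^{\top})(U\otimes\id)\kket{\Omega}$, observe that it is a Clifford unitary on $2N$ qubits applied to a depth-$(t+O(1))$ local-circuit state, and derive a contradiction from the Pauli fourth moment $Z$ (equivalently the stabilizer $2$-R\'enyi entropy), which is Clifford-invariant and bounded below for shallow states by the single-region ``domain wall'' count. The one place you diverge is the unitary-to-state reduction. The paper first proves a standalone lemma (Lemma~\ref{lemma:unitary_design_implies_state_design}) showing that the Choi-state ensemble of an $\epsilon'$-relative-error unitary design is itself a relative-error \emph{state} design with error $(1+\epsilon')(1+k^2 2^{-N})(1+k^2 2^{-2N})-1$ --- precisely to address the subtlety you correctly identify, that Choi states of Haar unitaries are not exactly Haar states on $2N$ qubits --- and then invokes Corollary~\ref{thm:nogo_clifford_augmented_FDLU} as a black box. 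You instead apply complete positivity of $(1+\epsilon)\Phi^{(k)}_{\H}-\Phi^{(k)}_{\calE}$ directly to $(\kket{\Omega}\bbra{\Omega})^{\otimes 4}$ and the single positive observable $\Pi$, and compute the Haar value of $Z$ on Choi states by hand. Both work; the paper's lemma is reusable and gives the two-sided statement, while your version is more economical for this one observable.

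Two minor imprecisions, neither fatal. First, the lower bound you quote, $Z\ge 2^{\Omega(m\,2^{-c_d t^{d}})}$, is multiplicative/exponential, whereas the paper's Peierls-type argument (Eq.~\eqref{eq:nogo_clifford_augmented_states_SRE_bound}) only counts single-region excitations and yields the additive bound $Z\ge 1+\Omega\big(m\,t^{-d}\,2^{-O(t^{d}\log t)}\big)$; the additive form is all you need for the contradiction, so you should state that instead. Second, the extra factor of $t^{-d}$ from the number of regions, and the $\log t$ in the exponent from the region side length $\lceil 4dt+1\rceil$, are what produce the $\tilde o$ (rather than $o$) in the depth threshold, so they should be kept explicit when matching the stated bound.
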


\subsubsection{State designs with bounded additive error}

In Sec.~\ref{sec:state_additive}, we prove the following theorem for the state designs with bounded additive error (defined in Eq.~\eqref{eq:def_state_additive}).
\begin{theorem*}[Additive-error state designs from low-depth Clifford unitaries followed by constant magic]
Consider a state ensemble $\calE = \{U V\ket{0}^{\otimes N}\}$ generated by the circuit below
\begin{align*}
    \includegraphics[width=10.8cm]{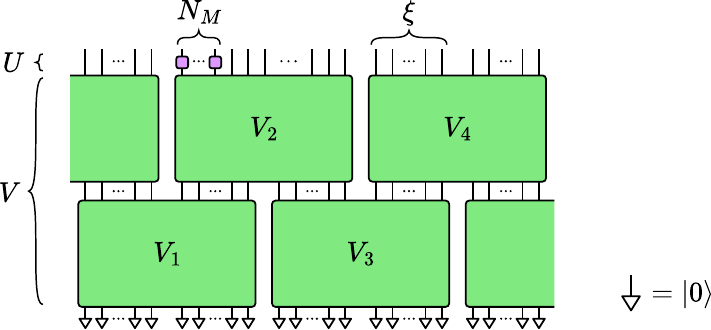}
\end{align*}
where $U$ is a product of $N_M$ single-qubit gates drawn independently from either an exact $k$-design or a gate set $\{\mathds{1},u,u^\dagger\}$ with $u$ being a magic gate, and $V$ is a two-layer Clifford unitary.
Each Clifford gate in $V$ acts on $2\xi$ qubits and is drawn from an exact Clifford $k$-design.
In an $N$-qubit system, the ensemble $\calE$ can form a state $k$-design for $k < O(\sqrt{N})$ with $\epsilon$ additive error for $\xi=O(\log(N/\epsilon)+k^2)$ and 
\begin{itemize}
    \item $N_M=O(k^2+\log(1/\epsilon))$ if the single-qubit gates are drawn from a $k$-design;
    \item $N_M=O((k^2+\log(1/\epsilon))\log^2 k)$ if the single-qubit gates are drawn from $\{\mathds{1},u,u^\dagger\}$.
\end{itemize}

\end{theorem*}

We note that $u$ in the single-qubit gate set $\{\mathds{1}, u, u^\dagger\}$ can be any magic gate, e.g. T-gate.
We also note that we can place magic gates before the low-depth circuit as shown in Appendix~\ref{sec:state_additive_initial_magic}. 
This result does not contradict our no-go theorem, which applies to the designs with bounded relative, but not additive error. 

While establishing these results, we also obtain a construction for an approximate stabilizer state design in shallow circuits. 
Specifically, an ensemble $\calE$ forms a stabilizer state $k$-design with $\epsilon$ additive error if $\onenormsm{\rho_\calE^{(k)} - \rho_{\C}^{(k)}} \leq \epsilon$, where $\rho_{\C}^{(k)} := \Ens_{\psi\sim\C} \ketbra{\psi} ^{\otimes k}$. 
\begin{theorem*}[Additive-error stabilizer state designs from low-depth Clifford circuits]
Consider a state ensemble $\calE = \{V\ket{0}^{\otimes N}\}$ generated by a two-layer Clifford unitary $V$. Each Clifford gate acts on $2\xi$ qubits and is drawn from an exact Clifford $k$-design.
In an $N$-qubit system, the ensemble can form a stabilizer state $k$-design with $\epsilon$ additive error for $\xi=O(\log(N/\epsilon)+k^2)$.
\end{theorem*}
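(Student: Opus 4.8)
The plan is to obtain this statement as a corollary of the ``uniformity condition'' developed in Sec.~\ref{sec:uniformity}, supplemented by a direct analysis of how the two-layer block-Clifford circuit propagates Pauli operators. Observe first that a global random Clifford acting on $\ket{0}^{\otimes N}$ produces a uniformly random stabilizer state, so $\rho^{(k)}_{\C}=\Ens_{C\sim\C_N}\,C^{\otimes k}\ketbra{0}^{\otimes Nk}C^{\dagger\otimes k}$; hence $\onenormsm{\rho^{(k)}_{\calE}-\rho^{(k)}_{\C}}$ is precisely the distance between the $k$-th moments produced by the two-layer ensemble $\calE_V$ and by the global Clifford group when both act on the fixed stabilizer input $\ketbra{0}^{\otimes Nk}$. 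This is exactly what the uniformity condition is designed to control: I would show that $\calE_V$ satisfies the uniformity condition with a small parameter $\delta$, and then invoke the general implication (established in Sec.~\ref{sec:uniformity}) that such an ensemble, applied to a stabilizer state, yields $\onenormsm{\rho^{(k)}_{\calE}-\rho^{(k)}_{\C}}\le 2^{O(k^2)}\,\delta$. The factor $2^{O(k^2)}$ is the dimension of the $k$-fold Clifford commutant (the number of stochastic Lagrangian subspaces of $\mathbb{F}_2^{2k}$, namely $\prod_{j=1}^{k-1}(2^j+1)$); its presence is what forces the additive $O(k^2)$ term in $\xi$, and the hypothesis $N\ge k+\log k-2$ is the threshold beyond which the exact commutant/moment identities used here are valid.

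To establish the uniformity bound I would average the $k$-fold channel one layer at a time. Since each block gate is drawn from an exact Clifford $k$-design, averaging the second layer applies, on each offset $2\xi$-qubit block, the exact $k$-fold Clifford twirl $\Pi_b$ on the $k$ copies, and likewise for the first layer; thus $\rho^{(k)}_{\calE}=\big(\bigotimes_{b\in L_2}\Pi_b\big)\circ\big(\bigotimes_{b\in L_1}\Pi_b\big)\big[\ketbra{0}^{\otimes Nk}\big]$, to be compared with $\rho^{(k)}_{\C}=\Pi_{[N]}\big[\ketbra{0}^{\otimes Nk}\big]$. Expanding $\ketbra{0}^{\otimes N}=2^{-N}\sum_{s\in S_0}s$ over the stabilizer group $S_0=\langle Z_1,\dots,Z_N\rangle$, the input is a uniform sum over $k$-tuples $(P_1,\dots,P_k)$ of elements of $S_0$, and on each such tuple the composite local twirl and the global twirl act only through the ``type'' of the tuple as seen by the relevant blocks. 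The key structural claim is that the composite of the two local twirls reproduces the global twirl on a given tuple unless the configuration is \emph{non-generic}, meaning that some product of a subset of the $P_j$ acts trivially on one of the $\xi$-qubit overlap regions where a second-layer block meets a first-layer block---in which case that second-layer block cannot transmit the corresponding stabilizer information across the overlap, and the composite twirl fails to mix it as the global twirl would.

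It then remains to bound the total weight of non-generic configurations. There are $O(N/\xi)$ overlap regions, and for each one the constraint ``a product of a subset of the $P_j$ is trivial here'' is, by exactness of the block design, satisfied with probability at most $4^{-\xi}$ (the fraction of Pauli operators on a block supported away from a fixed $\xi$-qubit sub-block), while the number of distinct non-generic types that must be accounted for at a given overlap is at most the number of rank $\le k$ subgroup configurations local to a block, which is $2^{O(k^2)}$ and carries no $N$-dependence. A union bound gives $\delta\le 2^{O(k^2)}(N/\xi)4^{-\xi}$, and combining with the conversion factor above yields $\onenormsm{\rho^{(k)}_{\calE}-\rho^{(k)}_{\C}}\le 2^{O(k^2)}\,N\,2^{-\Omega(\xi)}$, which is at most $\epsilon$ once $\xi=O(\log(N/\epsilon)+k^2)$. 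For the union bound to be exhaustive one needs the two layers' block boundaries to be offset (block size $2\xi$, shift $\xi$), so that every nontrivial Pauli meets at least one overlap region and no stabilizer configuration can escape between them.

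The step I expect to be the main obstacle is keeping the $k$- and $N$-dependence cleanly decoupled throughout: one must check that every quantity growing with $k$---the conversion factor from the uniformity lemma, and the count of non-generic configuration types per overlap---is genuinely $2^{O(k^2)}$ with no hidden polynomial-in-$N$ factor (each configuration type is pinned down by rank $\le k$ symplectic data local to a $2\xi$-qubit block, not by its embedding in $\mathbb{F}_2^{2N}$), and that the only $N$-dependence left in $\delta$ is the benign $O(N/\xi)$ from the number of overlap regions. A secondary subtlety is the propagation claim itself: verifying that the offset brick geometry really does intercept every nontrivial Pauli, so that ``generic on all overlaps'' implies the composite local twirl coincides with the global twirl on that tuple.
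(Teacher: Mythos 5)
Your high-level architecture---reduce the statement to the approximate uniformity condition for the two-layer ensemble and then convert uniformity into an additive-error bound---is exactly the paper's route (Theorem~\ref{thm:uniformity_implies_stabilizer_state_design} plus Theorem~\ref{thm:uniformity_log_depth}; note the conversion costs only $\epsilon+k2^{-N+k}$, with no extra $2^{O(k^2)}$ multiplicative factor, since $\onenormsm{r(\vec T)}\le 2^{Nk}$). The gap is in how you propose to \emph{establish} the uniformity bound. The paper does this with the gluing lemma (Lemma~\ref{thm:uniformity_gluing}): each block twirl is expanded in the commutant basis $|T_1)(T_2|$ with Clifford Weingarten coefficients, and the asymptotics $|\Wg_{\C}(T_1,T_2)|\le 2^{-2\xi k-2\xi+O(k^2)}$ for $T_1\ne T_2$, together with $\bbrakket{T_2}{T_3}\le 2^{\xi(k-1)}$ for misaligned neighbors on an overlap, suppress every non-uniform configuration by $2^{-\xi+O(k^2)}$ per gluing; iterating $N/\xi$ times gives $\Delta\le N2^{-\xi-\log\xi+O(k^2)}$.

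Your replacement---expand $\ketbra{0}^{\otimes Nk}$ over Pauli $k$-tuples, call a tuple ``non-generic'' if some subset product is trivial on an overlap region, and union-bound---does not produce the uniformity coefficients $\sum_{\vec T}|f(\vec T)|$, and read as a direct trace-norm bound it fails quantitatively. The non-generic tuples form a $\sim 2^{k}(N/\xi)2^{-\xi}$ \emph{fraction} of the $2^{Nk}$ input tuples, hence are exponentially numerous in $N$; bounding each one's contribution by $2^{-Nk}\cdot\onenormsm{s_1\otimes\cdots\otimes s_k}=O(1)$ via the triangle inequality yields a total of order $2^{Nk}\cdot 2^{-\xi}$, which is vacuous. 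Moreover, the structural claim that the composite local twirl \emph{equals} the global twirl on generic input tuples is false: even for a generic input, the layered twirl distributes weight non-uniformly over the global type class of output tuples (e.g., outputs that are locally trivial on some block occur with the wrong probability), so every input tuple carries a residual error, and controlling $\sum_{\vec t}|c_{\calE}(\vec t)-c_{\C}(\vec t)|$ would require relative-precision control of exponentially many coefficients. The cancellations that make the estimate work only become visible after re-summing the Pauli tuples into the commutant operators $r(\vec T)$ and tracking the signed Weingarten coefficients---that is, you are pushed back to precisely the analysis in the paper's gluing lemma.
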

\begin{corollary*}
In an $N$-qubit system, one can realize approximate stabilizer state $k$-designs with $\epsilon$ additive error in depth 
\begin{itemize}
\item $O(\log (N/\epsilon)+k^2)$ using 1D circuits.
\item $O(\log\log (N/\epsilon)+\log k)$ using all-to-all circuits with $O(N(\log(N/\epsilon)+k^2))$ ancillas. 
\end{itemize}
With $O(k^2+\log(1/\epsilon))$ single-qubit magic gates, one can realize a state $k$-design with $\epsilon$ additive error in the same depths.
\end{corollary*}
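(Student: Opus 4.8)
The plan is to obtain this corollary as an immediate consequence of the two theorems stated just above it, by substituting in the known cost of compiling the elementary building blocks into circuits. First I would recall the two facts I need. (i) By \cite{Bravyi:2020xdv}, a uniformly random $n$-qubit Clifford unitary --- which is an exact Clifford $k$-design for every $k$ --- can be realized by a Clifford circuit of depth $O(n)$ with one-dimensional nearest-neighbor connectivity, and of depth $O(\log n)$ with all-to-all connectivity using $O(n^2/\log n)$ ancilla qubits. (ii) A gate drawn from an exact single-qubit $k$-design is a single-qubit unitary and hence constitutes a depth-$1$ layer when the $N_M$ magic gates are applied in parallel on distinct qubits.

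For the two bullet points I would start from the theorem on additive-error stabilizer state designs, which realizes the design as $\calE=\{V\ket{0}^{\otimes N}\}$ with $V$ a two-layer Clifford unitary whose gates act on $2\xi$ qubits and $\xi=O(\log(N/\epsilon)+k^2)$. Each of the two layers of $V$ is a tensor product of independent random Cliffords supported on disjoint blocks of $2\xi$ qubits. In the 1D model these blocks are contiguous, so the block-local subcircuits run simultaneously and a single layer has depth $O(\xi)$; the two layers are offset and overlap, so they are executed in sequence, giving total depth $2\cdot O(\xi)=O(\log(N/\epsilon)+k^2)$, plus $O(1)$ from state preparation --- the first bullet. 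In the all-to-all model each $2\xi$-qubit block has depth $O(\log\xi)$ and uses $O(\xi^2/\log\xi)$ ancillas; since the $\Theta(N/\xi)$ blocks of a layer need disjoint ancilla pools, one layer uses $O(N\xi/\log\xi)$ ancillas, and allotting fresh ancillas to the second layer keeps the total at $O(N\xi)=O(N(\log(N/\epsilon)+k^2))$, while the depth is $O(\log\xi)=O(\log(\log(N/\epsilon)+k^2))=O(\log\log(N/\epsilon)+\log k)$ --- the second bullet.

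For the final sentence I would instead invoke the theorem on additive-error state designs from low-depth Clifford circuits followed by constant magic, which gives the design $\{UV\ket{0}^{\otimes N}\}$ with $V$ a two-layer Clifford unitary of block size $2\xi$, $\xi=O(\log(N/\epsilon))$, and $U$ a depth-$1$ layer of $N_M=O(k^2+\log(1/\epsilon))$ single-qubit exact-$k$-design gates. Compiling $V$ exactly as above yields 1D depth $O(\log(N/\epsilon))\subseteq O(\log(N/\epsilon)+k^2)$ and all-to-all depth $O(\log\log(N/\epsilon)+\log k)$ using $O(N\log(N/\epsilon))\subseteq O(N(\log(N/\epsilon)+k^2))$ ancillas; prepending (or appending) the constant-depth magic layer $U$ changes the depth by only $O(1)$, so the state $k$-design is realized in the same depths.

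I do not expect a substantive obstacle in this argument; it is essentially circuit bookkeeping. The two points requiring a little care are that the disjoint-block tensor products inside each Clifford layer genuinely parallelize --- trivial for contiguous blocks in 1D, and valid in the all-to-all model only once one accounts for the per-block ancilla pools as above --- and that the two overlapping Clifford layers must be applied sequentially, which doubles the depth but preserves its asymptotic order. One should also note that no nontrivial compilation of the magic gates is needed, since each is a single-qubit unitary and therefore depth $1$ by definition.
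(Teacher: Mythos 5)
Your proposal is correct and follows essentially the same route as the paper: invoke the two preceding theorems and then compile each $2\xi$-qubit Clifford block via the standard constructions (depth $O(\xi)$ in 1D, depth $O(\log\xi)$ with $O(\xi^2)$ ancillas per block in the all-to-all model, which the paper attributes to Jiang et al. rather than Bravyi--Maslov), with the magic layer contributing only depth $1$. The paper's own proof is exactly this bookkeeping, deferred to the proof of its circuit-depth corollary in Sec.~\ref{sec:rel_error_unitary}.
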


\subsubsection{Unitary designs with bounded additive error}

In Sec.~\ref{sec:unitary_additive}, we prove the following theorem for unitary designs with bounded additive error (defined in Eq.~\eqref{eq:def_unitary_additive}).
\begin{theorem*}[Additive-error unitary designs from circuits with magic gates over constant-number of qubits]
Consider a unitary ensemble $\calE = \{V U\}$ generated by the circuit below
\begin{equation*}
    \includegraphics[width=5.5cm]{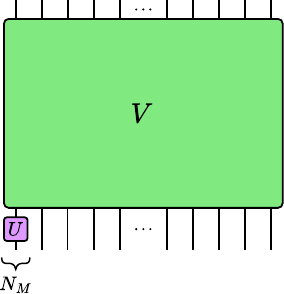}
\end{equation*}
Here, $U$ is a random unitary drawn from a unitary $k$-design with $\epsilon/2$ additive error, it acts on a subsystem $M$ of $N_M$ qubits, and $V$ is a global Clifford unitary drawn from an exact Clifford $k$-design. 
The ensemble $\calE$ can realize a unitary $k$-design for $k < N+1$ with $\epsilon$ additive error for $N_M = O(k^3+\log(1/\epsilon))$
\end{theorem*}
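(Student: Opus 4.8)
The plan is to reduce, via the triangle inequality and submultiplicativity of the diamond norm under composition, to a statement about exact objects, and then to analyze that statement with the Gross--Nezami--Walter description of the $k$-th moment commutant. Let $\Phi^{(k)}_{\mathrm{Cl}}$ denote the global exact Clifford $k$-design twirl, $\Phi^{(k)}_{U}$ the twirl over the $\epsilon/2$-approximate $k$-design $U$ on $M$ (extended by $\id$ on $M^c$), and $\Phi^{(k)}_{\H,M}$ the Haar twirl on $M$ (likewise extended). First I would replace $U$ by a Haar-random $\tilde U$ on $M$, i.e.\ compare $\calE=\{VU\}$ with $\calE_0=\{V\tilde U\}$. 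Since $\Phi^{(k)}_{\calE}=\Phi^{(k)}_{\mathrm{Cl}}\circ\Phi^{(k)}_{U}$ and $\Phi^{(k)}_{\calE_0}=\Phi^{(k)}_{\mathrm{Cl}}\circ\Phi^{(k)}_{\H,M}$, using $\|\Lambda_1\circ\Lambda_2\|_\Diamond\le\|\Lambda_1\|_\Diamond\|\Lambda_2\|_\Diamond$, $\|\Phi^{(k)}_{\mathrm{Cl}}\|_\Diamond=1$, and stability of the diamond norm under tensoring with identity, the hypothesis that $U$ is an $\epsilon/2$-additive unitary $k$-design on $M$ gives $\|\Phi^{(k)}_{\calE}-\Phi^{(k)}_{\calE_0}\|_\Diamond\le\epsilon/2$. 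It then suffices to show $\|\Phi^{(k)}_{\calE_0}-\Phi^{(k)}_{\H}\|_\Diamond\le\epsilon/2$, i.e.\ that an exact global Clifford $k$-design preceded by a Haar-random unitary on $N_M=O(k^3+\log(1/\epsilon))$ qubits is already an additive unitary $k$-design.

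Next I would set up the commutant picture. The three twirls $\Phi^{(k)}_{\H}$, $\Phi^{(k)}_{\mathrm{Cl}}$, $\Phi^{(k)}_{\H,M}$ are orthogonal projections (in the Hilbert--Schmidt inner product) onto the operator spaces $\Pi_U=\mathrm{span}\{W_\pi:\pi\in S_k\}$, $\Pi_{\mathrm{Cl}}=\mathrm{span}\{R(T):T\in\Sigma_{k,k}\}$ (with $R(T)=r(T)^{\otimes N}$ factorizing across qubits, so $R(T)=r(T)^{\otimes N_M}\otimes r(T)^{\otimes(N-N_M)}$), and $\Pi_M=\{\sum_\pi W_\pi^{(M)}\otimes A_\pi\}$ respectively, with $\Pi_U\subseteq\Pi_{\mathrm{Cl}}$ and $\Pi_U\subseteq\Pi_M$. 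From $\Phi^{(k)}_{\mathrm{Cl}}\Phi^{(k)}_{\H}=\Phi^{(k)}_{\H}$ and $\Phi^{(k)}_{\H,M}\Phi^{(k)}_{\H}=\Phi^{(k)}_{\H}$ one obtains the identity $\Phi^{(k)}_{\calE_0}-\Phi^{(k)}_{\H}=\Phi^{(k)}_{\mathrm{Cl}}\circ\big(\Phi^{(k)}_{\H,M}-\Phi^{(k)}_{\H}\big)=\big(\Phi^{(k)}_{\mathrm{Cl}}-\Phi^{(k)}_{\H}\big)\circ\Phi^{(k)}_{\H,M}$, so the task becomes bounding the Clifford-commutant projector restricted to the ``defect'' sector $\mathrm{Im}\big(\Phi^{(k)}_{\H,M}-\Phi^{(k)}_{\H}\big)$. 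The key structural input is that for a non-permutation $T$ the factor $r(T)^{\otimes N_M}$ has exponentially small overlap with the permutation operators on $M$: a direct computation gives, after normalization, that the overlap of $R(T)$ with any element of $\Pi_M$ is $\le 2^{-(k-g(T))N_M}\le 2^{-N_M}$, where $g(T)=\max_\pi\dim(T\cap T_\pi^\perp)\le k-1$ since $T\neq T_\pi$ for all $\pi$; one must also check that $\Pi_{\mathrm{Cl}}\cap\Pi_M=\Pi_U$ so that this contraction has no zero modes, which holds once $N_M$ exceeds a small function of $k$.

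Finally I would convert this Hilbert--Schmidt / $2\!\to\!2$ contraction into the required diamond-norm bound, and this is the delicate step. A naive estimate $\|\cdot\|_\Diamond\le d^{k}\|\cdot\|_{2\to2}$ is useless because the commutant operators $R(T)$ have large trace norm; instead one must exploit that the output of $\Phi^{(k)}_{\mathrm{Cl}}\circ(\Phi^{(k)}_{\H,M}-\Phi^{(k)}_{\H})$ lies in the low-dimensional defect span and argue through the structure of the $R(T)$ (paralleling the analysis of doped Clifford circuits in Ref.~\cite{Haferkamp:2020qel}), which costs a combinatorial factor $|\Sigma_{k,k}|=2^{O(k^2)}$ together with the conditioning of the Gram matrix of the $R(T)$. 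Balancing the per-qubit contraction $2^{-\Omega(N_M)}$ against these $\poly(k)$-exponential prefactors and the target error $\epsilon/2$ yields $N_M=O(k^3+\log(1/\epsilon))$. I expect this last conversion --- controlling the defect contraction uniformly over the entire $2^{O(k^2)}$-dimensional defect sector and passing to the diamond norm without reintroducing dimension factors --- to be the main obstacle, the triangle-inequality reduction and the commutant bookkeeping being routine by comparison.
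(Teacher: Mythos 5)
Your reduction is the same as the paper's: peel off the $\epsilon/2$ from the approximate design on $M$ via monotonicity of the diamond norm under composition with the Clifford channel, then analyze the exact object $\Phi^{(k)}_{\mathrm{Cl}}\circ(\Phi^{(k)}_{\H,M}\otimes\id)$ in the Clifford commutant, using the fact that a non-permutation $T$ restricted to $M$ picks up a per-qubit suppression $2^{-N_M|T,\sigma|}\le 2^{-N_M}$ against the permutation operators produced by the Haar twirl. That much is correct and matches the paper's first two steps.

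The gap is exactly where you flag it: you never actually produce the diamond-norm bound, and the mechanism you would need is not the one you gesture at. The difficulty is not merely the $2^{O(k^2)}$ count of $\Sigma_{k,k}$ or the conditioning of the Gram matrix; it is that individual cross terms $|T_1)(T_2|$ in the expansion of $\Phi^{(k)}_{\mathrm{Cl}}$ have diamond norm that grows exponentially in the \emph{system size}, not just in $k$. The paper computes $\lVert|T_1)(\tau|^{\otimes N_M}\otimes|T_1)(T_2|^{\otimes N_{\bar M}}\rVert_\Diamond = 2^{Nk-N_M\dim\rmN_1+N_{\bar M}(\dim\rmN_2-\dim\rmN_1)}$ (multiplicativity of $\lVert\cdot\rVert_\Diamond$ under tensor products plus the single-qubit formula $\lVert|T_1)(T_2|\rVert_\Diamond=2^{k+\dim\rmN_2-\dim\rmN_1}$), so when $\dim\rmN_2>\dim\rmN_1$ the term is larger than $2^{Nk}$ by a factor exponential in $N_{\bar M}$. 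This is killed only by the refined Weingarten asymptotic $|\Wg_{\C}(T_1,T_2)|\le 2^{-N(k+|\dim\rmN_1-\dim\rmN_2|)+O(k^3)}$, whose $N|\dim\rmN_1-\dim\rmN_2|$ decay precisely cancels the blow-up and whose $O(k^3)$ correction is the origin of the $k^3$ in $N_M=O(k^3+\log(1/\epsilon))$ — a scaling you assert but cannot derive from the ingredients you list ($2^{O(k^2)}$ entropy plus a $2^{-\Omega(N_M)}$ contraction would suggest $k^2$, not $k^3$). The paper organizes this as a four-way case split on whether $T_1=T_2$ and whether $T_2\in S_k$: when $T_2=\pi\in S_k$ the sum over $\sigma$ collapses $\tau$ to $\pi$ and only Weingarten asymptotics are needed; when $T_2\notin S_k$ the $2^{-N_M}$ suppression enters; and when additionally $T_1\neq T_2$ the refined Weingarten bound is what controls the diamond norm. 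Without this last input your argument does not close, and you correctly identify it as the unproven "main obstacle."
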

\begin{corollary*}
With $O(k^4\poly\log(k/\epsilon))$ two-qubit magic gates, one can realize approximate unitary $k$-designs with $\epsilon$ additive error in depth $O(N+k\poly\log (k/\epsilon))$ using 1D circuits and in depth $O(\log N+k\poly\log (k/\epsilon))$ using all-to-all circuits with $N^2$ ancillas. 
\end{corollary*}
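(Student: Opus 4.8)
The plan is to derive the corollary from the preceding theorem purely by instantiating its two building blocks --- the small unitary $k$-design $U$ and the global random Clifford $V$ --- with concrete, known circuit constructions, and then adding up the resulting circuit depths and counting the non-Clifford gates. All magic will live in $U$, which acts on only $N_M = O(k^3 + \log(1/\epsilon))$ qubits, while $V$ is pure Clifford and therefore contributes no magic.

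\emph{Step 1 (realize the small design $U$).} The theorem only needs $U$ to be a unitary $k$-design on $N_M$ qubits with $\epsilon/2$ additive error. I would obtain this from the shallow-circuit construction of Ref.~\cite{Schuster:2024ajb}, which produces a $1$D brickwork of Haar-random two-qubit gates forming a $\delta$-relative-error unitary $k$-design on $n$ qubits in depth $O(\log(n/\delta)\cdot k\poly\log k)$. Since a relative-$\delta$ design is an additive-$2\delta$ design --- write the difference of the two $k$-fold twirling channels as a difference of two completely positive maps, each dominated in the completely-positive order by $\delta\,\Phi^{(k)}_{\H}$, and use that $\Phi^{(k)}_{\H}$ is a channel so $\lVert\Phi^{(k)}_{\H}\rVert_{\Diamond}=1$ --- it suffices to take $\delta=\epsilon/4$. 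Plugging in $n=N_M=O(k^3+\log(1/\epsilon))$ gives $\log(N_M/\delta)=O(\log(k/\epsilon))$, so the depth of $U$ is $O(k\poly\log(k/\epsilon))$, and the number of two-qubit gates used is $O(N_M\cdot k\poly\log(k/\epsilon))$, which collapses to $O(k^4\poly\log(k/\epsilon))$ after absorbing the subleading $\log(1/\epsilon)\cdot k\poly\log(k/\epsilon)$ term into $\poly\log(k/\epsilon)$. Each of these gates may be taken non-Clifford, so this is the magic-gate budget.

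\emph{Step 2 (realize the global Clifford $V$).} For $V$ I would use a uniformly random element of the $N$-qubit Clifford group, which is an exact unitary $k$-design for every $k$. Standard Clifford synthesis compiles any such element into a $1$D circuit of Clifford gates of depth $O(N)$~\cite{Bravyi:2020xdv}, or --- with all-to-all connectivity and $O(N^2)$ ancillas --- into depth $O(\log N)$. No magic gates appear here.

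\emph{Step 3 (combine and tally).} Applying the theorem with this choice of $U$ (an $\epsilon/2$-additive $k$-design on $N_M=O(k^3+\log(1/\epsilon))$ qubits) and $V$ (an exact Clifford $k$-design) immediately yields that $\calE=\{VU\}$ is a unitary $k$-design with $\epsilon$ additive error. Because $U$ and $V$ are applied in sequence, their depths add: $O(N)+O(k\poly\log(k/\epsilon))$ in $1$D and $O(\log N)+O(k\poly\log(k/\epsilon))$ with all-to-all connectivity and $O(N^2)$ ancillas, while the number of magic (two-qubit, non-Clifford) gates is that of $U$, i.e.\ $O(k^4\poly\log(k/\epsilon))$. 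This corollary is essentially accounting on top of the theorem; the only points needing genuine care are the relative-to-additive error conversion for the sub-design $U$ and checking that the $\log(1/\epsilon)$ contribution to $N_M$ is absorbed by the claimed $\poly\log(k/\epsilon)$ factors (it is, since $\log(1/\epsilon)\le\log(k/\epsilon)$).
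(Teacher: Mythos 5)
Your proposal is correct and follows essentially the same route as the paper's (very terse) proof: instantiate $U$ via the shallow-circuit construction of Ref.~\cite{Schuster:2024ajb} on $N_M=O(k^3+\log(1/\epsilon))$ qubits and compile the global Clifford $V$ in depth $O(N)$ (1D) or $O(\log N)$ (all-to-all with $N^2$ ancillas). Your added details --- the relative-to-additive conversion for $U$ and the gate-count bookkeeping showing the budget is $O(k^4\poly\log(k/\epsilon))$ --- are correct and simply make explicit what the paper leaves implicit.
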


We would like to highlight the distinct architectural choice made for constructing the additive-error unitary design. Specifically, our approach employs global random Clifford unitaries, which inherently require circuits of linear depth, in contrast to the shallow circuit architecture used in other constructions. 
For applications where minimizing the overall circuit depth—particularly with respect to the system size $N$—is the primary objective, one may instead use the aforementioned relative-error unitary design, constructed by sandwiching local $k$-design unitary gates acting on disjoint clusters of qubits. 
However, this alternative architecture incurs a substantial magic cost, requiring that the number of non-Clifford gates scales linearly in $N$. 
By contrast, our construction for an additive-error design achieves favorable circuit depth under a fixed magic budget. 
For a detailed comparison with the existing results, see Table~\ref{tab:main_results_add_err}.

\subsection{Statistical mechanics picture}
A portion of our results can be interpreted in the language of an effective classical statistical mechanics model that describes the ensemble of circuits that we consider. Similar mappings have been successfully used \cite{fisher2023random} to investigate a diverse set of quantum many-body phenomena, ranging from entanglement growth and the onset of quantum chaos \cite{Nahum:2016muy,Nahum:2017yvy,vonKeyserlingk:2017dyr,Zhou:2019pob} to the physics of random tensor networks \cite{Hayden:2016cfa,Vasseur:2018gfy}, and measurement-induced entanglement phases and transitions \cite{Skinner:2018tjl, jian2020measurement, bao2020theory, li2023entanglement, li2018quantum,Li:2021dbh}.

For a quantum circuit composed of random gates drawn from a $k$-design, the average of the $k$-fold replicated circuit can be mapped onto the statistical mechanics of ``spins" taking values in the permutation group $S_{k}$, which labels the linearly-independent set of operators that commute with the $k$-th tensor power of an arbitrary unitary operator. 
For a quantum circuit composed of random Clifford gates, the structure of the effective statistical mechanics description is richer.  
The average $k$-fold channel of a random Clifford gate maps onto a summation over stochastic Lagrangian subspaces $\Sigma_{k,k}$, which label operators in the Clifford commutant, (i.e. the set of operators that commute with the $k$-th tensor powers of Clifford unitaries, reviewed in Sec.~\ref{sec:prelim_clifford}).
This results in a statistical mechanics model consisting of ``spin" degrees of freedom, each of which labels a stochastic Lagrangian subspace. 
The $S_k$ permutation operators belong to the Clifford commutant and correspond to a subset of the stochastic Lagrangian subspaces $\Sigma_{k,k}$.

Section~\ref{sec:stat_mech_uniformity} shows that \emph{approximate uniformity} --- a key property satisfied by two-layer shallow Clifford circuits that underlies our shallow-circuit constructions of $k$-designs --- is linked to a notion of \emph{strong ordering} in the statistical mechanics description.
The strong ordering requires that any domain-wall excitation (neighboring spins representing non-identical Lagrangian subspaces) is parametrically suppressed, which is a stronger condition than simply being within the ordered phase of the statistical mechanics model. 
We show that strong ordering can happen when each gate acts on $O(\log N)$ number of qubits, corresponding to a sufficiently low temperature $O(1/\log N)$ in the stat-mech model, which prevents small domain walls from proliferating.

We note that in one dimension, strong ordering emerges when the stat-mech model exhibits usual ordering, which necessarily requires that the domain walls flipping an extensive number of spins are suppressed. 
In higher dimensions, although the stat-mech model can exhibit usual ordering when the gates in the two-layer circuit act on an $O(1)$ number of qubits, strong ordering still requires each gate to act on $O(\log N)$ qubits.

The analysis in Sec.~\ref{sec:stat_mech_uniformity} can be easily generalized to the case of two-layer Haar random circuits studied in Ref.~\cite{Schuster:2024ajb}.
There, the strong ordering that emerges when each Haar random gate acts on $O(\log N)$ qubits explains why such circuits form unitary designs with bounded relative error. 

Section~\ref{sec:stat_mech_state_additive} provides a statistical mechanics understanding for why $\tilde{O}(k^2)$\footnote{The notation $\tilde{O}$ accounts for the potential $\log^2 k$ factor that depends on the choice of magic gate set.} magic-gate insertions are sufficient to convert shallow Clifford circuits to state $k$-designs with bounded additive error.
The key observation is that bounded additive error requires a strong ordering in the stat-mech model, in which the spins are aligned spatially to the subset of the stochastic Lagrangian subspaces that corresponds to the permutation group $S_k$.  In the stat-mech model for the random Clifford circuit, inserting single-qubit magic gates can be regarded as introducing a local ``symmetry-breaking" field, which favors the stochastic Lagrangian subspaces that correspond to  $S_k$.
Each magic gate thus creates an additional energy cost for spins taking a non-permutation value; the cost is $O(1)$ for magic gates drawn from a $k$-design and is $O(1/\log^2k)$ if the gates are drawn from $\{\mathds{1},u,u^\dagger\}$ with $u$ being a generic magic gate. %
In the strongly ordered stat-mech model for Clifford circuits, there are a total of $\abs{\Sigma_{k,k}} - \abs{S_k} = 2^{O(k^2)}$ spin configurations spatially aligned to a non-permutation value, corresponding to an entropy $O(k^{2})$. 
It is therefore enough to insert $\tilde{O}(k^2)$ single-qubit magic gates -- which create an energy cost of $O(k^2)$ for these non-permutation configurations within the stat-mech description -- to suppress these states and achieve strong ordering within the permutation subspace.

Our stat-mech mapping strongly suggests that the $\tilde{O}(k^2)$ number of generic, magic gates inserted at arbitrary locations in the two-layer random Clifford circuit operating on logarithmic-sized qubit blocks is sufficient to generate an additive-error state $k$-design.  
This heuristic picture further motivates the search for other unitary circuit architectures (e.g. with local, two-qubit Clifford gates and a constant number of magic gates) which can realize additive-error designs.

\subsection{Reader's guide}
The rest of this paper is organized into several sections that can be read either sequentially or selectively, depending on reader's interest.

\begin{itemize}
    \item Section~\ref{sec:prelim} includes the preliminary background for our main results. Section~\ref{sec:prelim_design_unitary} and~\ref{sec:prelim_design_state} review the definitions of approximate unitary and state designs, respectively.
    Section~\ref{sec:prelim_clifford} reviews the algebraic theory of the Clifford commutant developed in Ref.~\cite{Gross_2021}.

    \item Section~\ref{sec:uniformity} introduces the uniformity conditions.

    \item Section~\ref{sec:rel_error} presents our main results for designs with bounded relative error. 
    We prove that our circuit architectures can generate state and unitary designs with bounded relative error in Sec.~\ref{sec:rel_error_state} and~\ref{sec:rel_error_unitary}, respectively.
    Section~\ref{sec:rel_error_nogo} proves no-go theorems for various other circuit architectures to form designs with bounded relative error.
    
    \item Section~\ref{sec:add_error} presents the results for designs with bounded additive error. 
    We prove that the circuit architectures we consider can generate state and unitary designs with bounded additive error in Sec.~\ref{sec:state_additive} and~\ref{sec:unitary_additive}, respectively.

    \item Section~\ref{sec:stat_mech} provides the physical understanding of our results based on statistical mechanics mappings.

    \item We close with discussions and outlook in Sec.~\ref{sec:discussion}.

    \item We present the results that are not our primary focus and a list of notation in the Appendices.
\end{itemize}

Suggested reading paths:
\begin{itemize}
    \item For readers interested in technical details: read Sec.~\ref{sec:uniformity},~\ref{sec:rel_error}, and~\ref{sec:add_error};
    \item For readers interested in an understanding of the uniformity condition in terms of the statistical mechanics: read Sec.~\ref{sec:uniformity} and~\ref{sec:stat_mech_uniformity};
    \item For readers interested in the statistical mechanics picture of our construction for state design with bounded additive error: read Sec.~\ref{sec:stat_mech}.
\end{itemize}

\section{Preliminaries}\label{sec:prelim}
In this section, we first review the definition of state and unitary $k$-design and their approximation up to additive and relative errors.
We then review the algebraic structure of the Clifford commutant.

\subsection{Approximate unitary designs}\label{sec:prelim_design_unitary}
We 
define the $k$-fold channel $\Phi^{(k)}\left( \cdot \right)$ for an ensemble $\calE$ of unitary gates as follows, 
\begin{equation}
\Phi_\calE^{(k)}\left( \cdot \right):=\Ens_{U\sim\calE}\ U^{\otimes k}\left( \cdot \right) U^{\dagger\otimes k}.
\end{equation}
\begin{definition}[Approximate unitary design with bounded additive error]
The ensemble $\calE$ forms an approximate unitary $k$-design with $\epsilon$ {\bf additive} error if
\begin{equation}
\lVert\Phi^{(k)}_\calE-\Phi^{(k)}_{\H}\rVert_\Diamond\leq \epsilon,
\end{equation}
where $\Phi^{(k)}_{\H}$ is the $k$-fold channel associated with the Haar random unitary ensemble.
\end{definition}
The diamond norm $\lVert\cdot\rVert_\Diamond$ for superoperators is defined as follows~\cite{Aharonov:1998zf}:
\begin{equation}
\lVert \Phi \rVert_\Diamond:=\lVert \Phi \otimes \id\rVert_{1\rightarrow 1} :=\max_{\lVert O\rVert_1\leq 1}\lVert [\Phi\otimes\id](O)\rVert_1,
\end{equation}
Here, the superoperator $\Phi$ and the identity channel $\id$ act on operators in the Hilbert space $\calH$ and a second Hilbert space $\calH'$ with the same dimension as $\calH$, respectively~\footnote{We note that $\lVert \Phi \otimes \id\rVert_{1\rightarrow 1}$ is independent of $\calH'$ as long as $\dim{\calH'}\geq \dim{\calH}$.}.

Operationally, the diamond norm bounds the additive distinguishability of two channels. For any input density matrix $\rho$ on the Hilbert space $\calH \otimes \calH'$,
\begin{equation}
 \lVert [\Phi_1\otimes \id](\rho)-[\Phi_2\otimes \id](\rho)\rVert_1 \leq \lVert\Phi_1-\Phi_2\rVert_\Diamond,
\label{def add}
\end{equation}
where $\Phi_{1,2}$ and $\id$ act on operators in $\calH$ and $\calH'$, respectively. %
The trace distance is, in turn, related to the success probability of distinguishing two density matrices, $p_{\text{succ}} = 1/2 + \lVert \rho_1 - \rho_2\rVert_1/2$, by the Holevo-Helstrom theorem~\cite{wilde2013quantum}.

\begin{definition}[Approximate unitary design with bounded relative error~\cite{Brandao:2012zoj}]
The unitary ensemble $\calE$ is an approximate unitary $k$-design with $\epsilon$ {\bf relative} error if
\begin{equation}
(1-\epsilon)\Phi^{(k)}_{\H}\preceq \Phi^{(k)}_\calE\preceq(1+\epsilon)\Phi^{(k)}_{\H},
\label{def rel}
\end{equation}
where ``$\preceq$" is the semidefinite ordering between channels: $\Phi_1\preceq\Phi_2$ iff $\Phi_2-\Phi_1$ is a completely positive map.
\end{definition}
In our detailed proofs, we always work with an equivalent formulation of the semidefinite ordering $\Phi_1\preceq\Phi_2$ in terms of the operator inequality below,
\begin{equation}
[\Phi_1\otimes \id](P_{\EPR})\preceq [\Phi_2\otimes \id](P_{\EPR}),
\end{equation}
where $P_{\EPR}$ is the density matrix of EPR states between $\calH$ and $\calH'$~\footnote{Here is a brief explanation. A completely positive channel $\Phi_2-\Phi_1$ maps positive operators to positive operators. When acting on the positive operator $P_{\EPR}$, the resulting operator $[(\Phi_2-\Phi_1)\otimes \id](P_{\EPR})\succeq 0$ is positive. To prove the other direction, the positivity of $[(\Phi_2-\Phi_1)\otimes \id](P_{\EPR})$ implies $(\Phi_2-\Phi_1)(\ketbra{\psi})$ being a positive operator $\forall \ket{\psi}$. Hence, $\Phi_2-\Phi_1$ is a semi-positive map.}.

A $k$-design with bounded relative error has the following operational significance.  First, given a state $\rho$ on $\calH\otimes\calH'$, the probabilities for measurement outcomes on the states $[\Phi_{\calE}^{(k)}\otimes \id](\rho)$ and $[\Phi_{\H}^{(k)}\otimes \id](\rho)$ only differ by a small \emph{multiplicative} factor. 
This setup describes quantum experiments where a unitary $U$ drawn from an appropriate ensemble can be queried $k$ times in parallel.  
However, a bounded relative error also implies that in a more general adaptive quantum experiment, where $k$ separate queries are made to a unitary $U$, the output states when $U$ is sampled from $\calE$ or from the Haar ensemble are only separated by $2\epsilon$ in trace-distance~\cite{Schuster:2024ajb}.

We remark that relative error is a stronger notion than additive error; an approximate $k$-design with bounded relative error is necessarily an approximate $k$-design with bounded additive error\footnote{Here, we provide an explicit derivation. Assume $\Phi_\calE^{(k)}$ is within $\epsilon$ relative error of $\Phi^{(k)}_{\H}$. We have
\begin{equation}
\begin{aligned}
\lVert \Phi_\calE^{(k)} - \Phi_{\H}^{(k)}\rVert_\Diamond&=\max_{\lVert O\rVert_1\leq 1}\lVert (\Phi_\calE - \Phi_{\H})\otimes \id (O)\rVert_1=2\max_{\lVert O\rVert_1\leq 1}\max_{0\leq M\leq 1}\Tr M[(\Phi_\calE - \Phi_{\H})\otimes \id](O)\\
&\le2\epsilon\max_{\lVert O\rVert_1\leq 1}\max_{0\leq M\leq 1}\abs{\Tr M[\Phi_{\H}\otimes \id](O)}\leq 2\epsilon \max_{\lVert O\rVert_1\leq 1}\lVert [\Phi_{\H}\otimes \id](O)\rVert_1 \\
&\le2\epsilon\max_{\lVert O\rVert_1\leq 1}\Ens_{U\sim\H}\lVert (U^{\otimes k}\otimes I)O(U^{\dagger \otimes k}\otimes I)\rVert_1=2\epsilon\max_{\lVert O\rVert_1\leq 1}\lVert O\rVert_1=2\epsilon.
\end{aligned}
\end{equation}
The last inequality follows from the convexity of the trace norm.}, though the converse is not necessarily true~\cite{Brandao:2012zoj}.

In this work, we also consider approximate Clifford $k$-designs, which are defined with respect to the distance from the $k$-fold channel of random Clifford unitaries, $\Phi^{(k)}_{\C}(\cdot) := \Ens_{V\sim\C} V^{\otimes k}(\cdot)V^{\dagger\otimes k}$.

\subsection{Approximate state designs}\label{sec:prelim_design_state}
We define the $k$-th moment of an ensemble $\calE = \{\ket{\psi}\}$ of quantum states  as follows
\begin{equation}
\rho_{\calE}^{(k)} = \Ens_{\psi \sim \calE} \ketbra{\psi}^{\otimes k}.
\end{equation}
\begin{definition}[Approximate state design with bounded additive error]
A state ensemble $\calE = \{\ket{\psi}\}$ forms an approximate state $k$-design with $\epsilon$ additive error if
\begin{equation}
\lVert\rho_\calE^{(k)}-\rho_{\H}^{(k)}\rVert_1\leq \epsilon,
\end{equation}
where $\rho_{\H}^{(k)}$ is the $k$-th moment of the Haar ensemble. 
\end{definition}

We note that the $k$-th moment of the Haar random ensemble takes the explicit form $\rho_{\H}^{(k)}=\frac{1}{d_{\sym}}P_{\sym}$, where $P_{\sym}$ is the projection onto the permutation symmetric subspace of the replicated Hilbert space $\calH^{\otimes k}$, and $d_{\sym} = \binom{d+k-1}{k}$ is the dimension of this subspace, with $d = \dim \calH$.

Operationally, an ensemble $\calE$ being a state $k$-design with $\epsilon$ additive error implies that, for any algorithm that makes $k$ queries of the state $\ket{\psi}$, the output state when $\ket{\psi}$ is sampled from $\calE$ or from an exact $k$-design is at most $\epsilon$ in trace distance (see Appendix~\ref{app:operation_state_additive_error}). 
While the trace norm is often challenging to compute, it is upper-bounded by the two-norm, which is directly related to the \emph{frame potential}
\begin{equation}
\lVert \rho_\calE^{(k)}-\rho_{\H}^{(k)}\rVert_1\leq \sqrt{d^k}\lVert \rho_\calE^{(k)}-\rho_{\H}^{(k)}\rVert_2 = \sqrt{d^k\big(F_\calE^{(k)}-F_H^{(k)}\big)}.
\end{equation}
where the frame potential for an ensemble of states $\calE$ is defined as
\begin{align}
    F_{\calE}^{(k)} = \Ens_{\psi,\phi \sim \calE} \abs{\braket{\psi}{\phi}}^{2k}.
\end{align}
For the Haar ensemble, the frame potential is given by $F_{\H}^{(k)}=1/d_{\sym}\approx k!/d^k$ for large $d$.

We note that an ensemble $\{U\}$ forming a unitary $k$-design with $\epsilon$ additive error implies that $\calE = \{U \ket{0}^{\otimes N}\}$ forms a state
$k$-design with $\epsilon$ additive error because
\begin{align}
\lVert \rho_\calE^{(k)}-\rho_{\H}^{(k)}\rVert_1 &=\lVert(\Phi_{\calE}^{(k)}-\Phi_{\H}^{(k)})(\ketbra{0}^{\otimes N})\rVert_1\le\lVert\Phi_{\calE}^{(k)}-\Phi_{\H}^{(k)}\rVert_\Diamond\le\epsilon.
\end{align}

In this work, we also consider approximate stabilizer state $k$-designs, which are defined according to the trace distance from the $k$-th moment of random stabilizer states
\begin{equation}
    \rho_{\C}^{(k)} := \Ens_{\psi\sim\C}\ketbra{\psi}^{\otimes k}.
\end{equation}
The notation $\Ens_{\psi\sim\C}$ means averaging over random stabilizer states.

\begin{definition}[Approximate state design with bounded relative error]
A state ensemble $\calE = \{\ket{\psi}\}$ forms an approximate state $k$-design with $\epsilon$ relative error if
\begin{equation}
(1-\varepsilon)\rho_{\H}^{(k)}\preceq \rho_\calE^{(k)}\preceq(1+\varepsilon)\rho_{\H}^{(k)}.\label{eq:approx_state_design_multp_error}
\end{equation}
where $\rho_1 \preceq \rho_2$ denotes that $\rho_2 - \rho_1$ is a positive semidefinite operator.
\end{definition}

The operational meaning of this definition is that there is only a small, multiplicative error between the probability of any given measurement outcome in $\rho_{\H}^{(k)}$ or in $\rho_{\calE}^{(k)}$.

A sufficient condition that guarantees  \eqref{eq:approx_state_design_multp_error}, is $\lVert \rho_\calE^{(k)}-\rho_{\H}^{(k)}\rVert_\infty\leq \binom{d+k-1}{k}^{-1}\varepsilon$.  This is because $\rho_\calE^{(k)}$ is supported only on the permutation symmetric subspace for any ensemble $\calE$, as this operator commutes with any permutation. Furthermore, $\rho_{\H}^{(k)}$ is the maximally mixed state in this subspace, with eigenvalues $\binom{d+k-1}{k}^{-1}$.

We make a few remarks. 
First, if the ensemble $\{U\}$ forms a unitary $k$-design within $\epsilon$ relative error, then $\calE = \{U \ket{0}\}$ forms a state $k$-design within $\epsilon$ relative error because $(1+\epsilon)\Phi_{\H}^{(k)}-\Phi_{\calE}^{(k)}\succeq 0$ implies that
\begin{equation}
(1+\epsilon)\rho_{\H}^{(k)}-\rho_{\calE}^{(k)}=[(1+\epsilon)\Phi_{\H}^{(k)}-\Phi_{\calE}^{(k)}](\ketbra{0}^{\otimes N})\succeq 0.
\end{equation}
Similarly, $\Phi_{\calE}^{(k)} - (1-\epsilon)\Phi_{\H}^{(k)}\succeq 0$ implies the operator inequality $\rho_{\calE}^{(k)}-(1-\epsilon)\rho_{\H}^{(k)}\succeq 0$.
Second, a state $k$-design with $\epsilon$ relative error is also a $k$-design with bounded additive error
\begin{align}
\onenormsm{\rho_{\calE}^{(k)} - \rho_{\H}^{(k)}}
    &= \max_{0\preceq M_+,M_-\preceq 1} \Tr\left((M_+ - M_-) (\rho_\calE^{(k)} - \rho_{\H}^{(k)})\right) \nn\\
    &\leq \epsilon \max_{0\preceq M_+\preceq 1}\Tr\left(M_+ \rho_{\H}^{(k)}\right) + \epsilon \max_{0\preceq M_-\preceq 1}\Tr\left(M_-\rho_{\H}^{(k)}\right) = 2\epsilon.
\end{align}

\subsection{Commutant of Clifford tensor powers}\label{sec:prelim_clifford}

In this section, we review the theory of the commutant of Clifford tensor powers -- the space of operators which commute with the $k$-fold tensor power of any Clifford unitary --  as developed in Ref.~\cite{Gross_2021}. This commutant naturally appears in averages over random Clifford unitaries.
We first explain the algebraic structure of the commutant of Clifford tensor powers and then provide formulae for averages over random Clifford unitaries and random stabilizer states, which will be used in this work.
We restrict our attention to the Clifford group over qubits.

Operators in the commutant of the $k$-th tensor power of the Clifford group are labeled by {stochastic Lagrangian subspaces}, a linear space defined below.
\begin{definition}[Stochastic Lagrangian subspace]
A linear subspace $T\subseteq \mathbb{Z}_2^{2k}$ is a stochastic Lagrangian subspace iff
\begin{enumerate}
\item $\boldx\cdot\boldx-\boldy\cdot\boldy=0$ mod $4$, $\forall (\boldx,\boldy)\in T$.
\item $\dim T=k$.
\item $\boldone_{2k}:=(\underbrace{1,\cdots,1}_{2k})\in T$.
\end{enumerate}
\end{definition}
We use $\Sigma_{k,k}$ to denote the set of all stochastic Lagrangian subspaces $T\subseteq \mathbb{Z}_2^{2k}$.
For each $T$, we have an associated operator in the commutant
\begin{equation}
\mathrm{r}(T)=\sum_{(\boldx,\boldy)\in T}\ket{\boldx}\bra{\boldy},\qquad \ket{\boldx}:=\bigotimes_{i=1}^k\ket{x_i}.
\end{equation}
For a system $S$ with $n$ qubits,
\begin{equation}
r(T)_S:=\mathrm{r}(T)^{\otimes n}.
\end{equation}
We omit the subscript when $S$ represents the entire system. 
In this work, we only consider the Clifford tensor powers on the subsystem with $n \geq k - 1$ such that the operators $r(T)$ are linearly independent.
The number of such independent operators is $|\Sigma_{k,k}|=\prod_{i=0}^{k-2}(2^i+1)=2^{O(k^2)}$.

The operator $\mathrm{r}(T)$ in the commutant follows the decomposition\footnote{The decomposition of $\mathrm{r}(T)$ stems from the structure of the Clifford commutant. The commutant can be written in terms of the double coset with respect to stochastic rotations, i.e. $\Sigma_{k,k} = O_k T_\mathrm{N_1} O_k \cup \cdots \cup O_k T_{\mathrm{N}_K} O_k$, where $T_{\mathrm{N_1}},\cdots T_{\mathrm{N_K}}$ are stochastic Lagrangian subspace associated with a defect subspace, and $O_k$ is the stochastic rotation group.}:
\begin{equation}
\mathrm{r}(T)=\mathrm{r}(T_{O,1})\mathrm{r}(T_\rmN)\mathrm{r}(T_{O,2}).
\end{equation}
The operator $\mathrm{r}(T_O)$ is a unitary operator, and the subspace $T_O = \{(\boldx,O\boldx), \boldx \in \bbZ_2^k\}$, where $O$ belongs to the stochastic rotation group defined below:
\begin{definition}[Stochastic rotation]
A $k\times k$ matrix $O$ with $\mathbb{Z}_2$-valued entries is a stochastic rotation iff
\begin{itemize}
\item $(O\boldx)\cdot(O\boldx)=\boldx\cdot\boldx$ mod $4$, $\forall\boldx\in\mathbb{Z}_2^k$.
\item $O\boldone_k=\boldone_k$.
\end{itemize}
\end{definition}
The stochastic rotations form a group, denoted as $O_k$, which includes the permutation group $S_k$ as a subgroup.
In the Clifford commutant, the operator associated with a stochastic rotation takes the form
\begin{equation}
\mathrm{r}(T_O):=\sum_{\boldx}\ket{O\boldx}\bra{\boldx}.
\end{equation}

The operator $\mathrm{r}(T_\rmN)$ is associated with a defect subspace defined below.
\begin{definition}[Defect subspace]\label{def:defect_subspace}
A linear subspace $\rmN\subseteq\mathbb{Z}_2^k$ is a defect subspace iff
\begin{itemize}
\item $\boldx\cdot\boldx=0 \mod 4$, $\forall\boldx\in \rmN$.
\item $\boldone_k\in \rmN^\perp$, i.e. $\boldx\cdot\boldone_k = 0 \mod 4$, $\forall \boldx \in \rmN$.
\end{itemize}
\end{definition}
Here, $\rmN^\perp$ is the orthogonal complement of the defect subspace $\rmN$ in $\bbZ_2^k$, which includes $\rmN$ as a subspace according to Def.~\ref{def:defect_subspace}.
The stochastic Lagrangian subspace $T_\rmN$ is given by $T_\rmN = \{(\boldx,\boldx+\boldy), \boldx\in \rmN^\perp, \boldy \in \rmN\}$.

The operator $r(T_\rmN)$ is proportional to the projection operator $P_{\CSS(\rmN)}$ onto the code space of the CSS code with stabilizers $\langle Z^{\boldu}X^{\boldv},\boldu,\boldv\in \rmN\rangle$:
\begin{equation}
\mathrm{r}(T_\rmN)=2^{\dim\rmN}P_{\CSS(\rmN)}.
\end{equation}
The code space is of dimension $2^{k-2\dim\rmN}$ specified by the stabilizer group with $\abs{\rmN}^2$ elements.

We are now ready to write down the norm of operators in the Clifford commutant:
\begin{equation}
\lVert \mathrm{r}(T)\rVert_1=2^{k-\dim\rmN},\qquad \lVert\mathrm{r}(T)\rVert_\infty=2^{\dim\rmN}.
\end{equation}
Here, we use the decomposition of $\mathrm{r}(T)$ and the fact that the unitary operators, $\mathrm{r}(T_{O,1})$ and $\mathrm{r}(T_{O,2})$, leave the Schatten norms invariant.

The overlap between two stochastic Lagrangian subspaces $T_{1,2}$ is given by
\begin{equation}
\bbrakket{T_1}{T_2} := \Tr[\mathrm{r}(T_1)^\top \mathrm{r}(T_2)]:=2^{k-|T_1,T_2|}=2^{\dim(T_1\cap T_2)}.
\end{equation}
Here, we introduce the vectorized notation $\kket{T}:=\mathrm{vec}(\mathrm{r}(T)) = \sum_{(\boldx,\boldy)\in T} \ket{\boldx}\otimes\ket{\boldy}$.
For two different subspaces $T_1\neq T_2$, $\dim(T_1\cap T_2)\leq k-1$, i.e. $|T_1,T_2|\geq 1$.

The $k$-fold channel associated with the Clifford group considered in this work can be expressed in terms of the operators in the Clifford commutant~\cite{Gross_2021,Li:2021dbh}:
\begin{align}
\Phi_{\C}^{(k)}(\cdot)=\Ens_{V\sim\C}V^{\otimes k}(\cdot)V^{\dagger\otimes k} 
=&\sum_{T_1,T_2}\Wg_{\C}(T_1,T_2)r(T_1)\Tr[r(T_2)^{\top}(\cdot)] \nn \\
:=&\sum_{T_1,T_2}\Wg_{\C}(T_1,T_2)|T_1)(T_2|(\cdot),\label{eq:clifford_commutant_weingarten_def}
\end{align}
where we assume $n\geq k-1$ such that $r(T)$ are linear independent. The coefficients $\Wg_{\C}(T_1,T_2)$ are called \emph{Clifford Weingarten function} satisfying
\begin{align}
    \sum_{T_2}\Wg_{\C}(T_1,T_2)\Tr[r(T_2)^{\top}r(T_3)]=\delta_{T_1,T_3}.
\end{align}
We introduce the notation $|T_1)(T_2|(\cdot) := r(T_1)\Tr[r(T_2)^{\top}(\cdot)]$, representing a superoperator.
The decomposition~\eqref{eq:clifford_commutant_weingarten_def} follows from the properties $[\Phi_{\C}^{(k)}(O),V^{\otimes k}]=0$, and $\Phi_{\C}^{(k)}([O,V^{\otimes k}])=0$ for any operator $O$ and Clifford unitary $V$.

The $k$-th moment of random stabilizer state over $n$ qubits is given by~\cite{Gross_2021}
\begin{equation}
\rho_{\C}^{(k)}=\frac{1}{Z_{n,k}}\sum_{T}r(T),
\end{equation}
where the prefactor $Z_{n,k} = 2^n\prod_{i=0}^{k-2}(2^n+2^i)$ and has an upper bound $Z_{n,k}\leq 2^{nk}(1 + k 2^{k-n})$ for $n\geq k+\log k-2$~\cite{Haferkamp:2020qel}.

\section{Uniformity condition}\label{sec:uniformity}
In this section, we introduce the approximate uniformity conditions for the $k$-fold channel of a Clifford unitary ensemble and the $k$-th moment of a stabilizer state ensemble.
We show that a two-layer random Clifford circuit with each gate operating on log-size qubit blocks can generate a unitary ensemble that satisfies approximate uniformity.
This property guarantees that such circuits, when acting on a product state, can generate a stabilizer state $k$-design with bounded additive error (Corollary~\ref{thm:state_additive_clifford_design}).

\begin{definition}[Approximate uniformity condition for Clifford unitaries]
Let $\calE$ be an ensemble of quantum circuits composed of Clifford gates acting on $N$ qubits. The $k$-fold channel for this circuit ensemble satisfies the $\epsilon$-approximate uniformity condition if 
\begin{itemize}
\item The $k$-fold channel takes the form
\begin{equation}
\Phi^{(k)}_\calE=2^{-Nk}\sum_T|T)(T|+
2^{-Nk}
\sum_{\vec{T}_1,\vec{T_2}}f(\vec{T}_1,\vec{T}_2)|\vec{T}_1)(\vec{T}_2|,\label{eq:channel_uniformity}
\end{equation}
where $\vec{T}$ is a vector that labels the spatial configuration of $T$ for each subregion, $|\vec{T}_1)(\vec{T}_2| = \bigotimes_x|T_{1,x})(T_{2,x}|$ is a tensor product of superoperators over spatial regions labeled by $x$, and $|T)(T|$ is a superoperator associated with a spatially uniform configuration of $T$.

\item The deviation from uniformity is bounded by $\epsilon$, i.e. $\Delta(\Phi^{(k)}_\calE) :=\sum_{\vec{T}_1,\vec{T_2}}|f(\vec{T}_1,\vec{T}_2)|\le\epsilon$.
\end{itemize}

\end{definition}

\begin{definition}[Approximate uniformity condition for stabilizer states]
Let $\calE$ be an ensemble of Clifford states on $N$ qubits. The $k$-th moment of this state ensemble satisfies the $\epsilon$-approximate uniformity condition if 
\begin{itemize}
\item The $k$-th moment can be written as
\begin{equation}
\rho^{(k)}_{\calE} =2^{-Nk}\sum_T r(T)+2^{-Nk}\sum_{\vec{T}}f(\vec{T})r(\vec{T}),
\end{equation} 
where $r(\vec{T})=\bigotimes_xr(T_x)$, and $r(T)$ is an operator associated with a spatially uniform configuration of $T$.

\item The deviation from uniformity is bounded by $\Delta(\rho_{\calE}^{(k)}) := \sum_{\vec{T}}|f(\vec{T})|\le\varepsilon$.
\end{itemize}
\end{definition}

The approximate uniformity condition for the unitary ensemble $\calE_u = \{V\}$ implies that the unitary ensemble can generate a state ensemble $\calE_s = \{V\ket{0}^{\otimes N}\}$ with approximate uniformity.
Specifically, the $k$-th moment of the state ensemble is given by
\begin{equation}
\begin{aligned}
\rho_{\calE_s}^{(k)}&=\Phi_{\calE_u}^{(k)}(\ketbra{0}^{\otimes N})=2^{-Nk}\left[\sum_T r(T)+\sum_{\vec{T}_1,\vec{T_2}}f(\vec{T}_1,\vec{T}_2)r(\vec{T}_1)\right].
\end{aligned}
\end{equation}
Its deviation from uniformity has an upper bound,
\begin{equation}\label{eq:uniformity_unitary_implies_state}
\Delta(\rho_{\calE_s}^{(k)}) =\sum_{\vec{T}_1}\Big|\sum_{\vec{T_2}}f(\vec{T}_1,\vec{T}_2)\Big|\le\sum_{\vec{T}_1,\vec{T_2}}|f(\vec{T}_1,\vec{T}_2)|=\epsilon.
\end{equation}

We remark that a state ensemble with approximate uniformity is a state $k$-design with bounded additive error. 
\begin{theorem}\label{thm:uniformity_implies_stabilizer_state_design}
Consider a stabilizer state ensemble satisfying the $\epsilon$-approximate uniformity condition. The state ensemble forms a stabilizer state design with bounded additive error $\epsilon+k2^{-N+k}$ assuming $N\geq k + \log k - 2$.
\end{theorem}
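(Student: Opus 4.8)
The plan is to estimate $\onenorm{\rho_{\calE}^{(k)} - \rho_{\C}^{(k)}}$ directly, by writing $\rho_{\calE}^{(k)}$ as its spatially uniform piece plus the non-uniform remainder supplied by the uniformity condition, and then bounding the distance of each piece to $\rho_{\C}^{(k)}$ via the triangle inequality.

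First I would handle the uniform piece. The uniform part of $\rho_{\calE}^{(k)}$ equals $2^{-Nk}\sum_T r(T)$, while $\rho_{\C}^{(k)} = Z_{N,k}^{-1}\sum_T r(T)$ (Sec.~\ref{sec:prelim_clifford}); the two are proportional. Because $\rho_{\C}^{(k)}$ is a genuine density matrix, the operator $\sum_T r(T) = Z_{N,k}\,\rho_{\C}^{(k)}$ is positive semidefinite with trace $Z_{N,k}$, so $\onenorm{\sum_T r(T)} = Z_{N,k}$ with no sign cancellations to track; hence $\onenorm{2^{-Nk}\sum_T r(T) - \rho_{\C}^{(k)}} = |2^{-Nk}Z_{N,k} - 1|$. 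The lower bound $Z_{N,k} = 2^N\prod_{i=0}^{k-2}(2^N+2^i) \ge 2^{Nk}$ and the upper bound $Z_{N,k}\le 2^{Nk}(1+k2^{k-N})$ (the latter valid under the hypothesis $N\ge k+\log k-2$, recalled in Sec.~\ref{sec:prelim_clifford}) then give that this piece contributes at most $k\,2^{-N+k}$.

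Next I would bound the non-uniform remainder $2^{-Nk}\sum_{\vec T}f(\vec T)\,r(\vec T)$. Applying the triangle inequality and the multiplicativity of the Schatten-$1$ norm under tensor products, $\onenorm{r(\vec T)} = \prod_x \onenorm{\mathrm{r}(T_x)}^{n_x}$, where $n_x$ is the number of qubits in subregion $x$ and $\sum_x n_x = N$; since every stochastic Lagrangian subspace satisfies $\onenorm{\mathrm{r}(T)} = 2^{k-\dim\rmN}\le 2^k$, this yields $\onenorm{r(\vec T)}\le 2^{Nk}$. Consequently $\onenorm{2^{-Nk}\sum_{\vec T}f(\vec T)\,r(\vec T)}\le \sum_{\vec T}|f(\vec T)| = \Delta(\rho_{\calE}^{(k)})\le\epsilon$ by hypothesis. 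Adding the two contributions gives $\onenorm{\rho_{\calE}^{(k)}-\rho_{\C}^{(k)}}\le\epsilon + k\,2^{-N+k}$, which is the claim.

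I do not anticipate a genuine obstacle in this argument: it amounts to unpacking the definition of approximate uniformity together with the trace-norm formula for commutant operators. The only points needing a moment of care are (i) noticing that $\sum_T r(T)$ is positive, so its trace norm is simply its trace $Z_{N,k}$ (avoiding a more delicate cancellation estimate), and (ii) pinning down the two-sided bound on $Z_{N,k}/2^{Nk}$ -- both are immediate from facts established in Sec.~\ref{sec:prelim_clifford}. The real difficulty lies not in this theorem but in later establishing that the relevant two-layer, log-size Clifford circuits actually satisfy the approximate uniformity condition with small $\epsilon$.
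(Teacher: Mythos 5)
Your proof is correct and follows essentially the same route as the paper's: decompose $\rho_{\calE}^{(k)}$ into the uniform part (a scalar multiple of $\rho_{\C}^{(k)}$, with $|2^{-Nk}Z_{N,k}-1|\le k2^{k-N}$) plus the remainder bounded by $\sum_{\vec T}|f(\vec T)|\cdot 2^{-Nk}\onenormsm{r(\vec T)}\le\epsilon$ via the triangle inequality and $\onenormsm{\mathrm{r}(T)}\le 2^k$. Your observation that $\sum_T r(T)$ is positive so its trace norm equals its trace is a clean way of phrasing the paper's use of $\lVert\rho_{\C}^{(k)}\rVert_1=1$; otherwise the arguments are identical.
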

\begin{proof}
The $k$-th moment of the state ensemble with approximate uniformity is given by
\begin{align}
    \rho^{(k)}_{\calE} =2^{-Nk}\left[\sum_T r(T)+\sum_{\vec{T}}f(\vec{T})r(\vec{T})\right].
\end{align}
The additive error has an upper bound (assuming $N\geq k + \log k - 2$)
\begin{equation}
\begin{aligned}
&\lVert\rho_\calE^{(k)}-\rho_{\C}^{(k)}\rVert_1 \leq |2^{-Nk}Z_{N,k}-1|\cdot\lVert\rho_{\C}^{(k)}\rVert_1+2^{-Nk}\sum_{\vec{T}}|f(\vec{T})|\cdot\lVert r(\vec{T})\rVert_1 \leq k2^{-N+k}+\epsilon,
\end{aligned}
\end{equation}
where we use the triangle inequality, and the bound on the trace norm $\onenormsm{\mathrm{r}(T)} = 2^{k - \dim\rmN} \leq 2^k$.
\end{proof}

Although the approximate uniformity in general does not imply the ensemble to form designs under other approximation measures, it is a crucial property that allows one to convert the ensemble to an approximate design when augmented with magic gates (shown in Sec.~\ref{sec:rel_error} and~\ref{sec:add_error}).
In what follows, we show that a Clifford ensemble with approximate uniformity can be generated in low-depth circuits.

The result stems from a gluing lemma; one can obtain a Clifford ensemble with the approximate uniformity condition by gluing small blocks of random Clifford unitaries together.
\begin{lemma}[Gluing lemma for Clifford ensembles with approximate uniformity]\label{thm:uniformity_gluing}
Consider a system of three partitions $A$, $B$, and $C$. 
Clifford unitaries $V_{BC}$ are drawn from an (exact) Clifford $k$-design, and $V_{AB}$ are drawn from an ensemble of Clifford unitaries on $AB$ satisfying the $\epsilon'$-approximate uniformity condition with $\epsilon'<1$. 
\begin{equation*}
\includegraphics[width=11.5cm]{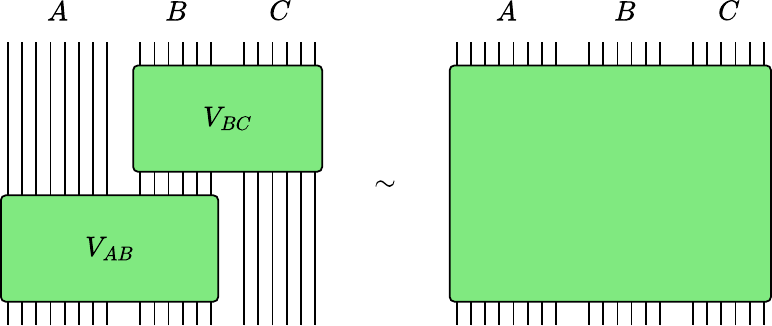}
\end{equation*}
Assume that $T$ is aligned in the overlapping region $B$, that is, the $k$-fold channel $\Phi^{(k)}_{AB}$ of $\{V_{AB}\}$ can be written as 
\begin{equation}
\Phi^{(k)}_{AB} = 2^{-N_{AB}k}\Big(\sum_{T_3}|T_3)(T_3|_{AB} +\sum_{T_3,T_4,\vec{T}_5,\vec{T}_6}f\big((\vec{T}_5,T_3),(\vec{T}_6,T_4)\big)|\vec{T}_5)(\vec{T}_6|_A\otimes |T_3)(T_4|_B\Big).
\end{equation}
Then, the unitary ensemble $\calE = \{ V_{BC} V_{AB} \}$ satisfies the uniformity condition up to the error
\begin{equation}
    \epsilon \le\epsilon'+2^{-N_B+O(k^2)}.
\end{equation}
\end{lemma}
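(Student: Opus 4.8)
The plan is to compute $\Phi^{(k)}_{\calE} = \Phi^{(k)}_{BC} \circ \Phi^{(k)}_{AB}$ directly using the decomposition of $\Phi^{(k)}_{AB}$ given in the hypothesis and the exact formula for $\Phi^{(k)}_{BC}$ from Eq.~\eqref{eq:clifford_commutant_weingarten_def}. The key point is that composing the two channels amounts to contracting operators in the Clifford commutant on the overlap region $B$: a term $|\vec{T}_5)(\vec{T}_6|_A \otimes |T_3)(T_4|_B$ from $\Phi^{(k)}_{AB}$, when fed into $\Phi^{(k)}_{BC} = \sum_{T_7,T_8}\Wg_\C(T_7,T_8)|T_7)(T_8|_{BC}$, produces a factor $\Tr[r(T_8)_{BC}^\top\,(r(T_4)_B \otimes r(T_7')_C)]$ — but since $\Phi^{(k)}_{BC}$ only acts on $BC$ and $B$ carries an aligned label, this contraction reduces on $B$ to an inner product $\bbrakket{T_4}{\cdot}$ on each of the $N_B$ qubits, i.e.\ a factor scaling like $\bbrakket{T_4}{T_8}^{N_B} = 2^{(k-|T_4,T_8|)N_B}$. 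I would first assemble these contractions carefully, tracking which terms come from the ``uniform'' piece $\sum_{T_3}|T_3)(T_3|_{AB}$ of $\Phi^{(k)}_{AB}$ versus the ``deviation'' piece weighted by $f$.

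\textbf{Main steps.} (1) Write $\Phi^{(k)}_{\calE}$ as a sum of four contributions according to the product (uniform/deviation of $\Phi^{(k)}_{AB}$) $\times$ (the full $\Phi^{(k)}_{BC}$). (2) For the contribution where $\Phi^{(k)}_{AB}$ is in its uniform piece, show that $\Phi^{(k)}_{BC}\circ(2^{-N_{AB}k}\sum_{T_3}|T_3)(T_3|_{AB})$ reproduces exactly the ideal uniform term $2^{-Nk}\sum_T |T)(T|$ of the target form plus a controlled correction: here one uses that $V_{BC}$ is an \emph{exact} Clifford $k$-design over $BC$, so composing a globally-aligned $|T_3)(T_3|$ with it again yields a globally-aligned object, up to normalization mismatches between $\sum_{T_3}$ and the correctly-normalized $\sum_T$. (3) For the deviation piece, bound $\sum |f(\cdots)|$ times the $\ell_1$-mass of the resulting superoperator coefficients. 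The crucial quantitative input is that when $T_4 \ne T_8$, the overlap factor $2^{(k-|T_4,T_8|)N_B}$ is suppressed relative to the aligned case $T_4 = T_8$ by a factor $2^{-N_B}$ per unit of $|T_4,T_8|\ge 1$; summing over the $|\Sigma_{k,k}| = 2^{O(k^2)}$ possible labels and using the bound on the Clifford Weingarten function (which I would invoke as a standard consequence of Eq.~\eqref{eq:clifford_commutant_weingarten_def}, with entries of size $2^{O(k^2)}$ after the $2^{-Nk}$ normalization is stripped) gives the stated $2^{-N_B + O(k^2)}$. (4) Combine: the deviation of the composed channel from uniformity is at most the deviation $\epsilon'$ inherited from $\Phi^{(k)}_{AB}$ (the ``$A$-part'' of $\vec{T}_5,\vec{T}_6$ is untouched by $\Phi^{(k)}_{BC}$, so that mass passes through, possibly with the $B$-labels now re-aligned or not) plus the new $2^{-N_B+O(k^2)}$ mass generated by mismatched overlaps on $B$, yielding $\epsilon \le \epsilon' + 2^{-N_B + O(k^2)}$.

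\textbf{Main obstacle.} The hard part will be step (3)--(4): carefully bookkeeping how the ``non-aligned'' mass is generated and re-summed. Specifically, one must argue that feeding an $A$-nonuniform but $B$-aligned term through $\Phi^{(k)}_{BC}$ either keeps it $B$-aligned (contributing to $\epsilon'$-type mass, which is already accounted for) or makes it $B$-nonaligned, and in the latter case the overlap suppression $2^{-N_B}$ dominates the entropic factor $2^{O(k^2)}$ from the number of stochastic Lagrangian subspaces and the size of the Weingarten entries. There is a subtlety in that $\Phi^{(k)}_{BC}$ is not a projector, so composing it with $|T_3)(T_3|_{AB}$ does not simply ``lock in'' the label; one needs the exact-design property to control the back-action. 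I expect the cleanest route is to expand everything in the $|T)(T'|$ basis over each spatial block, use $\bbrakket{T}{T'} = 2^{\dim(T\cap T')}$ and the Weingarten orthogonality relation to collapse the $B$-sums, and then isolate the diagonal ($T_4=T_8$, giving the main uniform term) from the off-diagonal (giving the $2^{-N_B+O(k^2)}$ error). The normalization mismatch in step (2) — controlled by $|2^{-N_{AB}k}Z_{N_{AB},k} - 1| \le k2^{-N_{AB}+k}$ type bounds, as already used in the proof of Theorem~\ref{thm:uniformity_implies_stabilizer_state_design} — is routine and can be absorbed into the $O(k^2)$ exponent.
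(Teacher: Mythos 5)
Your proposal is correct and follows essentially the same route as the paper's proof: compose the two channels, contract on $B$, split into the uniform piece (diagonal $T_1=T_2=T_3$, giving the target term via the diagonal Clifford Weingarten asymptotics) versus the off-diagonal/deviation pieces, and bound the latter by the $2^{-N_B|T_2,T_3|}$ overlap suppression, the off-diagonal Weingarten bound, and the $2^{O(k^2)}$ entropic factor, with the $f$-weighted mass passing through as $\epsilon'$. The subtleties you flag (non-projector back-action, normalization mismatch) are handled in the paper exactly as you anticipate, via the asymptotics $|\Wg_{\C}(T,T)-2^{-N_{BC}k}|\le 2^{-N_{BC}k-N_{BC}+O(k^2)}$ and $|\Wg_{\C}(T_1,T_2)|\le 2^{-N_{BC}k-N_{BC}+O(k^2)}$ for $T_1\neq T_2$.
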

\begin{proof}
The $k$-fold channel of ensemble $\calE$ is given by
\begin{equation}
\begin{aligned}
\Phi_\calE^{(k)}&=\Phi_{BC}^{(k)} \circ \Phi_{AB}^{(k)} 
= 2^{-N_{AB}k}\sum_{T_{1,2,3}}\Wg_{\C}(T_1,T_2)2^{N_B(k-|T_2,T_3|)} |T_1)(T_2|_C\otimes\\
&\left[|T_3)(T_3|_A\otimes |T_1)(T_3|_B+\sum_{T_4,\vec{T}_5,\vec{T}_6}f\big((\vec{T}_5,T_3),(\vec{T}_6,T_4)\big)|\vec{T}_5)(\vec{T}_6|_A\otimes |T_1)(\vec{T_4}|_B\right] ,
\end{aligned}
\end{equation}
as shown pictorially below:
\begin{equation*}
\includegraphics[height=5.5cm]{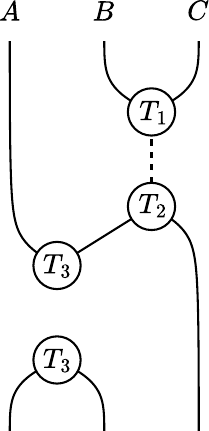}\qquad\qquad\qquad
\includegraphics[height=5.5cm]{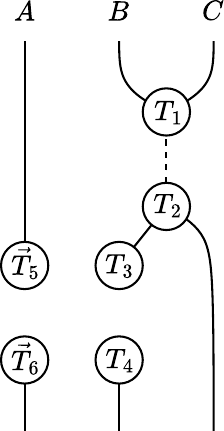}
\end{equation*}
where the dashed lines denote the Clifford Weingarten function $\Wg_{\C}(T_1,T_2)$ on $BC$, and the solid lines connecting two variables $T_2$ and $T_3$ denote $\Tr[r(T_2)_B^{\top}r(T_3)_B]=2^{N_B(k-|T_2,T_3|)}$.

To show that $\Phi_\calE^{(k)}$ satisfies the uniformity condition, we divide the sum into three parts. 
The first part contains the terms without $T_4$, $\vec{T}_5$, and $\vec{T}_6$ satisfying $T_1=T_2=T_3$. The contribution of such configurations to $\Phi_\calE^{(k)}$ is
\begin{equation}
2^{-N_{A}k}\sum_{T}\Wg_{\C}(T,T)|T)(T|.
\end{equation}
The contribution to the error $\Delta(\Phi_\calE^{(k)})$ is bounded by
\begin{equation}
\sum_T \left|2^{N_{BC}k}\Wg_{\C}(T,T)-1\right|\leq 2^{-N_{BC}+O(k^2)},
\end{equation}
where we use the asymptotic behavior $|\Wg_{\C}(T,T)-2^{-N_{BC}k}|\le2^{-N_{BC}k-N_{BC}+O(k^2)}$ of the Clifford Weingarten function in Appendix~\ref{sec:clifford_commutant_wg_asymptotics}.

The second part contains the terms without $T_4$, $\vec{T}_5$, and $\vec{T}_6$, but either $T_1\neq T_2$ or $T_2\neq T_3$. The contribution of such configurations to $\Phi_\calE^{(k)}$ is
\begin{gather}
2^{-N_{A}k-N_B\abs{T_2,T_3}}\sum_{T_1\neq T_3}\sum_{T_2}\Wg_{\C}(T_1,T_2)
|T_3)(T_3|_A\otimes |T_1)(T_3|_B\otimes |T_1)(T_2|_C.
\end{gather}
The contribution to the error $\Delta(\Phi_\calE^{(k)})$ is bounded by
\begin{equation}
\sum_{T_1\neq T_3}\sum_{T_2}2^{N_{BC}k}|\Wg_{\C}(T_1,T_2)|2^{-N_B|T_2,T_3|}\leq 2^{-N_B+O(k^2)},
\end{equation}
where we use (1) if $T_1\neq T_2$, the Weingarten function are suppressed as $|\Wg_{\C}(T_1,T_2)|\leq 2^{-N_{BC}k-N_{BC}+O(k^2)}$ as shown in Appendix~\ref{sec:clifford_commutant_wg_asymptotics}; (2) if $T_2\neq T_3$, the factor $2^{-N_B|T_2,T_3|}$ is suppressed as $|T_2,T_3|\geq 1$.

The third part contains the terms with $T_4$, $\vec{T}_5$, and $\vec{T}_6$. The contribution to the error $\Delta(\Phi_\calE^{(k)})$ is bounded by 
\begin{equation}
\begin{aligned}
&\sum_{T_3,T_4,\vec{T}_5,\vec{T}_6}|f\big((\vec{T}_5,T_3),(\vec{T}_6,T_4)\big)|\sum_{T_1,T_2}2^{N_{BC}k}|\Wg_{\C}(T_1,T_2)|2^{-N_B|T_2,T_3|} \\
\leq& \sum_{T_3,T_4,\vec{T}_5,\vec{T}_6}|f\big((\vec{T}_5,T_3),(\vec{T}_6,T_4)\big)|(1+2^{-N_B+O(k^2)}) \\
\le&\epsilon'(1+2^{-N_B+O(k^2)}).
\end{aligned}
\end{equation}
In the second line, the first term (in the small bracket) is from the terms with $T_1=T_2=T_3$. 
The second term comes from the terms that $T_1$, $T_2$, and $T_3$ are not aligned.
In this case, these terms has a suppression either from $|\Wg_{\C}(T_1,T_2)|$ if $T_1\neq T_2$ or from $2^{-N_B|T_2,T_3|}$ if $T_2\neq T_3$.

To sum up, the deviation from uniformity is bounded by
\begin{equation}
\Delta(\Phi_\calE^{(k)})\le\epsilon'+2^{-N_B+O(k^2)},
\end{equation}
where we use $\epsilon'<1$.
\end{proof}

Using the gluing lemma above, we can construct a global Clifford ensemble with approximate uniformity using two layers of Clifford $k$-design gates acting on qubit blocks.
\begin{theorem}[Approximate uniformity from two-layer Clifford circuits]
\label{thm:uniformity_log_depth}
Consider an ensemble $\calE_u = \{V\}$ of two-layer Clifford circuits with each gate drawn independently from an exact Clifford $k$-design. 
The Clifford ensemble $\calE_u$ and the stabilizer state ensemble $\calE_s = \{V\ket{0}^{\otimes N}\}$ satisfy the $\epsilon$-approximate uniformity condition with $\epsilon=N2^{-\xi-\log\xi+O(k^2)}$, where $\xi$ is the size of the smallest overlapping regions in the two-layer Clifford circuit.
\end{theorem}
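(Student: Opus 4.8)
The plan is to prove the $\epsilon$-approximate uniformity condition by induction on the number of $2\xi$-qubit blocks, using a divide-and-conquer decomposition together with a ``merge'' variant of the Gluing Lemma (Lemma~\ref{thm:uniformity_gluing}); the stabilizer-state claim then follows from the unitary claim via the inequality $\Delta(\rho^{(k)}_{\calE_s})\le\Delta(\Phi^{(k)}_{\calE_u})$ of Eq.~\eqref{eq:uniformity_unitary_implies_state}.

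\textbf{Base case.} I would first record that a single gate drawn from an exact Clifford $k$-design on $m$ qubits satisfies the $\epsilon_0$-approximate uniformity condition with $\epsilon_0\le 2^{-m+O(k^2)}$: writing its channel as $\Phi^{(k)}_\C=\sum_{T_1,T_2}\Wg_\C(T_1,T_2)\,|T_1)(T_2|$, the uniform part is $2^{-mk}\sum_T|T)(T|$, the non-uniform coefficients are $f(T_1,T_2)=2^{mk}\Wg_\C(T_1,T_2)-\delta_{T_1,T_2}$, and the Weingarten asymptotics of Appendix~\ref{sec:clifford_commutant_wg_asymptotics} ($|\Wg_\C(T,T)-2^{-mk}|\le 2^{-mk-m+O(k^2)}$, and $|\Wg_\C(T_1,T_2)|\le 2^{-mk-m+O(k^2)}$ for $T_1\neq T_2$) together with $|\Sigma_{k,k}|=2^{O(k^2)}$ give $\Delta=\sum_{T_1,T_2}|f(T_1,T_2)|\le 2^{-m+O(k^2)}$. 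As the gate acts on a single region, $|T)(T|$ and $|T_1)(T_2|$ factor across any finer partition, so the stronger ``$T$ aligned on $B$'' form required of the Gluing Lemma's approximately uniform input holds automatically for whatever overlap window is used.

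\textbf{Merge/inductive step.} Cutting the two-layer circuit $V$ across any block boundary, one has $V=h_{\mathrm{mid}}\,(V_L\otimes V_R)$, where $V_L,V_R$ are the two-layer circuits on the two halves and $h_{\mathrm{mid}}$ is the single bridge gate straddling the cut --- this uses only that all blocks and all bridges lying on opposite sides of the cut have disjoint supports, hence commuting channels, so that at the level of the $k$-fold channel $\Phi^{(k)}_V=\Phi^{(k)}_{h_{\mathrm{mid}}}\circ(\Phi^{(k)}_{V_L}\otimes\Phi^{(k)}_{V_R})$. Assuming inductively $\Delta(\Phi^{(k)}_{V_L})=\Delta_L$ and $\Delta(\Phi^{(k)}_{V_R})=\Delta_R$, I would prove the merge bound $\Delta(\Phi^{(k)}_V)\le\Delta_L+\Delta_R+2^{-\xi+O(k^2)}$ by plugging in the approximately uniform forms of $\Phi^{(k)}_{V_L},\Phi^{(k)}_{V_R}$ and the exact-design commutant expansion of $h_{\mathrm{mid}}$ and tracing through the two overlap atoms: the ``doubly uniform'' contribution $\sum_{T_L,T_R}|T_L)(T_L|_L\otimes|T_R)(T_R|_R$ picks up a factor $2^{\xi(k-|S',T_L|)}2^{\xi(k-|S',T_R|)}$, maximal precisely when $T_L=T_R$, so the $T_L\neq T_R$ terms are suppressed by $2^{-\xi}$ and the surviving term collapses onto the genuinely uniform $2^{-Nk}\sum_T|T)(T|$ with the correct normalization, while off-diagonal $\Wg_\C$ contributes $2^{-2\xi+O(k^2)}$ and the $f_L,f_R$ pieces contribute $\le\Delta_L+\Delta_R$ up to negligible cross-terms. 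One also checks the output is again of the form needed for the next merge, which it is, since every overlap window is a single atom.

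\textbf{Summation and obstacle.} Running the recursion as a balanced binary tree over the $\Theta(N/\xi)$ blocks, the deviations add: $\Theta(N/\xi)$ leaves each $\le 2^{-2\xi+O(k^2)}$ and $\Theta(N/\xi)$ internal merges each $\le 2^{-\xi+O(k^2)}$, giving $\Delta(\Phi^{(k)}_{\calE_u})\le (N/\xi)\,2^{-\xi+O(k^2)}=N\,2^{-\xi-\log\xi+O(k^2)}$; the same bound then transfers to $\calE_s$ as above, and any one-sided coverage of the $O(\xi)$ extreme qubits causes no difficulty since those qubits receive an approximately uniform $r(T)$ directly from a single design gate. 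The hard part will be the merge step: the stated Gluing Lemma glues an \emph{exact-design} gate onto an approximately uniform ensemble, but here one must instead fuse \emph{two} approximately uniform ensembles through a shared bridge, and the tensor product of two approximately uniform ensembles is \emph{not} approximately uniform (it permits independent $T$'s on the two halves). The content is precisely that the bridge --- being an exact $k$-design on $2\xi$ qubits, hence with an off-diagonally-$2^{-2\xi}$-suppressed Weingarten function --- nonetheless forces a single common $T$ across both halves at cost only $2^{-\xi+O(k^2)}$; granting this, the base case, the structural bookkeeping, and the error sum are routine.
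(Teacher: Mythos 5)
Your proposal takes a genuinely different route from the paper, and it is worth comparing the two before addressing the gap. The paper's own proof is a two-line argument: apply the Gluing Lemma (Lemma~\ref{thm:uniformity_gluing}) sequentially, once per gate, accruing $2^{-\xi+O(k^2)}$ error each of $N/\xi$ times. Your binary-tree decomposition $V=h_{\mathrm{mid}}(V_L\otimes V_R)$ is structurally valid (it only uses that $h_{\mathrm{mid}}$ commutes with every other second-layer gate, so it can be pulled to the front of the second layer), and your diagnosis of why Lemma~\ref{thm:uniformity_gluing} does not literally cover the required step is correct and, in fact, applies to the paper's sequential scheme as well: any ordering that respects the temporal structure of the brickwork eventually forces you to attach a bridge gate that overlaps both the already-built approximately uniform ensemble \emph{and} a freshly added disjoint block, and the tensor product of those two pieces does not satisfy the aligned-input hypothesis of Lemma~\ref{thm:uniformity_gluing}. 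So your base case, your bookkeeping of the alignment condition on the $\xi$-sized overlap atoms, and your error sum $(N/\xi)\,2^{-\xi+O(k^2)}=N2^{-\xi-\log\xi+O(k^2)}$ are all consistent with the target bound.

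The gap is exactly where you flag it: the merge lemma is the entire mathematical content of the theorem, and you assert it rather than prove it. In the paper, the analogous content is carried by the full proof of Lemma~\ref{thm:uniformity_gluing} (the three-part split into aligned terms, misaligned terms suppressed by $\Wg_{\C}$ or by $2^{-N_B|T_2,T_3|}$, and the $f$-terms), which occupies a page of Weingarten estimates. Your merge needs the same estimates but with two independent overlap windows, and the pieces you wave at --- the cross terms where one half sits in its uniform part and the other in its $f$-part, the recombination of the normalization $2^{-N_{L}k}2^{-N_Rk}\cdot 2^{2\xi k}\Wg_{\C}$ into $2^{-Nk}$ on the diagonal, and the verification that the merged output is again a single-$T$-per-atom expansion so the next level of the tree applies --- are precisely the places where a sign error or a miscounted factor of $2^{\xi k}$ would sink the bound. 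To close the proof you should state and prove the merge lemma explicitly, in parallel with the proof of Lemma~\ref{thm:uniformity_gluing}: expand $\Phi^{(k)}_{h_{\mathrm{mid}}}=\sum_{T,T'}\Wg_{\C}(T,T')|T)(T'|$, contract against $\bigl(\sum_{T_L}|T_L)(T_L|+\sum f_L\bigr)\otimes\bigl(\sum_{T_R}|T_R)(T_R|+\sum f_R\bigr)$, and bound the four resulting classes of terms using $|\Wg_{\C}(T,T)-2^{-2\xi k}|\le 2^{-2\xi k-2\xi+O(k^2)}$, $|\Wg_{\C}(T,T')|\le 2^{-2\xi k-2\xi+O(k^2)}$ for $T\neq T'$, and $\bbrakket{T}{T_L}\le 2^{\xi(k-1)}$ for $T\neq T_L$. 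Granting that computation, the rest of your argument goes through.
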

\begin{proof}
We bound the deviation from uniformity by applying the gluing lemma sequentially for the two-layer Clifford circuit.
The error is accumulated by $2^{-\xi + O(k^2)}$ for each application.
We need to apply the gluing lemma at most $N/\xi$ times, where $\xi$ is the size of the smallest overlapping regions in the two-layer circuit.
Thus, the deviation of the $k$-fold channel $\Phi_{\calE_u}^{(k)}$ from uniformity has an upper bound $\Delta(\Phi_{\calE_u}^{(k)}) \leq N2^{-\xi-\log\xi+O(k^2)}$.
This implies the state ensemble $\calE_s = \{V\ket{0}^{\otimes N}\}$ satisfies the approximate uniformity up to error $\epsilon = N2^{-\xi-\log\xi+O(k^2)}$.
\end{proof}
According to Theorem~\ref{thm:uniformity_log_depth}, to achieve the $\epsilon$-approximate uniformity, it suffices to have overlapping regions of size $\xi = \log (N/\epsilon) + O(k^2)$.
Namely, one can realize it in shallow 1D Clifford circuits of depth $d = O(\log (N/\epsilon)+k^2)$~\cite{Bravyi:2020xdv}.

As a corollary, since the uniformity condition implies a stabilizer $k$-design with bounded additive errors (Theorem~\ref{thm:uniformity_implies_stabilizer_state_design}), one can generate an additive-error stabilizer state design using log-depth Clifford circuits\footnote{Here, the assumption in Theorem~\ref{thm:uniformity_implies_stabilizer_state_design} that $N > k + \log k -2$ is satisfied automatically as $N \ge \xi \ge O(k^2)$.}.
\begin{corollary}[Additive-error stabilizer state designs from low-depth Clifford circuits]
\label{thm:state_additive_clifford_design}
Let $V$ be a two-layer brickwork Clifford circuit, where each gate is drawn from an exact Clifford $k$-design. 
Let $\xi$ be the smallest size of the overlapping regions of the Clifford unitaries. 
Then the state ensemble $\{V\ket{0}^{\otimes N}\}$ forms an approximate stabilizer state $k$-design with additive error $\varepsilon\leq N2^{-\xi-\log\xi+O(k^2)}$.

In other words, one can achieve $\varepsilon$ additive error for $\xi=\log (N/\varepsilon)+O(k^2)$, or equivalently, in depth $O(\log (N/\varepsilon)+k^2)$ using 1D circuits or in depth $O(\log\log (N/\varepsilon)+\log k)$ using all-to-all circuits and $O(N(\log (N/\varepsilon)+k^2))$ ancillas~\cite{Jiang:2019aom}.
\begin{equation*}
\includegraphics[width=9.8cm]{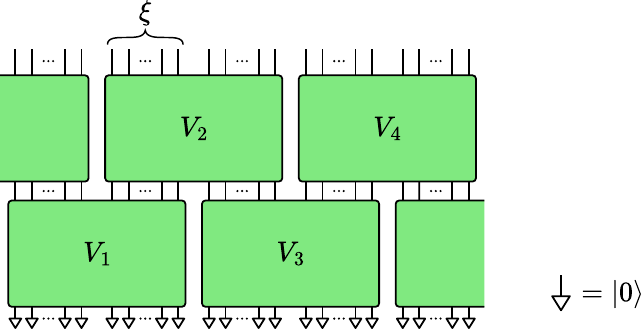}   
\end{equation*}
\end{corollary}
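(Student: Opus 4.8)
The plan is to derive this Corollary directly by composing the two results established just above: Theorem~\ref{thm:uniformity_log_depth}, which bounds the deviation from uniformity of a two-layer brickwork Clifford circuit, and Theorem~\ref{thm:uniformity_implies_stabilizer_state_design}, which turns such a bound into additive closeness to $\rho_\C^{(k)}$. First I would apply Theorem~\ref{thm:uniformity_log_depth} to the ensemble $\calE_s=\{V\ket{0}^{\otimes N}\}$: since every gate in the two-layer brickwork circuit is drawn independently from an exact Clifford $k$-design and the smallest overlap region has size $\xi$, its $k$-th moment $\rho_{\calE_s}^{(k)}$ satisfies the $\epsilon'$-approximate uniformity condition with $\epsilon'=N2^{-\xi-\log\xi+O(k^2)}$.

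Next I would feed this bound into Theorem~\ref{thm:uniformity_implies_stabilizer_state_design}: provided $N\geq k+\log k-2$, the ensemble $\calE_s$ forms a stabilizer state $k$-design with additive error at most $\epsilon'+k2^{-N+k}$. It then only remains to observe that the second term is dominated by the first and can be absorbed into the $O(k^2)$ in the exponent: the overlap region is a subsystem, so $\xi\leq N$, hence $N2^{-\xi-\log\xi+O(k^2)}\geq 2^{-N+O(k^2)}\geq k2^{-N+k}$ once the constant hidden in $O(k^2)$ is at least $1$ (using $k+\log k\leq k^2$). This yields $\varepsilon\leq N2^{-\xi-\log\xi+O(k^2)}$, and solving for $\xi$ shows that $\xi=\log(N/\varepsilon)+O(k^2)$ suffices.

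Finally, for the ``in other words'' depth statements I would translate the block size $\xi$ into circuit depth using standard compilations of Clifford unitaries. Each of the two layers is a disjoint product of Cliffords on $2\xi$ qubits; an arbitrary $m$-qubit Clifford can be realized by a $1$D nearest-neighbor circuit of depth $O(m)$~\cite{Bravyi:2020xdv}, so the whole circuit has $1$D depth $O(\xi)=O(\log(N/\varepsilon)+k^2)$. With all-to-all connectivity and $O(m)$ ancillas per block each Clifford compresses to depth $O(\log m)$~\cite{Jiang:2019aom}, giving total depth $O(\log\xi)=O(\log\log(N/\varepsilon)+\log k)$ at the cost of $O(N\xi)=O(N(\log(N/\varepsilon)+k^2))$ ancillas.

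I do not anticipate a genuine obstacle here, since all the mathematical content is already carried by the two preceding theorems; the only points requiring care are (i) checking that the contribution $k2^{-N+k}$ is indeed subdominant to $N2^{-\xi-\log\xi+O(k^2)}$ in the stated regime, so that it can be folded into the $O(k^2)$ term, and (ii) quoting the right depth bounds for compiling the $2\xi$-qubit Clifford blocks in the $1$D and the all-to-all-with-ancilla settings.
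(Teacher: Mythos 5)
Your proposal is correct and follows exactly the paper's route: the corollary is obtained by composing Theorem~\ref{thm:uniformity_log_depth} (approximate uniformity of the two-layer Clifford circuit) with Theorem~\ref{thm:uniformity_implies_stabilizer_state_design} (uniformity implies an additive-error stabilizer design), absorbing the subdominant $k2^{-N+k}$ term into the $2^{O(k^2)}$ prefactor, and then compiling the $2\xi$-qubit Clifford blocks via the standard 1D and all-to-all constructions. The only slip is in the ancilla bookkeeping for the all-to-all case: a Clifford on $m$ qubits needs $O(m^2)$ (not $O(m)$) ancillas in the construction of~\cite{Jiang:2019aom}, which is what actually yields the stated total of $O(N\xi)$ ancillas over the $N/(2\xi)$ blocks.
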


We note that a recent paper~\cite{Grevink:2025nez} proved that two-layer Clifford circuits operating on log-size qubit blocks can generate additive-error stabilizer state $k$-designs for $k = 4,5$. Our proof is different and shows that such circuits can in fact form designs for any $k$.

Furthermore, in Appendix~\ref{app:collision_prob}, we show that the uniformity also implies that the log-depth Clifford circuits can approximate the $k$-th collision probability in global random Clifford gates up to a bounded relative error.

\section{Designs with bounded relative error}\label{sec:rel_error}
In this section, we show that the Clifford ensemble with the approximate uniformity condition, when augmented with constant-depth magic gates, can generate state and unitary $k$-designs with bounded relative error.

\subsection{State designs with bounded relative error}\label{sec:rel_error_state}
We prove that two-layer Clifford unitaries with each gate acting on $O(\log (N/\epsilon)+k^2)$ qubits followed by $k$-design random unitaries acting on disjoint qubit clusters of size $O(k\log k)$ can generate a state $k$-design with bounded relative error.
The constructed state ensemble consists of states with strictly local magic~\cite{Amy_2013,Zhang:2024jlz,Korbany:2025noe,Wei:2025irp,Andreadakis:2025mfw,Parham:2025sxj}, namely each state can be converted to a stabilizer state by a constant-depth local unitary circuit.

\begin{theorem}[Relative-error state designs from low-depth Clifford unitaries followed by constant depth magic gates]
\label{thm:state_relative_log_depth}
Consider a state ensemble $\calE = \{UV\ket{0}^{\otimes N}\}$ generated by the circuit architecture below
\begin{equation*}
\includegraphics[width=10.8cm]{Figure/state_relative_log_depth.pdf}
\end{equation*}
Here, $U = \bigotimes_{x,i} U_{x,i}$ is a product of random unitaries drawn from exact unitary $k$-designs, each gate $U_{x,i}$ acts on a disjoint subsystem $M_{x,i}$ of size $N_{x,i}$ within $M_x$. 
The two-layer Clifford unitary $V$ consists of multi-qubit Clifford unitaries drawn independently from exact Clifford $k$-designs.

The ensemble $\calE$ forms a state $k$-design with relative error 
\begin{align}
\epsilon = N2^{-\xi-\log\xi+O(k^2)}+2^{O(k^2)-N}\left[k!(1+k^22^{-\ell})\right]^{N/\ell},
\end{align}
where $\xi$ is the size of the smallest overlapping regions of the Clifford unitaries, and $\ell = \min\{N_{x,i}\} \geq k \log k$ is the size of the smallest subsystem that $U_{x,i}$ acts on.

This implies that, with $\xi=O(\log(N/\epsilon)+k^2)$ and $\ell=O(k\log k)$, the ensemble $\calE$ can form a state $k$-design with $\epsilon$ relative error, assuming $N=\omega(k^2+\log(1/\epsilon))$.

\end{theorem}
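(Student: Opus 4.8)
The plan is to leverage that $V$ alone already produces an approximately uniform stabilizer-state ensemble, and that the last layer $U$ — a tensor product of exact Haar $k$-fold twirls on the clusters — projects this ensemble onto the permutation sector, which equals $\rho_{\H}^{(k)}$ up to a tiny perturbation. By independence of $U$ and $V$ one has $\rho_\calE^{(k)} = \Phi_U^{(k)}\big(\rho_V^{(k)}\big)$, where $\rho_V^{(k)} = \Phi_V^{(k)}(\ketbra{0}^{\otimes N})$ and $\Phi_U^{(k)} = \bigotimes_{x,i}\Phi_{\H,M_{x,i}}^{(k)}$ (each $U_{x,i}$ is drawn from an exact unitary $k$-design, so its twirl is literally the Haar twirl on $M_{x,i}$). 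Theorem~\ref{thm:uniformity_log_depth} gives $\rho_V^{(k)} = 2^{-Nk}\sum_{T\in\Sigma_{k,k}} r(T) + 2^{-Nk}\sum_{\vec T} f(\vec T)\,r(\vec T)$ with $\sum_{\vec T}|f(\vec T)|\le\epsilon_1 := N2^{-\xi-\log\xi+O(k^2)}$, where each $\vec T$ is constant on the regions $M_x$, each cluster $M_{x,i}$ lies inside a single $M_x$, and $M_x = \bigsqcup_i M_{x,i}$. I would then bound $\|\rho_\calE^{(k)}-\rho_{\H}^{(k)}\|_\infty$ and conclude via the sufficient condition recorded in Sec.~\ref{sec:prelim_design_state}, namely that $\|\rho_\calE^{(k)}-\rho_{\H}^{(k)}\|_\infty\le\binom{2^N+k-1}{k}^{-1}\epsilon$ implies $\epsilon$ relative error.

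The heart of the argument is a single-cluster estimate. On a cluster $M$ of $m$ qubits the Haar twirl $\Phi_{\H,M}^{(k)}$ is the orthogonal projector — with respect to the real pairing $\langle A,B\rangle := \Tr[A^\top B]$ — onto $\mathrm{span}\{R_\pi^{(M)}:\pi\in S_k\}$, so for the real operator $r(T)_M = \mathrm{r}(T)^{\otimes m}$ we get $\Phi_{\H,M}^{(k)}(r(T)_M) = \sum_{\pi\in S_k} c_\pi R_\pi^{(M)}$ with $c_\pi = \sum_\tau (G^{-1})_{\pi\tau}\,2^{m\dim(T_\tau\cap T)}$ and $G_{\tau\rho} = 2^{m\dim(T_\tau\cap T_\rho)}$. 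If $T = T_\sigma$ is a permutation, then $r(T_\sigma)_M$ is just the replica-shuffle $R_\sigma^{(M)}$ and the twirl fixes it. If $T\notin S_k$, then $\dim(T_\tau\cap T)\le k-1$ for every $\tau$, whereas $G = 2^{mk}(I+E)$ with the $k!\times k!$ matrix $E$ having row sums $O(k^2 2^{-m})$ (at leading order only transpositions contribute), hence $<1$ for $m\ge\ell\ge k\log k$; inverting $G$ then gives $\sum_{\pi,\tau}|(G^{-1})_{\pi\tau}|\le k!\,2^{-mk}(1+O(k^2 2^{-m}))$, so that $\|\Phi_{\H,M}^{(k)}(r(T)_M)\|_\infty\le\sum_\pi|c_\pi|\le 2^{m(k-1)}\sum_{\pi,\tau}|(G^{-1})_{\pi\tau}|\le k!\,2^{-m}(1+O(k^2 2^{-m}))\le 1$. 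Thus every cluster factor is an operator-norm contraction, and a non-permutation one contracts by a factor $\approx k!\,2^{-\ell}$.

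Armed with the lemma I would split $\rho_\calE^{(k)}-\rho_{\H}^{(k)}$ into three parts and bound each in $\infty$-norm. (i) Permutation part of the uniform sum: $\Phi_U^{(k)}\big(2^{-Nk}\sum_{\sigma\in S_k} R_\sigma\big) = 2^{-Nk}\sum_\sigma R_\sigma$, and since $\rho_{\H}^{(k)} = (k!\,d_{\sym})^{-1}\sum_\sigma R_\sigma$ with $k!\,d_{\sym}\,2^{-Nk} = \prod_{j=0}^{k-1}(1+j2^{-N}) = 1+O(k^2 2^{-N})$, this part equals $(1+O(k^2 2^{-N}))\rho_{\H}^{(k)}$. (ii) Non-permutation part of the uniform sum: tensoring the single-cluster bound over the (at most $N/\ell$) clusters gives $\|\Phi_U^{(k)}(r(T))\|_\infty\le 2^{-N}[k!(1+O(k^2 2^{-\ell}))]^{N/\ell}$ for each $T\in\Sigma_{k,k}\setminus S_k$, and summing over the $|\Sigma_{k,k}| = 2^{O(k^2)}$ such $T$ yields a contribution $\le 2^{-Nk}\,2^{O(k^2)-N}[k!(1+O(k^2 2^{-\ell}))]^{N/\ell}$. (iii) Non-uniform part: since all cluster factors are contractions, $\|\Phi_U^{(k)}(r(\vec T))\|_\infty\le 1$, so this part contributes $\le 2^{-Nk}\sum_{\vec T}|f(\vec T)|\le 2^{-Nk}\epsilon_1$. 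Adding the three bounds, dividing by $\binom{2^N+k-1}{k}^{-1}\approx k!\,2^{-Nk}$ to pass to relative error, and folding the $O(k^2 2^{-N})$ of (i) into (ii), produces the claimed bound $\epsilon\le N2^{-\xi-\log\xi+O(k^2)}+2^{O(k^2)-N}[k!(1+k^2 2^{-\ell})]^{N/\ell}$. For the ``implies'' statement, $\xi = O(\log(N/\epsilon)+k^2)$ makes the first term $\le\epsilon$, and taking $\ell = \Theta(k\log k)$ with a constant large enough that $k!(1+k^2 2^{-\ell})\le 2^{c\ell}$ for some $c<1$ makes the second term $\le 2^{O(k^2)-(1-c)N}$, which is $\le\epsilon$ once $N = \omega(k^2+\log(1/\epsilon))$.

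The main obstacle is part (ii): one must check that the $2^{O(k^2)}$ entropic multiplicity of non-permutation stochastic Lagrangian sectors is genuinely overwhelmed by the product of per-cluster contractions. This is exactly where both hypotheses are used — $\ell\ge k\log k$ (so $k!\,2^{-\ell}<1$, making each of the $\sim N/\ell$ clusters a true contraction and the inversion of $G$ well-conditioned) and $N = \omega(k^2+\log(1/\epsilon))$ (so $2^{-N}[k!]^{N/\ell}$ beats $2^{O(k^2)}$). A secondary technical point is that the clusters must tile their parent regions: a qubit left untouched by $U$ while carrying a non-permutation $\mathrm{r}(T)$ would contribute an uncontrolled factor $\|\mathrm{r}(T)\|_\infty = 2^{\dim\rmN}\le 2^{\lfloor k/2\rfloor}$, so one relies on the tiling built into the architecture. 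Finally, note that the permutation-but-non-constant configurations appearing in part (iii) are \emph{not} eliminated by $U$; they survive, which is why the first error term is controlled directly by the uniformity deviation $\epsilon_1$ rather than by the $U$-layer.
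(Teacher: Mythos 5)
Your proposal is correct and follows essentially the same route as the paper's proof: the same three-part decomposition (spatially uniform permutation sector, uniform non-permutation sector, non-uniform remainder), the same per-cluster estimate — your Gram-matrix inverse $G^{-1}$ is exactly the Weingarten matrix the paper uses, with the identical bound $\sum_{\pi,\tau}|(G^{-1})_{\pi\tau}|\le 2^{-mk}k!(1+O(k^2 2^{-m}))$ — and the same conclusion via the $\infty$-norm sufficient condition against the flat spectrum of $\rho_{\H}^{(k)}$ on the symmetric subspace. The caveats you flag (clusters must tile the blocks; non-constant permutation configurations survive the twirl and are controlled only by the uniformity deviation) are handled the same way in the paper.
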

\begin{proof}

To begin, the $k$-th moment of the state ensemble generated by a two-layer Clifford unitary can be expressed as
\begin{align}
    \rho_{\calE_0}^{(k)}=2^{-Nk}\left[\sum_T r(T) + \sum_{\vec{T}}f(\vec{T})r(\vec{T})\right],
\end{align}
where $\vec{T}$ is a vector of $T_x$, each of which emerged from averaging the $k$-fold channel associated with a Clifford gate in the second layer at location $x$.
The operator $r(\vec{T})=\bigotimes_x r(T_x)_{M_x}$, where $M_x$ labels the qubit block that the Clifford gate at location $x$ acts on.
It satisfies the $\epsilon'$-approximate uniformity condition with $\epsilon'=N2^{-\xi-\log\xi+O(k^2)}$ due to Theorem~\ref{thm:uniformity_log_depth}.

Next, we apply the $k$-design random unitary gates over subsystems $M_{x,i}$ of size $N_{x,i}$ within each qubit block $M_x$. 
Recall that the $k$-fold channel of each Haar random unitary is given by
\begin{align}
    \Ens_{U \sim \H}U^{\otimes k} (\cdot) U^{\dagger \otimes k} = \sum_{\sigma, \tau\in S_k}\Wg(\sigma,\tau) r(\sigma) \Tr[r(\tau)^\top (\cdot)],
\end{align}
where each $\sigma$ and $\tau$ belongs to the permutation group $S_k$, corresponding to a subset of the Clifford commutant.
The resulting $k$-th moment takes the form 
\begin{equation}\label{eq:bounded_relative_1}
\rho_{\calE}^{(k)} = 2^{-Nk}\sum_{\vec{\sigma},\vec{\tau}}\Big[\sum_Tr(\vec{\sigma}) \Wg(\vec{\sigma},\vec{\tau})  \bbrakket{\vec{\tau}}{T}
+ \sum_{\vec{T}}f(\vec{T})r(\vec{\sigma})\Wg(\vec{\sigma},\vec{\tau})\bbrakket{\vec{\tau}}{\vec{T}}\Big],
\end{equation}
where $\vec{\sigma}$ ($\vec{\tau}$) is a vector labeling configurations of permutations $\sigma_{x,i}$,  $\tau_{x,i}\in S_{k}$ at location $x,i$, and $\Wg(\vec{\sigma},\vec{\tau}) := \prod_{x,i}\Wg(\sigma_{x,i},\tau_{x,i})$, $\bbrakket{\vec{\tau}}{T}:=\prod_{x,i}\bbrakket{\tau_{x,i}}{T}$, $\bbrakket{\vec{\tau}}{\vec{T}}:=\prod_{x,i}\bbrakket{\tau_{x,i}}{T_x}$.

To bound the relative error between $\rho_{\calE}^{(k)}$ and $\rho_{\H}^{(k)}$, we divide the contribution to $\rho_{\calE}^{(k)}$ into three parts:
\begin{itemize}
    \item The first part is given by the first term in Eq. (\ref{eq:bounded_relative_1}), choosing $T = \pi\in S_{k}$. The summation over $\vec{\tau}$ results in a delta function, i.e. $\sum_{\tau_{x,i}}\Wg(\sigma_{x,i},\tau_{x,i})\bbrakket{\tau_{x,i}}{\pi}=\delta_{\sigma_{x,i},\pi}$, which identifies each $\sigma_{x,i}$ with $\pi$. This part of the contribution is
    \begin{align}
        \rho_\rmI^{(k)} = 2^{-Nk}\sum_{\pi\in S_k} r(\pi).
    \end{align}
    \item The second part is given by the first term in Eq. (\ref{eq:bounded_relative_1}), taking $T\notin S_{k}$
    \begin{align}
    \rho_{\rmII}^{(k)} = 2^{-Nk}\sum_{\vec{\sigma},\vec{\tau}}\sum_{T\notin S_k} r(\vec{\sigma}) \Wg(\vec{\sigma},\vec{\tau}) \bbrakket{\vec{\tau}}{T},    
    \end{align}
    which has a bounded $\infty$-norm
    \begin{align}
        \lVert \rho_{\rmII}^{(k)}\rVert_\infty \leq & 2^{-Nk}\sum_{\vec{\sigma},\vec{\tau}}\sum_{T\notin S_k} \lVert r(\vec{\sigma})\rVert_\infty\abs{\Wg(\vec{\sigma},\vec{\tau})} \bbrakket{\vec{\tau}}{T} \nonumber \\
        \leq & 2^{-Nk} 2^{O(k^2)-N} \prod_{x,i} k! (1 + k^2 2^{-N_{x,i}}) \nonumber \\
        \leq & 2^{-Nk} 2^{O(k^2)-N} \left[ k! (1 + k^2 2^{-\ell})\right]^{N/
        \ell}.\label{eq:rel_error_state_rhoII}
    \end{align}
    In the second inequality, we use $\lVert r(\vec{\sigma})\rVert_\infty = 1$, $\bbrakket{\vec{\tau}}{T} \leq 2^{(k-1)N}$, and $\sum_{\sigma_{x,i},\tau_{x,i}} |\Wg(\sigma_{x,i},\tau_{x,i})|\leq 2^{-kN}k!(1+k^2 2^{-N_{x,i}})$ for $k^2<2^{N_{x,i}}$~\cite{Aharonov:2021das}.
    In the third inequality, we take $\ell = \min\{N_{x,i}\}$ to be the size of the smallest qubit cluster.
    \item The third part contains the remaining terms in the summation
    \begin{align}
        \rho_{\rmIII}^{(k)} = 2^{-Nk} \sum_{\vec{\sigma},\vec{\tau}}\sum_{\vec{T}}f(\vec{T})r(\vec{\sigma})\Wg(\vec{\sigma},\vec{\tau})\bbrakket{\vec{\tau}}{\vec{T}}. 
    \end{align}
    Its $\infty$-norm has an upper bound
    \begin{align}
        \infnormsm{\rho_{\rmIII}^{(k)}} \leq &2^{-Nk} \sum_{\vec{T}} \abs{f(\vec{T})}
    \prod_{x,i} \bigg\lVert\sum_{\sigma_{x,i},\tau_{x,i}}r(\sigma_{x,i})\Wg(\sigma_{x,i},\tau_{x,i})\bbrakket{\tau_{x,i}}{T_x}\bigg\rVert_{\infty}.\label{eq:rel_error_state_rhoIII}
    \end{align}
    Here, we consider the summation over each $\sigma_{x,i}$ and $\tau_{x,i}$. We consider two cases for $T_x \in S_k$ and $T_x \notin S_k$. 
    For $T_x = \pi \in S_k$,
    \begin{align}
    \bigg\lVert\sum_{\sigma_{x,i},\tau_{x,i}}r(\sigma_{x,i})\Wg(\sigma_{x,i},\tau_{x,i})\bbrakket{\tau_{x,i}}{\pi}\bigg\rVert_\infty = \infnorm{r(\pi)} = 1
    \end{align}
    For $T_x \notin S_k$,
    \begin{align}
    \bigg\lVert\sum_{\sigma_{x,i},\tau_{x,i}}r(\sigma_{x,i})\Wg(\sigma_{x,i},\tau_{x,i})\bbrakket{\tau_{x,i}}{T_x}\bigg\rVert_{\infty}\leq& \sum_{\sigma_{x,i},\tau_{x,i}} \abs{\Wg(\sigma_{x,i},\tau_{x,i})} \bbrakket{\tau_{x,i}}{T_x} \nn \\
    \leq& 2^{-N_{x,i}}k!(1+k^22^{-N_{x,i}})\leq 1,
    \end{align}
    where we use $|\tau_{x,i},T_{x,i}|\geq 1$ and assume that $\ell \geq k\log k$.
    Hence, we have
    \begin{align}
        \infnormsm{\rho_{\rmIII}^{(k)}} \leq & 2^{-Nk} \sum_{\vec{T}} \abs{f(\vec{T})} \leq 2^{-Nk} \epsilon'.\label{eq:rel_state_rhoIII_infnorm}
    \end{align}

\end{itemize}

Combining these results, we obtain an upper bound on the relative error
\begin{align}
    \epsilon \leq & \frac{\infnormsm{\rho_{\calE}^{(k)} - \rho_{\H}^{(k)}}}{\infnormsm{\rho_{\H}^{(k)}}} \leq \frac{\infnormsm{\rho_{\rmI}^{(k)} - \rho_{\H}^{(k)}} + \infnormsm{\rho_{\rmII}^{(k)}} + \infnormsm{\rho_{\rmIII}^{(k)}}}{\infnormsm{\rho_{\H}^{(k)}}} \nn \\
    \leq & k^22^{-Nk} + 2^{O(k^2)-N}[k!(1+k^22^{-\ell})]^{N/\ell}+ (1 + k^2 2^{-Nk})\epsilon' \nn\\
    \le& 2^{O(k^2)-N}[k!(1+k^22^{-\ell})]^{N/
    \ell}+N2^{-\xi-\log\xi+O(k^2)}.
\end{align}
\end{proof}

We remark that the magic gates do not need to act on every qubit cluster in the final layer.
In fact, one can remove a finite fraction of Haar random gates within each qubit block $M_x$, while maintaining a bounded relative error.
Consider a qubit block of size $\xi$ that a Clifford gate acts on.
Suppose no magic gates act on a subregion $C$ of size $N_C$ inside the qubit block.
We can follow a similar analysis of the relative error except that in Eq.~\eqref{eq:rel_error_state_rhoII} and~\eqref{eq:rel_error_state_rhoIII}, the summation involves terms with the non-permutation $r(T)_C$ in region $C$, which can have a greater spectral norm $\infnormsm{r(T)_C} = 2^{N_C\dim \mathrm{N}}$ than those of permutations. 
Using the upper bound of its spectral norm $\lVert r(T)_C\rVert_\infty\le2^{N_Ck/2}$, we obtain a modified multiplicative factor, e.g. in Eq.~\eqref{eq:rel_error_state_rhoII},
\begin{equation}
    2^{-\xi}\prod_i k!(1+k^22^{-\ell})\,\mapsto\, 2^{-(\xi-N_C)+N_Ck/2}\prod_i k!(1+k^22^{-\ell}).
\end{equation}
This factor is still bounded for $N_C=\frac{1}{k}\xi$ and $\ell = O(k\log k)$, indicating the possibility to remove at least $1/k$ fraction of Haar random gates. 

In the case with no magic gates in the subregion $C$, it is tempting to distinguish this circuit from Haar random states using magic measures.
To begin, one can consider magic measures on the subregion $C$.
This cannot work because Page's theorem implies that in the thermodynamic limit $N \to \infty$, the state on $C$ is exponentially close to the maximally mixed state with no magic, assuming that the size of $C$ is less than half of the system.
Alternatively, one might try to measure the magic in the entire system. 
Since our circuits form a relative error design, the expectation value of any positive operator is relatively close to the Haar random value.
This implies that, for example, the expectation value of the operator $r(T_S)$ (defined in Eq.~\eqref{eq:rTs} in Sec.~\ref{sec:rel_error_nogo}), whose logarithm gives the second stabilizer Rényi entropy, will acquire a value in our state ensemble that is relatively close to the Haar value. 
Naively, one may expect the magic in the entire system to deviate from the maximal value by an extensive amount. 
However, magic gates on a subregion are capable of generating a near-maximum stabilizer Rényi entropy when acting on a highly entangled state, as shown in Ref.~\cite{Hou:2025bau}.

\subsection{Unitary designs with bounded relative error}\label{sec:rel_error_unitary}

In this section, we show that an ensemble of Clifford unitaries that satisfies the approximate uniformity condition can be converted to a unitary design with bounded relative error by sandwiching the Clifford unitaries with constant-depth Haar random unitaries.
This allows us to construct a unitary $k$-design with $\epsilon$ relative error using one-dimensional circuits with a total depth $O(\log (N/\epsilon)) + 2^{O(k\log k)}$ or using all-to-all circuits with depth $O(\log\log (N/\epsilon)) + 2^{O(k\log k)}$.

\begin{theorem}[Relative-error unitary designs from low-depth Clifford unitaries sandwiched by constant depth magic gates]
Consider an ensemble $\calE = \{U V U'\}$ of unitary circuits given by the architecture below
\begin{equation*}
\includegraphics[width=8.2cm]{Figure/unitary_relative_log_depth.pdf}
\end{equation*}
Here, $U = \bigotimes_{x,i} U_{x,i}$ ($U' = \bigotimes_{x',i'} U'_{x',i'}$) is a product of random unitaries drawn from exact $k$-design acting on disjoint clusters of qubits. Each gate $U_{x,i}$ ($U'_{x',i'}$) acts on a qubit cluster $M_{x,i}$ ($M'_{x',i'}$) of size $N_{x,i}$ ($N'_{x',i'}$). 
The two-layer random Clifford unitary $V$ consists of multi-qubit Clifford unitaries drawn independently from exact Clifford $k$-designs.

The unitary ensemble $\calE$ forms a unitary $k$-design with $\epsilon$ relative error 
\begin{align}
    \epsilon = 2^{O(k^2)-2N} \left[k!(1 + k^2 2^{-\ell})\right]^{2N/\ell} + N2^{-\xi-\log \xi+O(k^2)}
\end{align}
where $\xi$ is the size of the smallest overlapping regions of individual Clifford gates, and $\ell = \min\{N_{x,i},N'_{x',i'}\}$ is the size of the smallest qubit cluster.
Here, we have assumed that $\ell \geq k \log k$.

This implies that, with $\xi=O(\log(N/\epsilon)+k^2)$ and $\ell=O(k\log k)$, the ensemble $\calE$ can form a state $k$-design with $\epsilon$ relative error assuming $N=\omega(k^2+\log(1/\epsilon))$.
\end{theorem}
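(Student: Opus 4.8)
The plan is to transfer the argument behind Theorem~\ref{thm:state_relative_log_depth} to the Choi picture. Set $J_\calE := [\Phi^{(k)}_\calE\otimes\id](P_{\EPR})$ and $J_\H := [\Phi^{(k)}_\H\otimes\id](P_{\EPR})$, so that the two-sided bound $(1-\epsilon)\Phi^{(k)}_\H\preceq\Phi^{(k)}_\calE\preceq(1+\epsilon)\Phi^{(k)}_\H$ is equivalent to $(1-\epsilon)J_\H\preceq J_\calE\preceq(1+\epsilon)J_\H$. The crucial preliminary observation is that $J_\calE = \Ens_{U,V,U'}(\ketbra{\chi})^{\otimes k}$, where $\ket{\chi}$ is the (pure) Choi state of $UVU'$; hence $J_\calE$ is a $k$-th moment of states and is automatically supported inside $\operatorname{span}\{\ket{\chi_W}^{\otimes k}: W \text{ unitary}\}=\operatorname{supp}(J_\H)$. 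This support containment is exactly what is needed for the \emph{upper} inequality --- a weight of $J_\calE$ outside $\operatorname{supp}(J_\H)$ would violate $J_\calE\preceq(1+\epsilon)J_\H$ for every $\epsilon<1$ --- and it is the feature that makes the relative-error claim more delicate than its additive-error counterpart. Granting it, a sufficient condition for the theorem is
\[
\infnormsm{J_\calE-J_\H}\;\le\;\epsilon\,\lambda_{\min}(J_\H),
\]
with $\lambda_{\min}(J_\H)\ge 2^{-2Nk-O(k^2)}$ the smallest eigenvalue of $J_\H$ on its support (a standard Weingarten estimate, valid for $N=\omega(k)$). The appearance of $2^{-2Nk}$ rather than the $2^{-Nk}$ of the state case reflects that the Choi construction doubles the effective system to $\calH\otimes\calH'$; this is the source of the ``$2N$'' throughout the error.

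Next I would compute $J_\calE$ by writing $\Phi^{(k)}_\calE=\Phi^{(k)}_U\circ\Phi^{(k)}_V\circ\Phi^{(k)}_{U'}$, inserting the uniformity decomposition of $\Phi^{(k)}_V$ furnished by Theorem~\ref{thm:uniformity_log_depth} (uniformity defect $\epsilon'=N2^{-\xi-\log\xi+O(k^2)}$), inserting the Haar--Weingarten forms $\Phi^{(k)}_U=\sum_{\vec\sigma,\vec\tau}\Wg(\vec\sigma,\vec\tau)\,|\vec\sigma)(\vec\tau|$ and similarly for $U'$ (sums over configurations of permutations on the clusters), and composing superoperators via $|A)(B|\circ|C)(D|=\bbrakket{B}{C}\,|A)(D|$. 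Passing to the Choi operator then expresses $J_\calE$ as a sum of $r(\vec\sigma)_{\mathrm{sys}}\otimes r(\vec\tau')_{\mathrm{ref}}$ with coefficients assembled from $\Wg$, $f(\vec T_1,\vec T_2)$ and the overlaps $\bbrakket{\cdot}{\cdot}$. I split this into the same three parts as in the proof of Theorem~\ref{thm:state_relative_log_depth}: (I) the uniform term $T=\pi\in S_k$, where the $\vec\tau$- and $\vec\sigma'$-sums collapse through $\sum_\tau\Wg(\sigma,\tau)\bbrakket{\tau}{\pi}=\delta_{\sigma,\pi}$, pinning $\vec\sigma=\vec\tau'=\pi$ uniformly and reproducing $J_\H$ up to a correction of the order already present in the first term of $\epsilon$; (II) the uniform $T\notin S_k$, for which $\infnormsm{J_{\mathrm{II}}}\le 2^{-2Nk}\,2^{O(k^2)-2N}[k!(1+k^2 2^{-\ell})]^{2N/\ell}$, using $\infnormsm{r(\vec\sigma)}=1$, the suppression $|\tau_{x,i},T|\ge 1$ on each of the at most $2N/\ell$ clusters spanning both sides, $\sum_{\sigma,\tau}|\Wg(\sigma,\tau)|\le k!(1+k^2 2^{-N_{x,i}})$~\cite{Aharonov:2021das}, and $|\Sigma_{k,k}\setminus S_k|\le 2^{O(k^2)}$; (III) the $f(\vec T_1,\vec T_2)$ terms, for which $\infnormsm{J_{\mathrm{III}}}\le 2^{-2Nk}\sum_{\vec T_1,\vec T_2}|f|\le 2^{-2Nk}\epsilon'$ after checking --- exactly as in the state case, and this is where $\ell\ge k\log k$ is used --- that each per-cluster factor is at most $1$ whether or not $T_x$ is a permutation.

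Dividing the sum of these three $\infty$-norm bounds by $\lambda_{\min}(J_\H)$ and absorbing $1/k!$ and $\poly(k)$ into $2^{O(k^2)}$ yields $\epsilon=2^{O(k^2)-2N}[k!(1+k^2 2^{-\ell})]^{2N/\ell}+N2^{-\xi-\log\xi+O(k^2)}$, and then taking $\xi=O(\log(N/\epsilon)+k^2)$, $\ell=O(k\log k)$ together with $N=\omega(k^2+\log(1/\epsilon))$ makes each summand at most $\epsilon/2$. I expect the two hard points to be, first, establishing the support statement $\operatorname{supp}(J_\calE)\subseteq\operatorname{supp}(J_\H)$ cleanly --- the moment-of-Choi-states reading is the conceptual move that does it --- and, second, controlling the $T\notin S_k$ contribution now that it enters as a product over $\sim 2N/\ell$ clusters on \emph{both} halves of the doubled system; it is precisely this doubling that forces $\ell=\Theta(k\log k)$ (so that each $k!(1+k^2 2^{-\ell})$ factor and each Weingarten sum stays $O(1)$) and, via Lemma~\ref{thm:uniformity_gluing} for the Clifford layer, $\xi=O(\log(N/\epsilon)+k^2)$.
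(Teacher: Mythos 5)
Your proposal is correct and follows essentially the same route as the paper: pass to the Choi operator $\varrho_\calE^{(k)}=[\Phi_\calE^{(k)}\otimes\id](P_{\EPR})$, insert the uniformity decomposition of the Clifford layer and the Weingarten expansion of the sandwiching Haar layers, split into the same three parts (aligned permutations, aligned non-permutations, and the $f(\vec T_1,\vec T_2)$ remainder), and bound each in $\infty$-norm exactly as you describe. The only cosmetic difference is that you justify the reduction to an $\infty$-norm bound via explicit support containment of the Choi moments and $\lambda_{\min}(J_\H)\gtrsim k!\,2^{-2Nk}$, whereas the paper imports the equivalent inequality $\epsilon\le \frac{2^{2Nk}}{k!}(1+k^2 2^{-N})\infnormsm{\varrho_\calE^{(k)}-\varrho_0^{(k)}}+k^2 2^{-N}$ from Ref.~\cite{Schuster:2024ajb}.
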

\begin{proof}
The relative error of $\Phi_\calE^{(k)}$ to the unitary $k$-design is bounded by~\cite{Schuster:2024ajb}
\begin{align}
    &\epsilon \leq \frac{2^{2Nk}}{k!}\left(1 + \frac{k^2}{2^N}\right)\infnormsm{\varrho_\calE^{(k)} - \varrho_0^{(k)}} + \frac{k^2}{2^N} \nn \\
    &\leq \frac{2^{2Nk}(1 + \frac{k^2}{2^N})}{k!}  \left( \infnormsm{\varrho_\rmI^{(k)} - \varrho_0^{(k)}} + \infnormsm{\varrho_\rmII^{(k)}} + \infnormsm{\varrho_\rmIII^{(k)}} \right)+ \frac{k^2}{2^N},
\end{align}
where $\varrho_\calE^{(k)} = [\Phi^{(k)}_\calE \otimes \id ](P_{\EPR}) := \varrho_\rmI^{(k)} + \varrho_{\rmII}^{(k)} + \varrho_{\rmIII}^{(k)}$, and $\varrho_0^{(k)} = 2^{-2Nk}\sum_\sigma r(\sigma) \otimes r(\sigma)$.
We divide the summation over $T$ in the $k$-fold channel associated with the two-layer Clifford unitaries (Eq.~\eqref{eq:channel_uniformity}) into three parts, giving rise to three contributions to $\varrho_\calE^{(k)}$:
\begin{align}
    \varrho_\rmI^{(k)} &= \frac{1}{2^{2Nk}} \sum_{\vec{\sigma}_{1,2},\vec{\tau}_{1,2}}\sum_{\pi \in S_k} \Wg(\vec{\sigma}_1,\vec{\tau}_1)\Wg(\vec{\sigma}_2,\vec{\tau}_2) \bbrakket{\vec{\tau}_1}{\pi}\bbrakket{\pi}{\vec{\sigma}_2} r(\vec{\sigma}_1)\otimes r(\vec{\tau}_2) \nn \\
    &= \frac{1}{2^{2Nk}} \sum_{\pi\in S_k}r(\pi) \otimes r(\pi) = \varrho_0^{(k)}, \\
    \varrho_\rmII^{(k)} &= \frac{1}{2^{2Nk}} \sum_{\vec{\sigma}_{1,2},\vec{\tau}_{1,2}} \sum_{T \notin S_k} \Wg(\vec{\sigma}_1,\vec{\tau}_1)\Wg(\vec{\sigma}_2,\vec{\tau}_2) \bbrakket{\vec{\tau}_1}{T}\bbrakket{T}{\vec{\sigma}_2} r(\vec{\sigma}_1)\otimes r(\vec{\tau}_2), \\
    \varrho_\rmIII^{(k)} &= \frac{1}{2^{2Nk}} \sum_{\vec{\sigma}_{1,2},\vec{\tau}_{1,2}}\sum_{\vec{T}_1,\vec{T}_2} f(\vec{T}_1,\vec{T}_2)\Wg(\vec{\sigma}_1,\vec{\tau}_1)\Wg(\vec{\sigma}_2,\vec{\tau}_2) \bbrakket{\vec{\tau}_1}{\vec{T}_1}\bbrakket{\vec{T}_2}{\vec{\sigma}_2} r(\vec{\sigma}_1)\otimes r(\vec{\tau}_2).
\end{align}
The first and the second part are from the spatially aligned $T \in S_k$ and $T\notin S_k$, respectively.
According to Theorem~\ref{thm:uniformity_log_depth}, the two-layer Clifford unitaries satisfies the uniformity condition up to error $\sum_{\vec{T}_1,\vec{T}_2}\abs{f(\vec{T}_1,\vec{T}_2)} < \epsilon' = N2^{-\xi-\log\xi + O(k^2)}$.

The $\infty$-norm of the second part is bounded
\begin{align}
    \infnormsm{\varrho_\rmII^{(k)}} \leq& 2^{-2Nk} 2^{O(k^2)} \left[\prod_{x,i} 2^{-N_{x,i}} k!(1 + k^2 2^{-N_{x,i}})\right] \left[\prod_{x',i'}2^{-N_{x',i'}} k!(1 + k^2 2^{-N_{x',i'}}) \right] \nn \\
    \leq& 2^{-2Nk-2N+O(k^2)}\left[ k!(1 + k^2 2^{-\ell})\right]^{2N/\ell},
\end{align}
where $\ell = \min\{N_{x,i},N'_{x',i'}\}$ is the size of the smallest qubit cluster.

The $\infty$-norm of the third part has an upper bound, assuming $\ell \geq k \log k$,
\begin{align}
    \infnormsm{\varrho_\rmIII^{(k)}} \leq 2^{-2Nk} \sum_{\vec{T}_1,\vec{T}_2} \abs{f(\vec{T}_1,\vec{T}_2)} \leq 2^{-2Nk} \epsilon'.
\end{align}
The derivation is similar to that for Eq.~\eqref{eq:rel_state_rhoIII_infnorm}.

Altogether, the relative error has an upper bound
\begin{align}
    \epsilon
    \leq& 2^{O(k^2)-2N}\left[ k!(1 + k^2 2^{-\ell})\right]^{2N/\ell} + N2^{-\xi-\log\xi + O(k^2)}+ k^22^{-N},
\end{align}
Here, one can absorb $k^22^{-N}$ into $N2^{-\xi-\log\xi + O(k^2)}$.
\end{proof}
We remark that, similar to the case of relative-error state design, one can remove a finite fraction of Haar random gates within each qubit block in our construction while maintaining a bounded relative error.

\begin{corollary}[Circuit depth]\label{thm:unitary_relative_circuit_depth}
In an $N$-qubit system with $N=\omega(k^2+\log(1/\epsilon))$, one can realize approximate unitary and state $k$-design with $\epsilon$ relative error in depth\footnote{We acknowledge Thomas Schuster for explaining the efficient constructions of global random Clifford gates, which helps us better appreciate our results.}
\begin{itemize}
\item $O(\log(N/\epsilon)) + 2^{O(k\log k)}$ with 1D circuits.
\item $O(\log\log(N/\epsilon)) +2^{O(k\log k)}$ with all-to-all circuits using $O(N(\log(N/\epsilon)+k^2))$ ancillas.
\end{itemize}

\end{corollary}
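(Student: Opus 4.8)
The plan is to read this off as a compilation bookkeeping corollary of the two relative-error design theorems just proved, using standard synthesis bounds for random Cliffords and for generic unitaries on a bounded number of qubits. Those theorems fix the circuit to be built: a two-layer brickwork of independent random Clifford gates, each acting on $2\xi$ qubits with overlapping regions of size $\xi = O(\log(N/\epsilon)+k^2)$, that is sandwiched (for the unitary design) or merely followed (for the state design, via Theorem~\ref{thm:state_relative_log_depth}) by one layer of exact unitary $k$-design gates acting on disjoint clusters of size $\ell = O(k\log k)$. Since the clusters within any magic layer are disjoint and each Clifford layer consists of gates on disjoint $2\xi$-qubit blocks, the overall depth is the sum of the per-layer depths, so it suffices to bound (i) the depth of a random Clifford on $2\xi$ qubits and (ii) the depth of an arbitrary unitary on $\ell$ qubits, in each connectivity model.

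For the Clifford layers, take each gate to be a uniformly random Clifford on its $2\xi$-qubit block, which is automatically an exact Clifford $k$-design for every $k$. In one dimension a random Clifford on $m$ qubits compiles to a local circuit of depth $O(m)$~\cite{Bravyi:2020xdv}, so the two layers contribute depth $O(\xi) = O(\log(N/\epsilon)+k^2)$. With all-to-all connectivity, a random Clifford on $m$ qubits can be realized in depth $O(\log m)$ using $O(m^2)$ ancillas~\cite{Jiang:2019aom}; summing over the $O(N/\xi)$ gates gives a total of $O(N\xi) = O(N(\log(N/\epsilon)+k^2))$ ancillas and depth $O(\log\xi) = O(\log\log(N/\epsilon)+\log k)$.

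For the magic layer(s), each gate is a sample from an exact unitary $k$-design on $\ell = O(k\log k)$ qubits. Any unitary on $m$ qubits is realized by a circuit of two-qubit gates of depth $2^{O(m)}$~\cite{Knill:1995kz}, and the disjoint $\ell$-qubit clusters run in parallel, so each magic layer has depth $2^{O(\ell)} = 2^{O(k\log k)}$ in both the 1D and all-to-all settings (the clusters are system-size independent, so ancillas do not change this order). Adding contributions gives 1D depth $O(\log(N/\epsilon)+k^2)+2^{O(k\log k)} = O(\log(N/\epsilon))+2^{O(k\log k)}$ using $k^2 \le 2^{O(k\log k)}$, and all-to-all depth $O(\log\log(N/\epsilon)+\log k)+2^{O(k\log k)} = O(\log\log(N/\epsilon))+2^{O(k\log k)}$ with $O(N(\log(N/\epsilon)+k^2))$ ancillas; the state-design claim is identical with the pre-Clifford magic layer dropped and Theorem~\ref{thm:state_relative_log_depth} invoked in place of the unitary-design theorem.

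The only genuinely non-routine point, and hence where I would be most careful, is realizing the $\ell$-qubit \emph{exact} unitary $k$-design gates within the claimed $2^{O(k\log k)}$ budget. One can either cite the existence of finite exact $k$-design constructions on a fixed number of qubits, or argue that it is enough to approximate each such gate to error $O(\epsilon/N)$ -- which by the error bookkeeping of the source theorems still yields a relative-error design -- and that by Solovay--Kitaev this costs only a $\poly\log(N/\epsilon)$ multiplicative overhead that is absorbed into $2^{O(k\log k)}$ once $\ell = \Theta(k\log k)$. A secondary check is that the stated gate sizes indeed follow from the error formula in the source theorems: forcing $N 2^{-\xi+O(k^2)} \le \epsilon$ requires $\xi = \Theta(\log(N/\epsilon)+k^2)$, and for a large enough constant prefactor $\ell = \Theta(k\log k)$ makes $2^{O(k^2)-2N}[k!(1+k^2 2^{-\ell})]^{2N/\ell}$ smaller than $\epsilon$ under the assumption $N = \omega(k^2+\log(1/\epsilon))$, so both parameter choices are consistent with the hypotheses.
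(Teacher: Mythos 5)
Your proposal is correct and follows essentially the same route as the paper's proof: compile each $2\xi$-qubit Clifford gate in depth $O(\xi)$ in 1D via~\cite{Bravyi:2020xdv} (or depth $O(\log\xi)$ with $O(\xi^2)$ ancillas per gate in the all-to-all setting via~\cite{Jiang:2019aom}), and compile each $\ell$-qubit magic gate with the $O(\ell 4^\ell)$ gate-count bound of~\cite{Knill:1995kz}, giving $2^{O(k\log k)}$ for $\ell = O(k\log k)$. Your extra remark about realizing \emph{exact} $k$-designs versus approximating them to error $O(\epsilon/N)$ is a legitimate point of care that the paper's one-line proof glosses over, but it does not change the argument.
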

\begin{proof}
The first statement follows from (1) any Clifford unitary over $\xi$ qubits can be realized in depth $d = O(\xi)$~\cite{Bravyi:2020xdv}; (2) any unitary operation over $\ell$ qubits can be realized using $O(\ell 4^\ell)$ two-local gates~\cite{Knill:1995kz}. Hence, depth $O(\ell4^\ell)$ is sufficient for all-to-all circuits, depth $O(\ell^24^\ell)$ is sufficient for 1D circuits.

The second statement follows from~\cite{Jiang:2019aom} which shows that any Clifford unitary on $\xi$ qubits can be implemented with all-to-all circuits in depth $O(\log\xi)$ using $\xi^2$ ancillas.
\end{proof}

\subsection{No-go theorems}\label{sec:rel_error_nogo}

Here, we prove no-go theorems for designs with bounded relative error.
We first consider an ensemble of Clifford augmented states generated by applying Clifford unitaries to initial states with low entanglement.
A subclass of these states are the Clifford augmented matrix product states (CAMPS)~\cite{qian2023augmenting,qian2024augmenting}.
CAMPSs can be efficiently simulated and have the potential to host large-scale entanglement and an extensive amount of magic.
However, we show that an ensemble of CAMPSs with bond dimension $D = \tilde{o}((N/\epsilon)^{1/4})$ cannot form a state $k$-design with $\epsilon$ relative error for $k\geq 4$.
We further show that in any dimension, finite-depth local unitaries followed by a Clifford unitary cannot generate a state design with bounded relative error.
The no-go theorem also suggests that a unitary ensemble $\{V_1UV_2\}$ with $V_{1,2}$ being Clifford unitaries and $U$ being a generic shallow unitary circuit cannot form a relative-error unitary $k$-design.
The no-go theorems challenge the naive expectation that magic placed at the initial layer creates more randomness as the Clifford unitary can spread and scramble magic.

The main idea behind the no-go theorem is that the ensemble of Clifford augmented states is distinguished from the Haar random ensemble by measuring the Pauli $4$-th moment, whose logarithm defines the stabilizer Rényi entropy~\cite{Leone:2021rzd}. 
The $4$-th moment is given by the expectation value of a projection operator onto a CSS code defined in the replicated space (see Appendix~\ref{sec:stabilizer_renyi_entropy}), which are positive operators that belong to the Clifford commutant. 
The relative error of the $4$-th moment sets a lower bound on the relative error of the state $4$-design.
We note that although such states cannot form relative-error state designs, they can form additive-error state designs, as we show in Appendix~\ref{sec:state_additive_initial_magic}.

We consider an ensemble $\calE = \{V\ket{\psi_0}\}$ of Clifford augmented states generated by Clifford unitaries $V$ acting on short-range entangled states or one-dimensional matrix product states $\ket{\psi_0}$.
We show that the ensemble is distinguished from the ensemble of Haar random states by a positive observable:
\begin{equation}
r(T_S)=2^{-N}\sum_{P\in\calP}P^{\otimes 4},\qquad \calP=\{Z^{\boldu}X^{\boldv},\boldu,\boldv\in\mathbb{F}_2^{N}\}.\label{eq:rTs}
\end{equation}
The expectation value of this operator is related to the second stabilizer Rényi entropy $\calM_2(\psi)$ of state $\ket{\psi} = V\ket{\psi_0}$,
\begin{align}
    \Tr [r(T_S)\rho_{\calE}^{(4)} ] = \mathbb{E}_{\psi\sim\calE} \frac{1}{2^N} \sum_P \bra{\psi} P\ket{\psi}^4 = \mathbb{E}_{\psi\sim\calE} 2^{-\calM_2(\psi)}.
\end{align}
Note that the operator $r(T_S)$ belongs to the Clifford commutant; its expectation value is invariant under Clifford unitaries:
\begin{equation}
\bra{\psi_0}^{\otimes 4}V^{\otimes 4}r(T_S)V^{\dagger\otimes 4}\ket{\psi_0}^{\otimes 4}=\bra{\psi_0}^{\otimes 4}r(T_S)\ket{\psi_0}^{\otimes 4}.
\end{equation}

\begin{theorem}[Lower bound on relative error from non-maximal initial entanglement]
Consider disjoint subsystems $M_i$ of size $N_i$ in an $N$-qubit system, where the union $\cup_i M_i$ need not contain all the qubits.

The state ensemble $\calE=\{V\ket{\psi_0}\}$ generated by Clifford unitary $V$ acting on state $\ket{\psi_0}$ cannot form an approximate state $k$-design (for $k\ge4$) with relative error 
\begin{equation}
\epsilon\leq \frac{1}{24}\left[\Ens_{\calE}\sum_{i}2^{-2\calS^{(2)}_i}(1-2^{\calS^{(2)}_i-N_i})-3\right],
\end{equation}
where $\calS^{(2)}_i=-\log \Tr(\rho_i^2)$ is the second Rényi entropy evaluated on subsystem $M_i$ of the initial state $\ket{\psi_0}$, and $\rho_i$ is the reduced density matrix of $\ket{\psi_0}$ on $M_i$.

\end{theorem}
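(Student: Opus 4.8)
The plan is to lower-bound the relative error by exhibiting a single positive observable whose expectation value in $\rho_{\calE}^{(k)}$ differs substantially from its Haar value. The natural candidate is the operator $r(T_S) = 2^{-N}\sum_{P\in\calP}P^{\otimes 4}$ introduced above, restricted to the replicated copies indexed by $k\ge 4$ (embed $S_4\hookrightarrow S_k$ so that $r(T_S)$ acts on four of the $k$ tensor factors and as the identity on the rest; this keeps it a positive element of the Clifford commutant). Since the expectation of $r(T_S)$ is Clifford-invariant, we have $\Tr[r(T_S)\rho_{\calE}^{(k)}] = \Ens_{\psi_0}\bra{\psi_0}^{\otimes 4} r(T_S)\ket{\psi_0}^{\otimes 4}$, which reduces the problem to a computation on the fixed initial state $\ket{\psi_0}$.

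The first key step is to evaluate $\bra{\psi_0}^{\otimes 4} r(T_S)\ket{\psi_0}^{\otimes 4}$ from below in terms of the Rényi-$2$ entropies of $\ket{\psi_0}$ on the subsystems $M_i$. Write $\sum_{P\in\calP}P^{\otimes 4} \ge \sum_i \sum_{P\in\calP_{M_i}} P^{\otimes 4}$ where $\calP_{M_i}$ is the Pauli group supported on $M_i$ (tensored with identity elsewhere), using that these are distinct Paulis across disjoint $M_i$ and that every term $\bra{\psi_0}P\ket{\psi_0}^4 \ge 0$ is nonnegative — so one may sum only over the subsets we care about and discard the rest. For each block, $2^{-N_i}\sum_{P\in\calP_{M_i}}\bra{\psi_0}P\ket{\psi_0}^4 = 2^{N_i}\Tr(\rho_i^{\,4}) \ge 2^{N_i}\,(\Tr\rho_i^2)^{\,?}$ — more precisely I would use the purity/Rényi relations: expanding $\rho_i$ in the Pauli basis, $2^{-N_i}\sum_P \bra{\psi_0}P\ket{\psi_0}^4$ equals $2^{-\calM_2}$ evaluated on $M_i$, and bounding the stabilizer Rényi entropy of the reduced state against $\calS_i^{(2)}$ gives the stated combination $2^{-2\calS_i^{(2)}}(1-2^{\calS_i^{(2)}-N_i})$. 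Summing over $i$ yields $\Tr[r(T_S)\rho_{\calE}^{(k)}] \ge \Ens_{\calE}\sum_i 2^{-2\calS_i^{(2)}}(1-2^{\calS_i^{(2)}-N_i})$.

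The second step is to compute the Haar value $\Tr[r(T_S)\rho_{\H}^{(k)}]$. Since $\rho_{\H}^{(k)} = P_{\sym}/d_{\sym}$ and $r(T_S)$ is the projector (times $2^{?}$) onto a CSS code in the replicated space, the overlap is a ratio of subspace dimensions; in the large-$N$ limit this evaluates to $3 + o(1)$ (the three $S_4$-type contributions that survive — $\Tr\rho_{\H}\cdot(\cdots)$, etc., matching the $k=4$ stabilizer-entropy computation in Appendix~\ref{sec:stabilizer_renyi_entropy}). If $\calE$ were an $\epsilon$-relative-error design, then $\Tr[r(T_S)\rho_{\calE}^{(k)}] \le (1+\epsilon)\Tr[r(T_S)\rho_{\H}^{(k)}] = (1+\epsilon)(3+o(1))$. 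Combining with the lower bound and solving for $\epsilon$ (absorbing the $o(1)$ and the factor $24 = 8\cdot 3$ into the normalization) gives $\epsilon \ge \tfrac{1}{24}\big[\Ens_{\calE}\sum_i 2^{-2\calS_i^{(2)}}(1-2^{\calS_i^{(2)}-N_i}) - 3\big]$, which is the contrapositive of the claim.

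The main obstacle I anticipate is the careful bookkeeping in the second step — pinning down the exact Haar value $\Tr[r(T_S)\rho_{\H}^{(k)}]$ and in particular controlling the subleading $1/d$ corrections and the numerical constant, since the embedding $S_4\hookrightarrow S_k$ for $k>4$ introduces extra permutation orbits that must be shown not to spoil the leading constant $3$. The first step's entropy manipulation is more routine: it is essentially the observation that the stabilizer Rényi entropy of a reduced state is controlled by its purity, combined with positivity of each Pauli-fourth-moment term allowing us to throw away all Paulis not supported on a single $M_i$.
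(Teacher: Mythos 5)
Your overall strategy is the same as the paper's: use the positive Clifford-commutant observable $r(T_S)=2^{-N}\sum_P P^{\otimes 4}$, exploit its Clifford invariance to reduce to the initial state $\ket{\psi_0}$, keep only the (nonnegative) contributions of Paulis supported on a single $M_i$, and compare with the Haar value. However, the central quantitative step is missing. You need the per-block bound $\sum_{P\in\calP_{M_i},P\neq I}\abs{\bra{\psi_0}P\ket{\psi_0}}^4\geq 2^{-2\calS_i^{(2)}}(1-2^{\calS_i^{(2)}-N_i})$, and your proposal does not derive it: you first assert the false identity $2^{-N_i}\sum_P\bra{\psi_0}P\ket{\psi_0}^4=2^{N_i}\Tr(\rho_i^4)$ (the fourth Pauli moment is not the order-4 purity), and then retreat to ``bounding the stabilizer R\'enyi entropy of the reduced state against $\calS_i^{(2)}$ gives the stated combination,'' which is exactly the claim to be proven. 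The actual argument is a Cauchy--Schwarz step: set $\delta=\sum_{P\in\calP_{M_i}}\abs{\langle P\rangle}^2=2^{N_i}\Tr(\rho_i^2)=2^{N_i-\calS_i^{(2)}}$, view $\chi_P=\delta^{-1}\abs{\langle P\rangle}^2$ as a probability distribution with $\chi_I=\delta^{-1}$, and bound $\sum_{P\neq I}\chi_P^2$ from below by the square of $\sum_{P\neq I}\chi_P=1-\delta^{-1}$ divided by the number $2^{2N_i}-1$ of non-identity Paulis. Without this, the specific dependence on $\calS_i^{(2)}$ and $N_i$ in the theorem is not established.

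The endgame is also internally inconsistent. With the $2^{-N}$ normalization of $r(T_S)$, the Haar value is $\Tr[r(T_S)\rho_{\H}^{(4)}]=4/(2^N+3)=O(2^{-N})$, not $3+o(1)$, and your stated lower bound on $\Tr[r(T_S)\rho_{\calE}^{(k)}]$ is likewise missing the $2^{-N}$ prefactor; the factor $1/24$ is then asserted (``absorbing\ldots into the normalization'') rather than derived. That said, your route of applying the relative-error inequality directly to the positive observable, $\Tr[r(T_S)\rho_{\calE}^{(4)}]\leq(1+\epsilon)\Tr[r(T_S)\rho_{\H}^{(4)}]$, is legitimate (and your handling of $k>4$ by letting $r(T_S)$ act on four of the $k$ factors is equivalent to the paper's partial trace). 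If carried out with consistent normalizations it yields $\epsilon\geq\frac{1}{4}[\,\Ens_{\calE}\sum_i 2^{-2\calS_i^{(2)}}(1-2^{\calS_i^{(2)}-N_i})-3\,]$, which is actually stronger than the paper's $1/24$ (the paper loses a factor by detouring through $\lVert\cdot\rVert_\infty$, H\"older with $\lVert r(T_S)\rVert_1=2^{3N}$, and the flat spectrum of $\rho_{\H}^{(4)}$), and would imply the stated theorem a fortiori. But as written, neither the per-block inequality nor the final constant is actually proved.
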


\begin{proof}
For each member of the state ensemble, we can lower bound the expectation value of $r(T_S)$ by the following procedure. On a single region $M_i$, the summation over all the Pauli operators that support on $M_i$ gives
\begin{equation}
\delta := \sum_{P_i}|\bra{\psi_0}P_i\ket{\psi_0}|^2=2^{N_i}\Tr(\rho_{M_i}^2) = 2^{N_i -\calS_i^{(2)}}.
\end{equation}
We can view $\{\chi_{P_i}=\delta^{-1}|\bra{\psi_0}P_i\ket{\psi_0}|^2\}$ as a probability distribution. Since $\chi_{I}=\delta^{-1}$, the contributions from non-identity Pauli operators are lower bounded by
\begin{equation}
2^{-N}\sum_{P_i\neq I}|\bra{\psi_0}P_i\ket{\psi_0}|^4=2^{-N}\delta^2\sum_{P_i\neq I}|\chi_{P_i}|^2 \geq 2^{-N}\delta^2\frac{1-\delta^{-1}}{2^{2N_i}-1} 2^{-N}2^{-2\calS^{(2)}_i}(1-2^{\calS^{(2)}_i-N_i}).
\end{equation}
Summing over all regions $i$, we get
\begin{equation}
\bra{\psi_0}^{\otimes 4}r(T_S)\ket{\psi_0}^{\otimes 4}\geq 2^{-N}\left[1+\sum_{i}2^{-2\calS^{(2)}_i}(1-2^{\calS^{(2)}_i-N_i})\right].
\end{equation}
Averaging over the state in the ensemble $\calE$, we have
\begin{equation}\label{eq:nogo_clifford_augmented_states_SRE_bound}
\Tr[r(T_S)\rho_{\calE}^{(4)}]\geq 2^{-N}\left[1+\Ens_{\calE}\sum_{i}2^{-2\calS^{(2)}_i}(1-2^{\calS^{(2)}_i-N_i})\right].
\end{equation}
The Haar value is~\cite{Leone:2021rzd}
\begin{equation}
\begin{aligned}
\Tr[r(T_S)\rho_{\H}^{(4)}]&=2^{-N}D_{N,4}^{-1}\sum_{\pi\in S_4,P\in \calP}\Tr[ P^{\otimes 4}r(\pi)]=\frac{4}{2^N+3}.
\end{aligned}
\end{equation}
The difference in the expectation value can give a lower bound to the relative error
\begin{equation}\label{eq:nogo_rel_error_bound_from_expectation_value}
\lVert\rho_{\calE}^{(4)}-\rho_{\H}^{(4)}\rVert_\infty\geq\frac{\left|\Tr[r(T_S)(\rho_{\calE}^{(4)}-\rho_{\H}^{(4)})]\right|}{\lVert r(T_S)\rVert_1} 
\geq 2^{-4N}\left[\Ens_{\calE}\sum_{i}2^{-2\calS^{(2)}_i}(1-2^{\calS^{(2)}_i-N_i})-3\right],
\end{equation}
where we use the H\"older's inequality, and $\lVert r(T_S)\rVert_1=2^{3N}$ (defect subspace dimension $\dim\rmN = 1$, and $k = 4$). 
The relative error is bounded by
\begin{equation}\label{eq:nogo_clifford_augmented_states_infnorm_bound}
\frac{\lVert\rho_{\calE}^{(4)}-\rho_{\H}^{(4)}\rVert_\infty}{\lVert\rho_{\H}^{(4)}\rVert_\infty}\geq\frac{1}{24}\left[\Ens_{\calE}\sum_{i}2^{-2\calS^{(2)}_i}(1-2^{\calS^{(2)}_i-N_i})-3\right].
\end{equation}

If $\calE$ is an approximate state $k$-design with $\epsilon$ relative error for $k\geq 4$, then $(1-\epsilon)\rho_{\H}^{(k)}\preceq\rho_{\calE}^{(k)}\preceq(1+\epsilon)\rho_{\H}^{(k)}$. Taking the partial trace over $k-4$ copies of the system, we find $(1-\epsilon)\rho_{\H}^{(4)}\preceq\rho_{\calE}^{(4)}\preceq(1+\epsilon)\rho_{\H}^{(4)}$. In other words, $\calE$ is an approximate state $4$-design with $\epsilon$ relative error. Since $\rho_{\H}^{(4)}$ has a flat spectrum, the above implies
\begin{equation}
\lVert\rho_{\calE}^{(4)}-\rho_{\H}^{(4)}\rVert_\infty\le\epsilon\lVert\rho_{\H}^{(4)}\rVert_\infty.
\end{equation}
Therefore an ensemble cannot form an approximate state $k$-design for $k \geq 4$ with relative error $\epsilon$ smaller than the right-hand side of~\eqref{eq:nogo_clifford_augmented_states_infnorm_bound}.

\end{proof}

The above proof can be understood as a classical statistical mechanics problem, where the lack of correlations proliferates domain wall excitations. We interpret $\left|\bra{\psi}\bigotimes_i P_i\ket{\psi}\right|^4\equiv W(\{P_i\})$ as the Boltzmann weight of a spin model, where we put a spin on every region $M_i$ that are labeled by the Pauli operators on $M_i$. We ask whether the partition function $Z=\sum_{\{P_i\}}W(\{P_i\})$ is relatively close to the Haar value that is $O(1)$. 

The identity operator $\prod_iI_i$ gives the maximal weight $W(\{I_i\})=1$. On Haar random states, the weights for any spin configuration with non-identity inputs are heavily suppressed, that is to say, if $\{P_i\}\neq \{I_i\}$, then
\begin{equation}\label{eq:nogo_clifford_augmented_states_stat_mech_haar_non_identity}
\Ens_{\psi\sim\H}W(\{P_i\})=D_{N,2}^{-1}\sum_{\pi\in S_4}\Tr[r(\pi) \bigotimes_i P_i^{\otimes 4}] =3\times 2^{-N}[1+O(2^{-N})].
\end{equation}
However, in states that are less entangled, there might be domain wall excitations that corresponds to $\{P_i\}$ with $I$ on all but one site. For every site $M_i$, such domain wall gives a contribution of at least $2^{-2\calS^{(2)}_i}(1-2^{\calS^{(2)}_i-N_i})$, which could be an $O(1)$ number if $N_i=O(1)$ and $N_i-\calS^{(2)}_i=O(1)$. The Boltzmann weights are also $O(1)$, in contrast with the Haar case~\eqref{eq:nogo_clifford_augmented_states_stat_mech_haar_non_identity}. Since there is an $O(N)$ number of such regions, these configurations gives an overall $O(N)$ contribution to the partition function. In stat mech language, the energy cost of these domain walls cannot suppress the large entropy, so they will proliferate.

As the first application of the relative error lower bound, we prove a no-go theorem for ensembles of Clifford augmented matrix product states. 

\begin{corollary}[No-go theorem for Clifford augmented MPS]
Consider an ensemble of states $\calE = \{V\ket{\psi}\}$, where $V$ is a Clifford unitary, and $\ket{\psi}$ are matrix product states whose bond dimensions are bounded by $D$.
\begin{equation*}
\includegraphics[width=6cm]{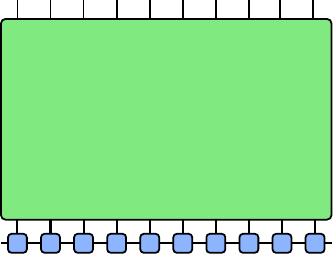}
\end{equation*}
The states cannot form an approximate state $k$-design for $k\geq 4$ with $\epsilon$ relative error for\footnote{To be more precise, the condition is $D=o\Big(\Big(\frac{N/\epsilon}{\log(N/\epsilon)}\Big)^{1/4}\Big)$. We used the $\tilde{o}$ notation to suppress factors of $\log(N/\epsilon)$.} $D=\tilde{o}((N/\epsilon)^{1/4})$.
\end{corollary}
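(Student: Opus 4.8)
The plan is to derive this from the preceding theorem (``Lower bound on relative error from non-maximal initial entanglement'') by choosing, once and for all, a partition of the $N$ sites into consecutive blocks on which every bond-dimension-$D$ matrix product state is forced to have strictly sub-maximal second Rényi entropy, so that the lower bound in that theorem is of order $N/\poly(D)$ and in particular exceeds $3+24\epsilon$ in the claimed regime.

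First I would fix a block length $w$ and cut the chain into $M=\lfloor N/w\rfloor$ consecutive blocks $M_1,\dots,M_M$, each of $w$ sites. The key structural input is that for an MPS $\ket{\psi}$ of bond dimension at most $D$, the reduced state $\rho_{M_i}$ on a contiguous block factors through the left and right virtual bonds, so $\operatorname{rank}\rho_{M_i}\le D^2$ and hence $\calS^{(2)}_i=-\log\Tr\rho_{M_i}^2\le 2\log D$. I would take $w$ larger than $2\log D$ by at least a constant (say $w\ge 2\log D+1$); then on every block of every member of $\calE$ one has $\calS^{(2)}_i\le N_i-1$, so $1-2^{\calS^{(2)}_i-N_i}\ge \tfrac12$ while $2^{-2\calS^{(2)}_i}\ge D^{-4}$. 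Since $\calS^{(2)}_i$ is evaluated on the pre-Clifford MPS and a Clifford $V$ leaves Rényi entropies invariant, these bounds hold for all $V$ and all $\ket{\psi}$, hence survive $\Ens_\calE$. I would also note that the margin $w>2\log D$ is genuinely necessary: for $w\le 2\log D$ a bond-$D$ MPS can make every block in the partition maximally mixed (take a ``staircase'' of $\log D$-qubit maximally entangled pairs straddling consecutive block boundaries), which would annihilate all block contributions and the argument with them.

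Next I would feed this into the theorem. Keeping only the non-negative contributions of single-block non-identity Paulis already gives $\Ens_\calE\sum_i 2^{-2\calS^{(2)}_i}(1-2^{\calS^{(2)}_i-N_i})\ge \tfrac12 M D^{-4}=\Omega(N/(wD^4))$. To push the bond-dimension threshold up to the stated value I would instead retain the contributions of all Paulis whose support is a contiguous run of blocks, assigning each such Pauli to the smallest block-interval $S$ containing its support; for a run $S$ the reduced state $\rho_S$ still has rank $\le D^2$, so the weight collected on $S$ is $\gtrsim D^{-4}$, and there are $\Theta(M^2)$ such runs, giving $\Ens_\calE\sum(\cdots)=\Omega(N^2/(w^2D^4))$. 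Choosing $w=\Theta(\log(N/\epsilon))$ — compatible with $w>2\log D$ precisely because $D=\tilde{o}(\sqrt{N/\epsilon})$ — this exceeds $3+24\epsilon$, so by the theorem $\calE$ cannot be an $\epsilon$-relative-error $k$-design for any $k\ge 4$.

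I expect the main obstacle to be the last accounting step: organizing the Pauli sum so that every contiguous-support operator is counted once, with the right sign, and showing the $\Omega(N^2/(w^2D^4))$ lower bound survives the inclusion--exclusion between a run of blocks and its sub-runs — equivalently, ruling out that the MPS conspires to make long runs of blocks ``anti-correlated'' so that extended Pauli weights cancel against the single-block ones. In statistical-mechanics language this is exactly the statement that a domain wall carrying a non-identity Pauli label costs only $\lesssim 2\log D$ bits of energy (the logarithm of the reduced rank), which cannot suppress the $\Omega(\log N)$ entropy of its location and extent once $D^2=\tilde{o}(N/\epsilon)$; the domain walls then proliferate and the relative error is $\Omega(1)$.
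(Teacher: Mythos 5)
Your first step is the paper's argument: tile the chain into blocks of width slightly above the log of the reduced-state rank, lower-bound each block's contribution to $\Ens_{\calE}\sum_i 2^{-2\calS^{(2)}_i}(1-2^{\calS^{(2)}_i-N_i})$, and invoke the preceding theorem. The divergence is quantitative, and it is exactly where your proof develops a hole. The paper takes blocks of width $N_i=\lceil\log D+1\rceil$ and charges a single factor of $D$ per block, $2^{\calS^{(2)}_i}\le D$, so each block contributes at least $\tfrac{1}{2}D^{-2}$ and the sum is $\ge(\tfrac{N}{\log D+2}-1)\tfrac{1}{2D^2}$; that is what produces the $\tilde o(\sqrt{N/\epsilon})$ threshold. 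You (reasonably, for an interior block with two virtual bonds) only use $\operatorname{rank}\rho_{M_i}\le D^2$, hence $2^{-2\calS^{(2)}_i}\ge D^{-4}$, and your rigorous single-block estimate $\Omega(N/(D^4\log D))$ only rules out $D=\tilde o((N/\epsilon)^{1/4})$ --- a strictly weaker corollary than the one stated.

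The device you introduce to close this gap --- summing over contiguous runs of blocks with a minimal-run assignment --- is the genuine gap, and I do not believe it can be repaired in the form you propose. The rank bound $\operatorname{rank}\rho_S\le D^2$ controls $\sum_{P\subseteq S}|\langle P\rangle|^4$ via Cauchy--Schwarz, but what you need is the mass of Paulis acting non-trivially on \emph{both} end blocks of $S$ (those not already assigned to sub-runs), and nothing forces that remainder to be $\gtrsim D^{-4}$: for a tensor product of independent blocks it is only $\gtrsim D^{-8}$ (one factor of $D^{-4}$ per end block), in which case the $\Theta(M^2)$ runs contribute $O(N^2/(w^2D^8))$ and the threshold does not improve at all over the single-block count. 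You would also need to re-derive the theorem for overlapping regions, since it is stated only for disjoint $M_i$, and even granting your $\Omega(N^2/(w^2D^4))$ the final comparison against $3+24\epsilon$ does not reproduce the $\epsilon$-dependence of the claimed threshold when $\epsilon$ is small. The constructive way forward is either to justify the per-block bound $2^{\calS^{(2)}_i}\le D$ that the paper uses (which makes the run construction unnecessary), or to accept that your argument proves the corollary only with exponent $1/4$ in place of $1/2$.
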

\begin{proof}
Since the bond dimension is upper bounded by $D$, the subsystem Rényi entropy is bounded by $2^{\calS^{(2)}_i}\leq D^2$. Suppose we choose a region $M_i$ such that $N_i=\lceil 2\log D+1\rceil$, then $N_i-\calS^{(2)}_i\geq 1$. There exists a choice of the regions $M_i$ such that the total number of such region with $N_i=\lceil 2\log D+1\rceil$ is lower bounded by $\lfloor \frac{N}{2\log D+2}\rfloor\geq \frac{N}{2\log D+2}-1$. That is to say, the relative error is lower bounded by
\begin{equation}\label{eq:nogo_clifford_augmented_MPS_SRE_bound}
\epsilon\geq\frac{1}{24}\left[\left(\frac{N}{2\log D+2}-1\right)\frac{1}{2D^4}-3\right].
\end{equation}
\end{proof}

Our no-go theorem suggests that CAMPSs with bond dimension $\tilde{o}((N/\epsilon)^{1/4})$ cannot form a relative error $k$-design with $k\geq 4$. 
This implies that CAMPSs cannot represent the output state of highly random Brownian dynamics or the dynamics of generally time-dependent Hamiltonians~\cite{Onorati:2016met,Jian:2022pvj} that form relative-error unitary designs.
On the other hand, an ensemble of MPS with bond dimension $D = O(N^k)$ can form an approximate state $k$-design. 
This is because a state $k$-design can be generated in random circuits of depth $d = O(k\log N)$~\cite{Schuster:2024ajb}, whose output states have an exact representation as MPSs with bond dimension $D = O(N^k)$.
The representation power of CAMPSs with a bond dimension in between remains an open problem.

The no-go theorem can be extended to Clifford augmented tensor network states in higher dimensions. 
For example, for projected entangled pair states (PEPS) of bond dimension $D$ in two dimensions~\cite{cirac2021matrix}, we can choose a $l \times l$ square region with $l = \lceil 4\log D + 1 \rceil$.
The relative error would have a similar lower bound
\begin{align}
    \epsilon \geq \frac{1}{24}\left[ \left(\frac{L}{4\log D + 2} - 1\right)^2\frac{1}{2D^{8(4\log D + 2)}} - 3\right].
\end{align}
in a $L\times L$ system. Hence, one cannot achieve $\epsilon$ relative error with a constant bond dimension.

This result gives an upper bound on stabilizer Rényi entropy of Clifford augmented states: taking the log of~\eqref{eq:nogo_clifford_augmented_MPS_SRE_bound}, we find $\calM_2\leq N-O(\log N)$, suggesting that these states have at least logarithmic deviation from the maximum stabilizer Rényi entropy. 

\begin{corollary}[No-go theorem for states prepared by Clifford augmented FDLU]
\label{thm:nogo_clifford_augmented_FDLU}
Consider an ensemble of states $\{VU\ket{0}^{\otimes N}\}$, where $V$ are Clifford unitaries, and $U$ are $d$-dimensional local circuits with circuit depth bounded by $t$. The states cannot form an approximate state $k$-design for $k\geq 4$ with $\epsilon$ relative error for $t=\tilde{o}((\log(N/\epsilon))^{1/d})$, with $d=O(1)$.
\end{corollary}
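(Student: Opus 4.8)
The plan is to reduce directly to the preceding Theorem (the relative-error lower bound from non-maximal initial entanglement), applied with $\ket{\psi_0} = U\ket{0}^{\otimes N}$ and with the disjoint regions $M_i$ chosen to be small $d$-dimensional cubes whose side length is tuned to the circuit depth $t$. The single new ingredient is an area-law bound on the second Rényi entropy of $\ket{\psi_0}$: because $U$ is a geometrically local circuit of depth $\le t$ with $O(1)$-local gates, the reduced state $\rho_i$ of $\ket{\psi_0}$ on a cube $M_i$ of side length $L$ has Schmidt rank across $\partial M_i$ at most $2^{c_d\, t\, L^{d-1}}$, where $c_d$ depends only on the lattice geometry and the gate locality radius; hence $\calS^{(2)}_i \le \log \operatorname{rank}\rho_i \le c_d\, t\, L^{d-1}$. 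This bound is deterministic over the ensemble, so the expectation $\Ens_\calE$ in the Theorem plays no role.

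Next I would set $L = \lceil c_d t\rceil + 1 = \Theta(t)$, so that $N_i - \calS^{(2)}_i \ge L^d - c_d t L^{d-1} = L^{d-1}(L - c_d t) \ge L^{d-1}\ge 1$; this forces $1 - 2^{\calS^{(2)}_i - N_i} \ge \tfrac12$, while $2^{-2\calS^{(2)}_i} \ge 2^{-2 c_d t L^{d-1}} = 2^{-\Theta(t^{d})}$ since $L = \Theta(t)$. Partitioning (a constant fraction of) the lattice into disjoint side-$L$ cubes produces $\lfloor N/L^d\rfloor \ge N/L^d - 1 = \Theta(N/t^{d})$ such regions, each contributing at least $\tfrac12\, 2^{-\Theta(t^{d})}$ to the sum in the Theorem. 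The Theorem therefore gives
\begin{equation}
\epsilon \;\ge\; \frac{1}{24}\left[\,\Theta\!\left(\frac{N}{t^{d}}\right) 2^{-\Theta(t^{d})} \;-\; 3\,\right].
\end{equation}

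Finally I would unpack the threshold: writing the exponent as $2^{-C t^{d}}$ for a fixed constant $C$, the right-hand side exceeds $\epsilon$ whenever $\tfrac{N}{t^{d}}\,2^{-C t^{d}} > 24\epsilon + 3$, i.e. whenever $C t^{d} + \log(t^{d}) < \log(N/\epsilon) - O(1)$; absorbing the $\log(t^{d})$ term (which is $O(\log\log(N/\epsilon))$) and the constants into the $\tilde o$ notation, exactly as in the Clifford-augmented-MPS corollary above, this holds whenever $t^{d} = \tilde o(\log(N/\epsilon))$, i.e. $t = \tilde o\big((\log(N/\epsilon))^{1/d}\big)$. In that regime the displayed lower bound strictly exceeds $\epsilon$, contradicting the assumption that the ensemble is an $\epsilon$-relative-error state $k$-design; here, as in the Theorem's proof, one first traces out $k-4$ copies to reduce a $k$-design for $k\ge4$ to a $4$-design, and uses that $\rho_{\H}^{(4)}$ has a flat spectrum so that the $\infty$-norm controls the relative error.

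The step requiring the most care is the area-law estimate $\calS^{(2)}_i \le c_d t L^{d-1}$ with an explicit geometric constant: one must count, across all $t$ layers, the gates straddling the boundary of a side-$L$ cube (including corner contributions and effects near $\partial M_i$) and confirm this is $O(t L^{d-1})$ uniformly, and one must check that a clean disjoint packing of the system by side-$L$ cubes exists under the assumed boundary conditions so that genuinely $\Theta(N/t^{d})$ regions are available. Both are standard but should be stated carefully; everything downstream is the same bookkeeping as in the MPS corollary.
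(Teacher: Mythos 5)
Your proposal is correct and follows essentially the same route as the paper's proof: apply the preceding relative-error lower bound to $\ket{\psi_0}=U\ket{0}^{\otimes N}$, bound $\calS^{(2)}_i$ by an area law $O(tL^{d-1})$ for cubes of side $L=\Theta(t)$, pack $\Theta(N/t^d)$ such cubes, and unpack the resulting bound $\epsilon\gtrsim (N/t^d)\,2^{-\Theta(t^d)}$ to get the threshold $t=\tilde o((\log(N/\epsilon))^{1/d})$. The only cosmetic difference is that the paper bounds the Rényi-2 entropy via the von Neumann entropy while you use the Schmidt rank directly; both are valid and yield the same scaling.
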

\begin{proof}
Consider a hyper-cubic subsystem $M_i$ (with each side of length $l$) that contains $N_i=l^d$ qubits. The boundary of $M_i$ has area $2d l^{d-1}$. 
For states prepared by local unitary circuits with depth $t$, the entanglement entropy of subsystem $M_i$ has an upper bound 
\begin{equation}
\calS^{(2)}_i(t)\leq \calS_i(t)\leq 4 d l^{d-1} t,
\end{equation}
where $S_i(t)$ is the von-Neumann entropy of $M_i$. 

We consider the subsystems with $l = \lceil 4dt + 1\rceil$, so $\calS^{(2)}_i(t)\leq l^d-1$.
In a $d$-dimensional system of size $\underbrace{L\times\cdots\times L}_d$, the relative error has a lower bound
\begin{align}
    \epsilon \geq \frac{1}{24}\left[\left(\frac{L}{4dt + 2}-1\right)^d\frac{1}{2\cdot 2^{8dt(4dt +2)^{d-1}}}-3 \right].
\end{align}
Hence, the ensemble cannot form a state $k$-design with $\epsilon$ relative error for $t^d \log t = o(\log(N/\epsilon))$.
In particular, one cannot achieve a small $\epsilon$ relative error with $t=O(1)$.
\end{proof}

The no-go theorem for state $k$-design with bounded relative errors also implies a no-go theorem for unitary $k$-design with relative errors (Corollary~\ref{corollary:nogo_unitary_relative}). We start with the Lemma below.
\begin{lemma}[Choi state of relative-error unitary design is relative-error state design\footnote{We thank Thomas Schuster for insightful discussion that helps us improve this statement.}]\label{lemma:unitary_design_implies_state_design}
Let $\{U\}$ be an ensemble on $N$-qubit unitary $U$ that forms an approximate unitary $k$-design with relative error $\epsilon'$. 
Let $\ket{\EPR}$ be an EPR state between two copies of the system.
Then, the state ensemble $\calE = \{U \otimes I\ket{\EPR}\}$ is an approximate state $k$-design with relative error $\epsilon\le(1+\epsilon')(1+k^22^{-N})(1+k^22^{-2N})-1$, for $k^2\le2^N$.
\end{lemma}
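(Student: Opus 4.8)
The plan is to recognize $\rho_{\calE}^{(k)}$ as a Choi state of the $k$-fold twirl of $\{U\}$, transport the relative-error hypothesis through the Choi map, and then control how far the Choi state of the \emph{Haar} twirl on $\calH$ sits from the genuine Haar-state moment $\rho_{\H}^{(k)}$ on the $2N$-qubit system — this discrepancy is exactly what produces the two correction factors $(1+k^2 2^{-N})$ and $(1+k^2 2^{-2N})$. Concretely, after relabelling tensor factors, $\rho_{\calE}^{(k)} = \Ens_{U}\big[(U\otimes I)\ket{\EPR}\bra{\EPR}(U^\dagger\otimes I)\big]^{\otimes k}$ equals $\varrho_{\calE}^{(k)} := [\Phi_{\calE}^{(k)}\otimes \id](P_{\EPR})$, with $P_{\EPR}$ the $k$-copy EPR density matrix between $\calH^{\otimes k}$ and a reference $\calH'^{\otimes k}$; the same relabelling turns the $2N$-qubit Haar moment into $\rho_{\H}^{(k)} = P_{\sym}/d_{\sym}$. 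Since $\{U\}$ is a unitary $k$-design with relative error $\epsilon'$, we have $(1-\epsilon')\Phi_{\H}^{(k)} \preceq \Phi_{\calE}^{(k)} \preceq (1+\epsilon')\Phi_{\H}^{(k)}$, and applying the operator formulation of the semidefinite ordering to $P_{\EPR}\succeq 0$ (as recalled in the preliminaries) gives $(1-\epsilon')\varrho_{\H}^{(k)} \preceq \varrho_{\calE}^{(k)} \preceq (1+\epsilon')\varrho_{\H}^{(k)}$, where $\varrho_{\H}^{(k)} := [\Phi_{\H}^{(k)}\otimes\id](P_{\EPR})$ is the Choi state of the Haar twirl on $\calH$.

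Next I would compare $\varrho_{\H}^{(k)}$ with $\rho_{\H}^{(k)}$ through the intermediate operator $\varrho_0^{(k)} := 2^{-2Nk}\sum_{\sigma\in S_k} r(\sigma)_\calH\otimes r(\sigma)_{\calH'} = \tfrac{k!}{2^{2Nk}}P_{\sym}$. A direct Weingarten computation gives $\varrho_{\H}^{(k)} = 2^{-Nk}\sum_{\sigma,\tau\in S_k}\Wg(\sigma,\tau)\, r(\sigma)_\calH\otimes r(\tau)_{\calH'}$, where $\Wg(\sigma,\tau)$ (at dimension $2^N$) depends only on $\sigma\tau^{-1}$. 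As an average of states $\ket{\psi}\bra{\psi}^{\otimes k}$, $\varrho_{\H}^{(k)}$ is supported on the symmetric subspace, so I conjugate by $P_{\sym}$: using $P_{\sym}(r(\sigma)\otimes r(\sigma)) = P_{\sym}$ one gets $P_{\sym}(r(\sigma)_\calH\otimes r(\tau)_{\calH'}) = P_{\sym}(I^\calH\otimes r(\sigma^{-1}\tau)_{\calH'})$, and since $\Wg(\sigma,\tau)$ is constant on the set $\{(\sigma,\tau):\sigma^{-1}\tau=\mu\}$, summing over $\sigma$ yields a factor $k!$, so $\varrho_{\H}^{(k)} = \tfrac{k!}{2^{Nk}}\sum_{\mu\in S_k}\Wg(\mu,e)\,P_{\sym}(I^\calH\otimes r(\mu)_{\calH'})$. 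Isolating $\mu=e$, using $\infnormsm{P_{\sym}(I\otimes r(\mu))}\le 1$, and invoking the Weingarten asymptotics at dimension $2^N$ in the regime $k^2\le 2^N$ — namely $\lvert 2^{Nk}\Wg(e,e)-1\rvert + 2^{Nk}\sum_{\mu\neq e}\lvert\Wg(\mu,e)\rvert \le k^2 2^{-N}$ — gives $\infnormsm{\tfrac{2^{2Nk}}{k!}\varrho_{\H}^{(k)} - P_{\sym}}\le k^2 2^{-N}$; since both operators are symmetric-subspace supported and $P_{\sym}$ restricts there to the identity, this is the operator inequality $(1-k^2 2^{-N})\varrho_0^{(k)} \preceq \varrho_{\H}^{(k)} \preceq (1+k^2 2^{-N})\varrho_0^{(k)}$. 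Finally $\varrho_0^{(k)} = \big[\prod_{j=0}^{k-1}(1+j\,2^{-2N})\big]\rho_{\H}^{(k)}$, and each factor lies in $[1,\,1+k^2 2^{-2N}]$ when $k^2\le 2^{2N}$, so $\rho_{\H}^{(k)}\preceq\varrho_0^{(k)}\preceq(1+k^2 2^{-2N})\rho_{\H}^{(k)}$.

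Chaining these bounds, $\rho_{\calE}^{(k)}=\varrho_{\calE}^{(k)} \preceq (1+\epsilon')(1+k^2 2^{-N})(1+k^2 2^{-2N})\rho_{\H}^{(k)}$, which is the claimed upper relative bound, while $\rho_{\calE}^{(k)} \succeq (1-\epsilon')(1-k^2 2^{-N})(1-k^2 2^{-2N})\rho_{\H}^{(k)}$; the elementary inequality $2-(1+a)(1+b)(1+c)\le(1-a)(1-b)(1-c)$ for $a,b,c\ge 0$ (the difference being $2(ab+bc+ca)$) then shows this lower bound equals $(1-\epsilon)\rho_{\H}^{(k)}$ with $\epsilon=(1+\epsilon')(1+k^2 2^{-N})(1+k^2 2^{-2N})-1$, completing the argument. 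The main obstacle is the second step: reducing $\varrho_{\H}^{(k)}$ to the single-permutation form on the symmetric subspace and extracting a correction that is genuinely \emph{linear} in $k^2 2^{-N}$ (not a worse polynomial in $k$) relies on the sharp Weingarten estimate at dimension $2^N$ — the same input used in the relative-error unitary-design proof — and on staying in the regime $k^2\le 2^N$ so that the permutation operators $r(\sigma)$ on $\calH^{\otimes k}$ remain linearly independent; the remaining manipulations are routine bookkeeping with semidefinite orderings.
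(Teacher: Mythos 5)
Your proposal is correct and follows the same route as the paper's proof: identify $\rho_{\calE}^{(k)}$ with the Choi state $[\Phi_{\calE}^{(k)}\otimes\id](P_{\EPR})$, push the relative-error ordering through $P_{\EPR}$, compare $\varrho_{\H}^{(k)}$ with $\varrho_0^{(k)}=2^{-2Nk}\sum_{\sigma}r(\sigma)\otimes r(\sigma)$ to get the $(1\pm k^2 2^{-N})$ factor, and then rescale $\varrho_0^{(k)}$ against the $2N$-qubit Haar moment for the $(1\pm k^2 2^{-2N})$ factor. The only difference is that you rederive the bound $(1-k^22^{-N})\varrho_0^{(k)}\preceq\varrho_{\H}^{(k)}\preceq(1+k^22^{-N})\varrho_0^{(k)}$ from the Weingarten expansion (correctly), whereas the paper simply cites it from Ref.~\cite{Schuster:2024ajb}; your explicit treatment of the lower bound via $(1-a)(1-b)(1-c)\ge 2-(1+a)(1+b)(1+c)$ is also a welcome detail the paper leaves implicit.
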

\begin{proof}
The $k$-th moment of the state ensemble $\calE$ is $\rho_{\calE}^{(k)}=\varrho^{(k)} := [\Phi^{(k)}\otimes \id] (P_{\EPR})$, where $\Phi^{(k)}$ is the $k$-fold channel of the unitary ensemble $\{U\}$.
The unitary ensemble being a $k$-design with relative error $\epsilon'$ implies that $(1-\epsilon')\varrho_{\H}^{(k)}\preceq\varrho^{(k)}\preceq(1+\epsilon')\varrho_{\H}^{(k)}$.
As shown in~\cite{Schuster:2024ajb}, $\varrho_{\H}^{(k)}$ can be approximated by $\varrho_0^{(k)}=2^{-2Nk}\sum_{\sigma,\tau}r(\sigma)_L\otimes r(\tau)_R$, that is to say, $(1-k^22^{-N})\varrho_0^{(k)}\preceq\varrho_{\H}^{(k)}\preceq(1+k^22^{-N})\varrho_0^{(k)}$.
Therefore,
\begin{equation}\label{eq:nogo_choi_state_rho0}
(1-k^22^{-N})(1-\epsilon')\varrho_0^{(k)}\preceq\rho_\calE^{(k)}\preceq\varrho_0^{(k)}(1+\epsilon')(1+k^22^{-N}).
\end{equation}

Our goal is to bound the relative error between $\rho_\calE^{(k)}$ and the $k$-th moment of the Haar random state on the doubled system (the left system and the right system) $\rho_{\H,LR}^{(k)}=D_{2N,k}^{-1}\sum_{\pi}r(\pi)_L\otimes r(\pi)_R$.
The only difference between $\rho_{\H,LR}^{(k)}$ and $\varrho_0^{(k)}$ is the overall factors, which are relatively close to each other: $2^{-Nk}D_{N,k}-1\leq k^22^{-2N}$, so $(1-k^22^{-2N})\rho_{\H,LR}^{(k)}\preceq \varrho_0^{(k)}\preceq(1+k^22^{-2N})\rho_{\H,LR}^{(k)}$. 
Combined with~\eqref{eq:nogo_choi_state_rho0}, we conclude
\begin{equation}
(1-\epsilon)\rho_{\H,LR}^{(k)}\preceq\rho_{\calE}^{(k)}\preceq (1+\epsilon)\rho_{\H,LR}^{(k)},
\end{equation}
with $\epsilon=(1+\epsilon')(1+k^22^{-N})(1+k^22^{-2N})-1$.
\end{proof}

\begin{corollary}[No-go theorem for Clifford sandwiched FDLU]\label{corollary:nogo_unitary_relative}
Let $V_1$ and $V_2$ be Clifford unitaries, and $U$ be d-dimensional local circuits with depth bounded by $t$.
\begin{equation*}
\includegraphics[width=6cm]{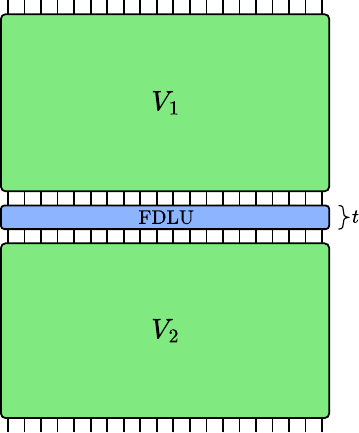}
\end{equation*}
Then, the unitary ensemble $\{V_1 U V_2\}$ does not form a unitary $k$-design for $k\geq 4$ with $\epsilon$ relative error for $t=\tilde{o}((\log(N/\epsilon)^{1/d})$, with $d=O(1)$.
\end{corollary}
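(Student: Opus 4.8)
The idea is to reduce this no-go statement for unitary designs to the already-established no-go statement for Clifford-augmented finite-depth local unitaries acting on a product state (Corollary~\ref{thm:nogo_clifford_augmented_FDLU}), by passing to Choi states via Lemma~\ref{lemma:unitary_design_implies_state_design}.

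First I would identify the Choi-state ensemble of $\{V_1 U V_2\}$ with a Clifford-augmented FDLU ensemble on a doubled system. Writing $\ket{\EPR}$ for the maximally entangled state between two copies $L$ and $R$ of the $N$-qubit system and using $(A\otimes I)\ket{\EPR}=(I\otimes A^\top)\ket{\EPR}$, one has
\begin{equation*}
(V_1 U V_2\otimes I)\ket{\EPR}=(V_1\otimes V_2^\top)\,(U\otimes I)\ket{\EPR}.
\end{equation*}
Since the Clifford group is closed under transposition, $W:=V_1\otimes V_2^\top$ is a Clifford unitary on the $2N$-qubit system $LR$. Laying out the $2N$ qubits on a $d$-dimensional lattice so that the $i$-th Bell pair occupies two adjacent sites, $\ket{\EPR}$ is prepared from $\ket{0}^{\otimes 2N}$ by a depth-$O(1)$ $d$-dimensional local circuit; composing with $U$ (which acts only on the $L$-sublattice with depth $\le t$) gives $(U\otimes I)\ket{\EPR}=U'\ket{0}^{\otimes 2N}$ for a $d$-dimensional local circuit $U'$ of depth $\le t+O(1)$. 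Hence the Choi-state ensemble of $\{V_1UV_2\}$ is $\calE'=\{W U'\ket{0}^{\otimes 2N}\}$, precisely of the form to which Corollary~\ref{thm:nogo_clifford_augmented_FDLU} applies (the bound there is universal over the distributions of $W$ and $U'$, since it comes from the Clifford-invariant positive observable $r(T_S)$ and an area-law estimate on the Rényi entropies of $U'\ket{0}^{\otimes 2N}$).

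Second, I would apply Corollary~\ref{thm:nogo_clifford_augmented_FDLU} to $\calE'$ on $2N=\Theta(N)$ qubits. This gives a lower bound on the relative error of $\calE'$ as a state $k$-design ($k\ge 4$) of the form $\frac{1}{24}\big[(L/(4dt'+2)-1)^d\,2^{-1-8dt'(4dt'+2)^{d-1}}-3\big]$ with $L=(2N)^{1/d}$ and $t'=t+O(1)$, which fails to be small whenever $t^d\log t=o(\log(N/\epsilon))$ and in fact diverges when $t=O(1)$. Finally I would invoke the contrapositive of Lemma~\ref{lemma:unitary_design_implies_state_design}: if $\{V_1UV_2\}$ formed a unitary $k$-design with relative error $\epsilon'$, then $\calE'$ would be a state $k$-design with relative error at most $(1+\epsilon')(1+k^2 2^{-N})(1+k^2 2^{-2N})-1=\epsilon'+O(k^2 2^{-N})$ (valid since $k^2\le 2^N$). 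Comparing with the previous lower bound forces $\epsilon'$ to be at least that (polynomially large, or for $t=O(1)$ extensive) quantity minus an exponentially small correction, contradicting smallness of $\epsilon'$. This proves that no relative-error unitary $k$-design with $k\ge 4$ exists for $t=\tilde{o}((\log(N/\epsilon))^{1/d})$.

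\textbf{Main obstacle.} The content is mostly bookkeeping rather than conceptual: one must verify that folding the $N$ EPR pairs into the circuit genuinely preserves $d$-dimensional geometric locality (so that $U'$ is a bounded-depth $d$-dimensional local circuit on the doubled lattice), that $V_2^\top$ is Clifford, and --- most importantly --- that the exponentially small $O(k^2 2^{-N})$ slack in Lemma~\ref{lemma:unitary_design_implies_state_design} cannot overcome the (at least polynomially large, and for $t=O(1)$ extensive) lower bound coming from Corollary~\ref{thm:nogo_clifford_augmented_FDLU}. The threshold $t=\tilde{o}((\log(N/\epsilon))^{1/d})$ transfers unchanged because replacing $N$ by $2N$ only shifts a logarithm by a constant and the Bell-pair preparation adds only $O(1)$ to the depth.
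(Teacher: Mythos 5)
Your proposal is correct and follows essentially the same route as the paper: rewrite the Choi state as $(V_1\otimes V_2^\top)(U\otimes I)\ket{\EPR}$, observe that $(U\otimes I)\ket{\EPR}$ is prepared by a depth-$(t+1)$ local circuit on $2N$ qubits, apply Corollary~\ref{thm:nogo_clifford_augmented_FDLU}, and transfer the bound back to the unitary ensemble via Lemma~\ref{lemma:unitary_design_implies_state_design}. The bookkeeping points you flag (locality of the doubled lattice, Cliffordness of $V_2^\top$, and the exponentially small slack in the lemma) are handled the same way in the paper.
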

\begin{proof}
Consider the state ensemble $\calE = \{\big((V_1 U V_2)_L \otimes I_R\big)\ket{\EPR}\}$.
Each state $\ket{\psi} \in \calE$ is a Clifford augmented short-range entangled state, i.e.
\begin{equation}
\ket{\psi}=\big((V_1 U V_2)_L\otimes I_R\big)\ket{\EPR} =\big((V_{1})_L\otimes (V_{2}^{\top})_R\big) (U_L\otimes I_R)\ket{\EPR}.
\end{equation}
Since $\ket{\EPR}$ can be prepared from $\ket{0}^{\otimes 2N}$ in a depth-$1$ circuit, $(U_L\otimes I_R)\ket{\EPR}$ can be prepared using a unitary circuit of depth $t' = t+1$ in a system of $2N$ qubits.
According to Corollary~\ref{thm:nogo_clifford_augmented_FDLU}, the relative error of this state ensemble from a $k$-design has a lower bound
\begin{align}
    \epsilon \geq \frac{1}{24}\left[\left(\frac{L}{4dt' + 2}-1\right)^d\frac{1}{2\cdot 2^{8dt'(4dt' +2)^{d-1}}}-3 \right],
\end{align}
where we assume the system is of size $N = L^d$ with each dimension of size $L$.
Hence, according to Lemma~\ref{lemma:unitary_design_implies_state_design}, the relative error $\epsilon'$ between the unitary ensemble and a $k$-design for $k \geq 4$ has a lower bound
\begin{align}
    \epsilon' \geq \frac{\frac{1}{24}\left[\left(\frac{L}{4dt' + 2}-1\right)^d\frac{1}{2\cdot 2^{8dt'(4dt' +2)^{d-1}}}-3 \right] + 1}{(1+16\cdot 2^{-N})(1+16\cdot2^{-2N})} - 1.
\end{align}
For a depth $t$ being a constant, the relative error $\epsilon'$ is unbounded.
\end{proof}

\section{Designs with bounded additive error}\label{sec:add_error}
In this section, we construct state and unitary designs with bounded additive errors using Clifford circuits augmented with magic gates.

\subsection{State designs with bounded additive errors}
\label{sec:state_additive}
We prove that two layers of Clifford gates followed by $\tilde{O}(k^2)$ single-qubit magic gates can generate a state $k$-design with bounded additive error.
A physical understanding of our result is provided in Sec.~\ref{sec:stat_mech_state_additive} by mapping the frame potential to statistical mechanics models, where magic plays the role of a symmetry-breaking field.
We also prove in Appendix~\ref{sec:state_additive_initial_magic} that one can alternatively place random unitaries drawn from $k$-design before the Clifford gates. 
To achieve $\epsilon$ additive error in this case, one needs to apply a product of $k$-design unitaries over $O(k^2+\log(1/\epsilon))$ qubits, while each unitary acts on disjoint qubit clusters of size $O(k\log k)$.

To prove that our construction forms a state design, we first show that one can obtain additive-error state $k$-designs over $N$ qubits by applying single-qubit magic gates to global random stabilizer states. 
Our proof uses the upper bound of the additive error for state designs in terms of the frame potential
\begin{equation}
    \lVert \rho_\calE^{(k)}-\rho_{\H}^{(k)}\rVert_1^2 \leq 2^{N k}\big[F_{\calE}^{(k)}-F_{\H}^{(k)}\big].
\end{equation}
We prove a bounded additive error by providing an upper bound on the frame potential difference.

\begin{lemma}[Additive-error state design from applying constant magic to random stabilizer states]\label{thm:state_additive_linear_depth_final_magic}
Consider an $N$-qubit state $\ket{\psi}$ drawn from an approximate stabilizer state $k$-design with $\epsilon'$ additive error. 
Apply $N_M$ single-qubit unitary gates, $U_i$ with $i = 1,\cdots, N_M$, to $N_M$ distinct qubits. 
Assume $N\geq k+\log k-2$. 
\begin{equation*}\label{fig:state_additive_linear_depth}
\includegraphics[width=8.9cm]{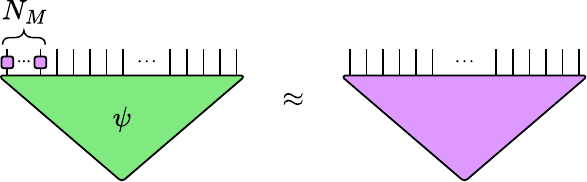}    
\end{equation*}
Then, the state ensemble $\calE=\{U_1\cdots U_{N_M}\ket{\psi}\}$ forms an approximate state $k$-design with $\epsilon$ additive error.
\begin{itemize}
\item For $U_i$ drawn from (exact) unitary $k$-designs,
\begin{align}
\epsilon&\leq 2^{O(k^2)}\left(\frac{7}{8}\right)^{\frac{N_M}{4}} + \varepsilon'.
\end{align}
\item For $U_i$ drawn from $\{\mathds{1}, u, u^\dagger\}$, 
\begin{align}
\epsilon&\leq 2^{O(k^2)}\left(1 - c\log^{-2}k\right)^{\frac{N_M}{2}} + \varepsilon',
\end{align}
where $c$ is a constant depending on the magic gate $u$.
\end{itemize}

\end{lemma}
\begin{proof}
First, the additive error of the $k$-th moment of the circuit ensemble can be upper bounded by the triangular inequality
\begin{equation}
\begin{aligned}
&\lVert \Phi^{(k)}_{u}\big(\rho_{\C,\epsilon'}^{(k)}\big)-\rho_{\H}^{(k)}\rVert_1 \leq \lVert \Phi^{(k)}_{u}\big(\rho_{\C}^{(k)}\big)-\rho_{\H}^{(k)}\rVert_1 + \lVert \Phi^{(k)}_{u}\big(\rho_{\C,\epsilon'}^{(k)}\big)-\Phi^{(k)}_{u}\big(\rho_{\C}^{(k)}\big)\rVert_1 \\
\leq& \lVert \Phi^{(k)}_{u}\big(\rho_{\C}^{(k)}\big)-\rho_{\H}^{(k)}\rVert_1 + \varepsilon',\label{eq:addtive_error_state_triangle}
\end{aligned}
\end{equation}
where $\Phi_u^{(k)}(\cdot)$ is the $k$-fold channel associated with the single-qubit gate set that $U_i$ is drawn from, $\rho_{\C,\epsilon'}^{(k)}$ and $\rho_{\C}^{(k)}$ are the $k$-th moments of the stabilizer state $k$-design with additive error $\epsilon'$ and an exact stabilize state $k$-design, respectively.
For the second inequality, we use the monotonicity of 1-norm under quantum channels, 
\begin{align}
\lVert \Phi^{(k)}_{u}\big(\rho_{\C}^{(k)}\big)-\Phi^{(k)}_{u}\big(\rho_{\C,0}^{(k)}\big)\rVert_1 \leq \lVert\rho_{\C}^{(k)}-\rho_{\C,0}^{(k)}\rVert_1 \leq \varepsilon'.
\end{align}

Next, we bound the distance between $\Phi^{(k)}_{u}\big(\rho_{\C}^{(k)}\big)$ and $\rho_{\H}^{(k)}$ using the $k$-th frame potential.
Given that $F_\calE^{(k)} \geq F_{\H}^{(k)}$, we here provide an upper bound for $F_{\calE}^{(k)}$.
Specifically,
\begin{equation}
\begin{aligned}
F_{\calE}^{(k)}&=\Tr\big[\Phi^{(k)}_{u}\big(\rho_{\C}^{(k)}\big)\big]^2=\Tr\big[\rho_{\C}^{(k)}\Phi^{(k)}_{u}\big(\rho_{\C}^{(k)}\big)\big]=\frac{1}{Z_{n,k}^2}\sum_{T_1,T_2}\bbrakket{T_1}{T_2}^{N-M}\bbra{T_1}\hat{\Phi}_{u}^{(k)}\kket{T_2}^{M}.
\end{aligned}
\label{eq:frame_potential_E}
\end{equation}
where $\hat{\Phi}_{u}^{(k)}=\Ens_{U\sim \calE_u}U^{\otimes k}\otimes U^{*\otimes k}$ with $\calE_u$ being either $\{\mathds{1},u,u^\dagger\}$ or an exact $k$-design, and $\kket{T_1}$ and $\kket{T_2}$ are over single qubits.

A key step of our proof relies on the bound on the overlap $\bbra{T_1}\hat{\Phi}_{u}^{(k)}\kket{T_2}$~\cite{Haferkamp:2020qel}. 
In the case that single-qubit gates are drawn from $\{\mathds{1},u,u^\dagger\}$ (Lemma 2 of Ref.~\cite{Haferkamp:2020qel}), there exists a constant $c>0$ such that for either $T_1$ or $T_2 \notin S_k$
\begin{align}
\abs{\bbra{T_1}\hat{\Phi}_u^{(k)}\kket{T_2}} \leq \left(1 - c\log^{-2}k\right)2^k.\label{eq:overlap_bound_magic}
\end{align}
In the case that single-qubit gates are drawn from an exact $k$-design (Lemma 13 of Ref.~\cite{Haferkamp:2020qel}),
we have for $T \notin S_k$,
\begin{equation}
\bbra{T}\hat{\Phi}_{\H}^{(k)}\kket{T}\leq \frac{7}{8} 2^k.
\end{equation}
When $T_1$ is not a permutation, the Cauchy-Schwartz inequality indicates that
\begin{equation}
|\bbra{T_1}\hat{\Phi}_{\H}^{(k)}\kket{T_2}|\le\sqrt{\bbra{T_1}\hat{\Phi}_{\H}^{(k)}\kket{T_1}\bbrakket{T_2}{T_2}}\leq \sqrt{\frac{7}{8}}2^k.\label{eq:overlap_bound_k-design}
\end{equation}
We note that for $T_1 = T_2 = \pi\in S_k$ being a permutation, $\bbra{\pi}\hat{\Phi}_{u}^{(k)}\kket{\pi}=2^k$.
Thus, inserting a single-qubit magic gate suppresses the overlap $\bbra{T_1}\hat{\Phi}_{u}^{(k)}\kket{T_2}$ when either $T_1$ or $T_2$ is not a permutation.

We divide the frame potential in Eq.~\eqref{eq:frame_potential_E} into two parts.
The contributions from $T_1,T_2\in S_k$ are 
\begin{equation}
\begin{aligned}
F_\rmI^{(k)} :=& \frac{1}{Z_{N,k}^2}\sum_{\sigma,\tau\in S_k}\bbrakket{\sigma}{\tau}^{N-N_M}\bbra{\sigma}\hat{\Phi}_{u}^{(k)}\kket{\tau}^{N_M}=\frac{1}{Z_{N,k}^2}\sum_{\sigma,\tau\in S_k}\bbrakket{\sigma}{\tau}^{N}=\frac{k! D_{N,k}}{Z_{N,k}^2},
\end{aligned}
\end{equation}
where $D_{N,k} = \frac{(2^N+k-1)!}{(2^N-1)!}$.
The deviation of $F_\rmI^{(k)}$ from the frame potential $k!2^{-N k}$  of the Haar random state is bounded by 
\begin{equation}
\left|\frac{k!D_{N,k}}{Z_{N,k}^2}-\frac{k!}{2^{N k}}\right|\le2^{O(k^2)-N-kN},
\end{equation}
where we use an upper bound $Z_{N,k}\leq 2^{Nk}(1 + k 2^{k-N})$ for $N\geq k+\log k-2$~\cite{Haferkamp:2020qel}.

The contribution from either $T_1\notin S_k$ or $T_2\notin S_k$ has an upper bound
\begin{equation}
\begin{aligned}
F_\rmII^{(k)} :=& \frac{1}{Z_{N,k}^2}\sum_{T_1 \text{ or } T_2\notin S_k}\bbrakket{T_1}{T_2}^{N-N_M}\bbra{T_1}\hat{\Phi}_{u}^{(k)}\kket{T_2}^{N_M} \leq \frac{2^{N k}}{Z_{N,k}^2}2^{O(k^2)}a^{N_M},
\end{aligned}
\end{equation}
where $0<a<1$ depends on the gate set of single-qubit random gates. Specifically, $a = \sqrt{7/8}$ when the magic gates $U_i$ are drawn from exact $k$-design, $a = 1 - c\log^{-2}k$ for $U_i$ drawn from $\{\mathds{1}, u, u^\dagger\}$ with $u$ being a magic gate.

Putting these together $F_{\calE}^{(k)} = F_\rmI^{(k)}+F_{\rmII}^{(k)}$, we have
\begin{equation}
\begin{aligned}
\lVert \Phi^{(k)}_{u}\big(\rho_{\C}^{(k)}\big)-\rho_{\H}^{(k)}\rVert_1^2 &\leq 2^{N k}\big[F_{\calE}^{(k)}-F_{\H}^{(k)}\big]\le2^{O(k^2)-N}+ 2^{O(k^2)}a^{N_M}\le2^{O(k^2)}a^{N_M}.    
\end{aligned}
\end{equation}
Inserting this result in Eq.~\eqref{eq:addtive_error_state_triangle}, we obtain the upper bound on the additive error $\epsilon$.
\end{proof}

\begin{theorem}[Additive-error state designs from low-depth Clifford unitaries followed by constant magic]\label{thm:state_additive_final_magic}
Apply a two-layer brickwork circuit $V$ to the initial state $\ket{0}^{\otimes N}$, where each gate is drawn form an exact Clifford $k$-design. 
Let $\xi$ be the smallest size of the overlapping regions of the Clifford unitaries.
Then apply $N_M$ single-qubit unitary gates $U_1,\cdots, U_{N_M}$ to $N_M$ distinct qubits, where each gate is independently drawn from either an exact unitary $k$-design or a gate set $\{\mathds{1},u,u^\dagger\}$ with $u$ being a magic gate. 

The state ensemble $\calE=\left(\bigotimes_{i=1}^{N_M}U_i\right)V\ket{0}^{\otimes N}$ forms an approximate state $k$-design for $k< O(\sqrt{N})$ with additive error 
\begin{align}
    \epsilon\leq 2^{O(k^2)}a^{\frac{N_M}{2}} + N2^{-\xi-\log\xi+O(k^2)},
\end{align}
where $a = \sqrt{7/8}$ if the single-qubit gates are drawn from an exact $k$-design, and $a = 1 -c\log^{-2}k < 1$ with the constant $c>0$ if the gate is drawn from $\{\mathds{1},u,u^\dagger\}$. 
In other words, one can achieve $\epsilon$ additive error with $\xi=O(\log(N/\epsilon)+k^2)$ and 
\begin{itemize}
\item $N_M=O(k^2+\log(1/\epsilon))$ for $k$-design random unitary;
\item $N_M = O( (k^2 + \log(1/\epsilon))\log^2 k)$ for single-qubit gates drawn from $\{1, u, u^\dagger\}$.
\end{itemize}
\begin{equation*}
\includegraphics[width=10.8cm]{Figure/state_additive_final_magic.pdf}
\end{equation*}
\end{theorem}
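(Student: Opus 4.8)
The plan is to obtain the theorem as a short composition of two results already established: the low-depth Clifford layer alone produces an approximate \emph{stabilizer} state $k$-design (Corollary~\ref{thm:state_additive_clifford_design}), and Lemma~\ref{thm:state_additive_linear_depth_final_magic} turns any such ensemble into a genuine approximate state $k$-design once $N_M$ single-qubit $k$-design gates are applied. Concretely, I would first observe that $V\ket{0}^{\otimes N}$ is a random stabilizer state, since $\ket{0}^{\otimes N}$ is a stabilizer state and every gate of the two-layer brickwork $V$ is Clifford. Invoking Corollary~\ref{thm:state_additive_clifford_design} with the hypothesis $N\ge k+\log k-2$, the ensemble $\{V\ket{0}^{\otimes N}\}$ forms an approximate stabilizer state $k$-design with additive error $\epsilon'\le N2^{-\xi-\log\xi+O(k^2)}$, where $\xi$ is the smallest size of the overlapping regions of the Clifford gates of $V$.

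Next I would feed this ensemble into Lemma~\ref{thm:state_additive_linear_depth_final_magic}, treating $\{V\ket{0}^{\otimes N}\}$ as the $\epsilon'$-approximate stabilizer state $k$-design in the hypothesis and taking $U_1,\dots,U_{N_M}$ to be the $N_M$ single-qubit exact $k$-design unitaries acting on distinct qubits (their product $\bigotimes_i U_i$ is unambiguous since the supports are disjoint). The lemma's only constraint on the dimension is again $N\ge k+\log k-2$, so it applies verbatim and yields that $\calE=\{(\bigotimes_{i=1}^{N_M}U_i)\,V\ket{0}^{\otimes N}\}$ is an approximate state $k$-design with
\begin{equation*}
\epsilon\;\le\;2^{O(k^2)}\left(\frac{7}{8}\right)^{N_M/4}+\epsilon'\;\le\;2^{O(k^2)}\left(\frac{7}{8}\right)^{N_M/4}+N2^{-\xi-\log\xi+O(k^2)},
\end{equation*}
which is precisely the claimed bound.

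It then remains to read off the parameter scalings by balancing the two terms. The first term is at most $\epsilon/2$ once $(7/8)^{N_M/4}\le 2^{-O(k^2)}\epsilon/2$, i.e.\ once $N_M$ exceeds a fixed constant times $k^2+\log(1/\epsilon)$, so $N_M=O(k^2+\log(1/\epsilon))$ suffices. The second term is at most $\epsilon/2$ once $\xi+\log\xi\ge\log(2N/\epsilon)+O(k^2)$, which holds for $\xi=O(\log(N/\epsilon)+k^2)$. Adding the two halves gives additive error $\epsilon$, as stated. (The variant with the magic gates placed before the Clifford layer would be handled the same way, using the corresponding result of Appendix~\ref{sec:state_additive_initial_magic} in place of Lemma~\ref{thm:state_additive_linear_depth_final_magic}.)

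Because both ingredients are already in hand, this theorem itself presents no real obstacle; the substantive content lives in the results it invokes. In Lemma~\ref{thm:state_additive_linear_depth_final_magic}, the engine is the inequality $\bbra{T}\hat{\Phi}_{\H}^{(k)}\kket{T}\le\frac{7}{8}\,2^k$ for every non-permutation $T\notin S_k$ (versus $2^k$ attained at permutations), imported from Ref.~\cite{Haferkamp:2020qel}, which makes each single-qubit Haar-random insertion damp the frame-potential contributions not aligned to $S_k$ by a constant factor; in Corollary~\ref{thm:state_additive_clifford_design}, it is the gluing argument of Lemma~\ref{thm:uniformity_gluing} and Theorem~\ref{thm:uniformity_log_depth}. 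The one point needing care in assembling these here is bookkeeping the $2^{O(k^2)}$ prefactors and the additive $O(k^2)$ shifts in the exponents, so that the final counts $N_M=O(k^2+\log(1/\epsilon))$ and $\xi=O(\log(N/\epsilon)+k^2)$ close without the hidden constants picking up a spurious dependence on $N$ or $\epsilon$.
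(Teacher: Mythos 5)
Your proposal is correct and is exactly the paper's argument: the paper's entire proof of this theorem is the one-line statement that it is "a combination of Lemma~\ref{thm:state_additive_linear_depth_final_magic} and Corollary~\ref{thm:state_additive_clifford_design}," which is precisely the composition you carry out (with the parameter balancing made explicit).
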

\begin{proof}
A combination of Lemma~\ref{thm:state_additive_linear_depth_final_magic} and Corollary~\ref{thm:state_additive_clifford_design}. We note that the assumption $N \geq k + \log k - 2$ in Lemma~\ref{thm:state_additive_linear_depth_final_magic} is satisfied automatically as we require $N \ge \xi \ge O(k^2)$.
\end{proof}
We note that if the single qubit random unitaries are drawn from approximate $k$-designs with additive error $\epsilon'$, the additive error has an additional term $(1+\epsilon')^{N_M}-1$, which follows from the triangular inequality and the definition of diamond norm. This term is bounded for $N_M = O(k^2+\log(1/\epsilon))$ as long as $\epsilon'$ is small enough.
We also note that in Appendix~\ref{app:gluing_lemma_haar_clifford}, we proved a Lemma showing that one can glue a state $k$-design and a stabilizer state $k$-design using a Clifford $k$-design to form a state $k$-design over a larger region. 
This Lemma combined with Lemma~\ref{thm:state_additive_linear_depth_final_magic} can give an alternative proof of Theorem~\ref{thm:state_additive_final_magic}.

\begin{corollary}[Circuit depth of additive-error state designs]
In an $N$-qubit system, one can realize approximate state $k$-designs with $\epsilon$ additive error using $O(k^2+\log(1/\epsilon))$ magic gates in depth 
\begin{itemize}
\item $O(\log (N/\epsilon)+k^2)$ using 1D circuits.
\item $O(\log\log (N/\epsilon)+\log k)$ using all-to-all circuits with $O(N(\log(N/\epsilon)+k^2))$ ancillas. 
\end{itemize}
\begin{proof}
Similar to the proof of Corollary~\ref{thm:unitary_relative_circuit_depth}.
\end{proof}

\end{corollary}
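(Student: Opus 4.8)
The plan is to combine Theorem~\ref{thm:state_additive_final_magic} with the known compilation costs of random Clifford unitaries, exactly mirroring the argument of Corollary~\ref{thm:unitary_relative_circuit_depth} and the remark appended to Corollary~\ref{thm:state_additive_clifford_design}. Theorem~\ref{thm:state_additive_final_magic} already establishes that the ensemble $\calE = \bigl(\bigotimes_{i=1}^{N_M} U_i\bigr) V \ket{0}^{\otimes N}$ is an $\epsilon$-additive-error state $k$-design once $V$ is a two-layer brickwork circuit of exact Clifford $k$-design gates with overlap size $\xi = O(\log(N/\epsilon) + k^2)$, and $N_M = O(k^2 + \log(1/\epsilon))$ single-qubit Haar-random (magic) gates are applied to distinct qubits. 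The magic-gate count is then immediate; what remains is to translate the abstract gate layers of that construction into circuit depth in the two architectures.

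First I would handle $V$. Each gate of $V$ acts on $2\xi$ qubits and is drawn from an exact Clifford $k$-design; since ``Clifford $k$-design'' here is defined relative to the Clifford group, the uniform distribution over the full Clifford group on $2\xi$ qubits suffices, so it is enough to cite the compilation cost of a uniformly random Clifford unitary. By Ref.~\cite{Bravyi:2020xdv}, any Clifford unitary on $m$ qubits can be realized by a 1D geometrically local circuit of depth $O(m)$; by Ref.~\cite{Jiang:2019aom}, it can be realized by an all-to-all circuit of depth $O(\log m)$ using $O(m^2)$ ancillas. Because $V$ consists of only two layers of such gates, each acting on disjoint blocks within a layer, the depth of $V$ is $O(\xi)$ in 1D and $O(\log \xi)$ in the all-to-all model, and the all-to-all implementation uses $O(N/\xi)$ gates, each with $O(\xi^2)$ ancillas, for a total of $O(N\xi)$ ancillas.

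Next, the magic layer is $N_M$ single-qubit gates on distinct qubits; these form one depth-$1$ layer in both the 1D (geometrically local) and all-to-all architectures, so appending them does not change the asymptotic depth. Substituting $\xi = O(\log(N/\epsilon) + k^2)$ then gives total depth $O(\log(N/\epsilon) + k^2)$ in 1D, and total depth $O(\log \xi) = O\bigl(\log(\log(N/\epsilon) + k^2)\bigr) = O(\log\log(N/\epsilon) + \log k)$ in the all-to-all model, with ancilla count $O(N\xi) = O\bigl(N(\log(N/\epsilon) + k^2)\bigr)$, as claimed. The magic-gate count $N_M = O(k^2 + \log(1/\epsilon))$ is carried over unchanged from Theorem~\ref{thm:state_additive_final_magic}.

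I do not expect a genuine obstacle: the statement is essentially an accounting of Theorem~\ref{thm:state_additive_final_magic} against known Clifford-compilation results. The only points needing care are (i) confirming that the single-qubit magic layer can be absorbed into depth $1$ without violating geometric locality, which is automatic; (ii) the elementary estimate $\log(\log(N/\epsilon) + k^2) = O(\log\log(N/\epsilon) + \log k)$ used to put the all-to-all depth into the stated form; and (iii) summing the ancilla overhead over all $O(N/\xi)$ blocks of both layers rather than per block, which is what yields the $O(N(\log(N/\epsilon)+k^2))$ figure.
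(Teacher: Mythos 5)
Your proposal is correct and follows exactly the route the paper intends: the paper's proof is simply a pointer to the argument of Corollary~\ref{thm:unitary_relative_circuit_depth}, i.e.\ instantiating Theorem~\ref{thm:state_additive_final_magic} with $\xi=O(\log(N/\epsilon)+k^2)$ and $N_M=O(k^2+\log(1/\epsilon))$ and invoking the Clifford-compilation costs of Refs.~\cite{Bravyi:2020xdv,Jiang:2019aom}, with the single-qubit magic layer contributing depth $1$. Your accounting of the depths and the $O(N\xi)$ ancilla total matches the paper's.
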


Our construction of state $k$-designs with additive error only use $O(k^2+\log(1/\epsilon))$ (a system size independent number) of magic gates.
One may wonder why the $k$-th moment of the ensemble cannot be distinguished from that of Haar random states by measuring $r(T_S)$ defined in Sec.~\ref{sec:rel_error_nogo}. 
This observable is related to stabilizer R\'enyi entropy, which measures the magic of pure states.
The reason is that a state design with bounded additive error does not imply a bounded difference between $\Tr r(T_S)\rho_{\H}^{(k)}$ and $\Tr r(T_S)\rho_{\calE}^{(k)}$.
We may attempt to use the H\"older's inequality to obtain an upper bound
\begin{equation}
\left|\Tr[r(T_S)(\rho_{\calE}^{(k)}-\rho_{\H}^{(k)})]\right|\leq \lVert r(T_S)\rVert_\infty\cdot\lVert\rho_{\calE}^{(k)}-\rho_{\H}^{(k)}\rVert_1.
\end{equation}
However, $\lVert r(T_S)\rVert_\infty=2^N$, leading to a very loose bound $2^N\epsilon$.
Furthermore, we note that a small $O(1)$ difference between $\Tr r(T_S)\rho_{\H}^{(k)}$ and $\Tr r(T_S)\rho_{\calE}^{(k)}$ does not mean the two states are close in the stabilizer R\'enyi entropy.
This is because $r(T_S)$ has an expectation value $\Tr r(T_S)\rho_{\H}^{(k)}$ that is exponentially small in $N$ for a $k$-design; an $O(1)$ difference does not imply a bounded error for its logarithm, which is the stabilizer R\'enyi entropy.

\subsection{Unitary designs with bounded additive error}\label{sec:unitary_additive}
We consider the circuit that consists of Haar random gates over $O(k^3)+\log(1/\varepsilon)$ qubits followed by global random Clifford unitaries. 
We prove that the circuit ensemble forms an approximate unitary $k$-design up to $\varepsilon$ additive error.

\begin{theorem}[Additive-error unitary designs from circuits with magic gates over constant-number of qubits]
Consider an $ N$-qubit system with $M$ being a subsystem of size $N_M$. $U$, acting on $M$, is drawn from approximate unitary $k$-design with additive error $\epsilon'$. $V$ is drawn from (exact) Clifford $k$-design acting on all qubits. Then, the ensemble $\calE = \{VU\}$ forms an approximate unitary $k$-design up to additive error $\varepsilon\leq \epsilon'+ 2^{O(k^3)-N_M}(1+k^22^{-N_M})$ for $k \leq N+1$.
\begin{equation*}
\includegraphics[width=5.5cm]{Figure/unitary_additive.pdf}
\end{equation*}
\end{theorem}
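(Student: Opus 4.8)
The plan is to use that $V$ comes from an \emph{exact} Clifford $k$-design, so $\Phi^{(k)}_\calE=\Phi^{(k)}_\C\circ\Phi^{(k)}_U$; to replace $U$ by an exact Haar twirl on $M$ at the cost of $\epsilon'$; and then to analyze $\Phi^{(k)}_\C\circ\Phi^{(k)}_{\H,M}$ with the Clifford-commutant Weingarten calculus of Sec.~\ref{sec:prelim_clifford}, showing that, up to exponentially small corrections, the only contribution is the $S_k$ sector, which reproduces $\Phi^{(k)}_\H$, every other contribution carrying a factor $2^{-N_M}$.

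First, since $U$ and $V$ are independent and $V$ is an exact Clifford $k$-design, $\Phi^{(k)}_\calE=\Phi^{(k)}_\C\circ\Phi^{(k)}_U$, where $\Phi^{(k)}_U$ acts on the $k$ replicated copies of $M$ and is extended by the identity on $\bar M$. Let $\Phi^{(k)}_{\H,M}$ be the $k$-fold twirl by a Haar-random unitary supported on $M$. Using the triangle inequality, contractivity of the diamond norm under composition with the channel $\Phi^{(k)}_\C$, and invariance of the diamond norm under tensoring with $\id_{\bar M}$,
\begin{equation*}
\dnorm{\Phi^{(k)}_\calE-\Phi^{(k)}_\H}\le\dnorm{\Phi^{(k)}_U-\Phi^{(k)}_{\H,M}}+\dnorm{\Phi^{(k)}_\C\circ\Phi^{(k)}_{\H,M}-\Phi^{(k)}_\H}\le\epsilon'+\dnorm{D},
\end{equation*}
with $D:=\Phi^{(k)}_\C\circ\Phi^{(k)}_{\H,M}-\Phi^{(k)}_\H$, so it remains to prove $\dnorm{D}\le2^{O(k^3)-N_M}(1+k^22^{-N_M})$.

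Next I would use that Haar twirling on all $N$ qubits absorbs a prior Haar twirl on $M$ (right-invariance of the Haar measure), i.e. $\Phi^{(k)}_\H=\Phi^{(k)}_\H\circ\Phi^{(k)}_{\H,M}$, so $D=(\Phi^{(k)}_\C-\Phi^{(k)}_\H)\circ\Phi^{(k)}_{\H,M}$. I would expand $\Phi^{(k)}_{\H,M}$ over permutations $\sigma,\tau\in S_k$ acting on $M$ (Haar Weingarten $\Wg_{N_M}$) and $\Phi^{(k)}_\C$ over $\Sigma_{k,k}$, using $r(T)=r(T)_M\otimes r(T)_{\bar M}$ and $\Tr_M[r(T_2)_M^\top r(\sigma)_M]=2^{N_M(k-|T_2,\sigma|)}$. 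Splitting the Clifford twirl according to whether its commutant labels $T_1,T_2$ lie in $S_k$: the $S_k$-labelled piece, against the subtracted $\Phi^{(k)}_\H$, leaves only the discrepancy between $\Wg_\C$ restricted to $S_k\times S_k$ and $\Wg_N$, which the Clifford Weingarten asymptotics (Appendix~\ref{sec:clifford_commutant_wg_asymptotics}) bound by a $2^{-N+O(k^2)}$-relative factor — absorbed into the claimed bound since $N\ge N_M$. The remaining, non-permutation, piece is where the $2^{-N_M}$ lives: for any $T_2\notin S_k$ and any $\sigma\in S_k$ we have $T_2\neq\sigma$, hence $|T_2,\sigma|\ge1$, so the overlap $2^{N_M(k-|T_2,\sigma|)}\le2^{N_M(k-1)}$ is a factor $2^{-N_M}$ below its maximal value $2^{N_M k}$ (terms with $T_1\neq T_2$ are suppressed more strongly still by the off-diagonal decay $\abs{\Wg_\C(T_1,T_2)}\le2^{-Nk-N+O(k^2)}$). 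To upgrade this to a diamond-norm bound I would carry a reference system through the expansion and estimate the trace norm of the resulting Choi-type operator; the $N$-dependent factors — chiefly the operator norms $\lVert r(T_1)\rVert_1=2^{N(k-\dim\rmN)}\le2^{Nk}$ — cancel the $2^{-Nk}$ scaling of the Clifford Weingarten weights, leaving the combinatorial factor $2^{O(k^2)}$ for the sum over $\Sigma_{k,k}$, the bound $\sum_{\sigma,\tau}\abs{\Wg_{N_M}(\sigma,\tau)}\le k!(1+k^22^{-N_M})$, the magnitude bounds on $\Wg_\C$, and the explicit $2^{-N_M}$. Collecting these yields $\dnorm{D}\le2^{O(k^3)-N_M}(1+k^22^{-N_M})$ and hence $\varepsilon\le\epsilon'+2^{O(k^3)-N_M}(1+k^22^{-N_M})$.

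I expect this last bookkeeping to be the main obstacle. One has to keep the genuine $2^{-N_M}$ suppression manifest — it comes \emph{only} from $|T_2,\sigma|\ge1$, not from the Weingarten magnitudes, which for stochastic rotations in $O_k\setminus S_k$ are not suppressed — while simultaneously verifying that the large $N$-dependent operator norms cancel the $2^{-Nk}$ scaling of $\Wg_\C$ so that the final bound involves only $N_M$ and $k$, and while carrying the reference system carefully enough that one obtains a genuine diamond-norm (adaptive, not merely parallel) bound rather than only a bound on the Choi state. Fixing the exact $O(k^3)$ prefactor likewise rests on the Clifford Weingarten asymptotics of Appendix~\ref{sec:clifford_commutant_wg_asymptotics}.
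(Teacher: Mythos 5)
Your proposal follows essentially the same route as the paper's proof: peel off $\epsilon'$ by the triangle inequality and contractivity under the Clifford channel, expand $\Phi^{(k)}_{\C}\circ\Phi^{(k)}_{\H,M}$ in the Clifford commutant, lock $\tau=\pi$ when $T_2=\pi\in S_k$, and extract the $2^{-N_M}$ from $|T_2,\sigma|\ge 1$ whenever $T_2\notin S_k$, with the $N$-dependent norms cancelling the $2^{-Nk}$ Weingarten scaling. One concrete caution on the $T_1\neq T_2$, $T_2\notin S_k$ terms: the flat off-diagonal bound $|\Wg_{\C}(T_1,T_2)|\le 2^{-Nk-N+O(k^2)}$ you cite is not by itself sufficient, since $\lVert|T_1)(T_2|^{\otimes N_{\bar M}}\rVert_\Diamond=2^{N_{\bar M}(k+\dim\rmN_2-\dim\rmN_1)}$ can exceed $2^{N_{\bar M}k}$ by as much as $2^{N_{\bar M}k/2}$; you must use the refined asymptotic $|\Wg_{\C}(T_1,T_2)|\le 2^{-N(k+|\dim\rmN_1-\dim\rmN_2|)+O(k^3)}$ of Appendix~\ref{sec:clifford_commutant_wg_asymptotics}, which is exactly where the $O(k^3)$ in the final bound originates, as you anticipated.
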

\begin{proof}
Let $\bar{M}$ denote the complement of subsystem $M$, with size $N_{\bar{M}} = N - N_M$. 
Let $\Phi_{\C}^{(k)}$ denote the $k$-th moment of random Clifford unitaries on $N$ qubits. 
Let $\Phi_{\H,M,\epsilon'}^{(k)}$ denote the $k$-fold channel of an approximate unitary design acting on $M$. 
Let $\Phi_{\H,M}^{(k)}$ denote the $k$-fold channel of an exact unitary design acting on $M$. 
By assumption $\lVert\Phi_{\H,M,\epsilon'}^{(k)}-\Phi_{\H,M}^{(k)}\rVert_\Diamond\le\epsilon'$.

The $k$-fold channel of our unitary ensemble is $\Phi_{\calE,\epsilon'}^{(k)}=\Phi_{\C}^{(k)}\circ(\Phi_{\H,M,\epsilon'}^{(k)}\otimes \id_{\bar{M}})$. 
We first bound its diamond distance from $\Phi_0^{(k)}=2^{-Nk}\sum_{\pi}|\pi)(\pi|$ using the triangle inequality:
\begin{equation}
\lVert\Phi_{\calE,\epsilon'}^{(k)}-\Phi_0^{(k)}\rVert_\Diamond\leq \lVert\Phi_{\calE,\epsilon'}^{(k)}-\Phi_{\calE,0}^{(k)}\rVert_\Diamond+ \lVert\Phi_{\calE,0}^{(k)}-\Phi_0^{(k)}\rVert_\Diamond,
\end{equation}
where $\Phi_{\calE,0}^{(k)}$ is obtained by replacing $\Phi_{\H,M,\epsilon'}^{(k)}$ with the exact $k$-design $\Phi_{\H,M}^{(k)}$. The first term is bounded by
\begin{equation}
\begin{aligned}
\lVert\Phi_{\calE,\epsilon'}^{(k)}-\Phi_{\calE,0}^{(k)}\rVert_\Diamond&=\max_{\lVert O\rVert_1\leq 1}\lVert\Phi_{\C}\circ(\Phi_{\H,M,\epsilon'}-\Phi_{\H,M})\otimes \id (O) \rVert_1 
 \\
&\le\max_{\lVert O\rVert_1\leq 1}\onenorm{(\Phi_{\H,M,\epsilon'}-\Phi_{\H,M})\otimes \id (O)}  \\
&=\lVert\Phi_{\H,M,\epsilon'}-\Phi_{\H,M}\rVert_\Diamond\le\epsilon'.
\end{aligned}
\end{equation}

Next we bound the distance $\lVert\Phi_{\calE,0}^{(k)}-\Phi_0^{(k)}\rVert_\Diamond$, where
\begin{equation}\label{eq:unitary_additive_moments}
\begin{aligned}
\Phi_{\calE,0}^{(k)}=&\Phi_{\C}^{(k)}\circ(\Phi_{\H,M}^{(k)}\otimes \id_{\bar{M}})\\
=&\sum_{T_1,T_2,\sigma,\tau}\Wg_{\C}(T_1,T_2)2^{N_M(k-|T_2,\sigma|)}\Wg(\sigma,\tau)|T_1)(\tau|^{\otimes N_M}\otimes |T_1)(T_2|^{\otimes N_{\bar{M}}}.
\end{aligned}
\end{equation}
The diamond norm of each term is given by (Appendix~\ref{app:diamond_norm})
\begin{equation}
\begin{aligned}
&\lVert|T_1)(\tau|^{\otimes N_M}\otimes |T_1)(T_2|^{\otimes N_{\bar{M}}}\rVert_\Diamond=2^{Nk-N_M\dim\rmN_i+N_{\bar{M}}(\dim\rmN_j-\dim\rmN_i)},
\end{aligned}
\end{equation}
which follows from~\eqref{eq:clifford_commutant_diamond_norm_bound} and the fact that the diamond norm is multiplicative under tensor product~\cite{Aharonov:1998zf}. 
We decompose the sum into contributions from four classes of $(T_1,T_2)$ pairs:
\begin{equation}
\begin{aligned}
\rmI&:=\{(T_1,T_2)|\,T_1=T_2\in S_k\} \\
\rmII&:=\{(T_1,T_2)|\,T_2\in S_k, T_1\neq T_2\} \\
\rmIII&:=\{(T_1,T_2)|\, T_1=T_2\notin S_k\}\\
\rmIV&:=\{(T_1,T_2)|\, T_2\notin S_k, T_1\neq T_2\}
\end{aligned}
\end{equation}
Let $\Phi_\rmI^{(k)}$, $\Phi_\rmII^{(k)}$, $\Phi_\rmIII^{(k)}$ and $\Phi_\rmIV^{(k)}$ denote the contributions of the classes of $(T_1,T_2)$ pairs to the sum~\eqref{eq:unitary_additive_moments}. Using the triangle inequality:
\begin{equation}
\begin{aligned}
&\lVert\Phi_{\calE,0}^{(k)}-\Phi_0^{(k)}\rVert_\Diamond \leq \lVert\Phi_\rmI^{(k)}-\Phi_0^{(k)}\rVert_\Diamond+\lVert\Phi_\rmII^{(k)}\rVert_\Diamond+\lVert\Phi_\rmIII^{(k)}\rVert_\Diamond+\lVert\Phi_\rmIV^{(k)}\rVert_\Diamond.
\end{aligned}
\end{equation}
Notice that when $T_2=\pi\in S_k$, we can sum over $\sigma$ to lock $\tau$ to be $\pi$:
\begin{equation}
\sum_{\sigma}2^{N_M(k-|\pi,\sigma|)}\Wg(\sigma,\tau)=\delta_{\pi,\sigma}. 
\end{equation}
Thus,
\begin{equation}
\begin{aligned}
\lVert\Phi_\rmI^{(k)}-\Phi_0^{(k)}\rVert_\Diamond+\lVert\Phi_\rmII^{(k)}\rVert_\Diamond
\leq& \sum_{\pi}|\Wg_{\C}(\pi,\pi)-2^{-Nk}|\!\cdot\!\lVert|\pi)(\pi|^{\otimes N}\rVert_\Diamond+\sum_{T\neq\pi}|\Wg_{\C}(T,\pi)|\!\cdot\!\lVert|T)(\pi|^{\otimes N}\rVert_\Diamond \\
\leq& k!2^{-N+O(k^2)} + k!2^{-N+O(k^2)}\le2^{-N+O(k^2)} .
\end{aligned}
\end{equation}
For the second inequality, we use~\eqref{eq:clifford_weingartens_asymptotics} and~\eqref{eq:clifford_commutant_diamond_norm_bound} to bound the two terms. For the last inequality, we absorb the factor $k!$ into $2^{O(k^2)}$.

In classes III and IV, $T_2$ is not a permutation. In this case, there is an overall suppression because $T_2$ and $\sigma$ are not aligned:
\begin{equation}
\begin{aligned}
\sum_{\sigma,\tau}2^{N_M(k-|T_2,\sigma|)}|\Wg(\sigma,\tau)|&\le2^{N_M(k-1)}\sum_{\sigma,\tau}|\Wg(\sigma,\tau)|\le2^{-N_M}k!(1+k^22^{-N_M}),
\end{aligned}
\end{equation}
where we used $|T_2,\sigma|\geq 1$ and $\sum_{\sigma,\tau}|\Wg(\sigma,\tau)|\leq 2^{-kN_M}k!(1+k^22^{-N_M})$~\cite{Aharonov:2021das}. In class III $T_1=T_2=T \notin S_k$, we use the upper bound on the diamond norm $\lVert|T)(\tau|^{\otimes N_M}\otimes |T)(T|^{\otimes N_{\bar{M}}}\rVert_\Diamond\leq 2^{Nk}$ and obtain
\begin{equation}
\begin{aligned}
\lVert\Phi_\rmIII^{(k)}\rVert_\Diamond\leq&\sum_{T \notin S_k}|\Wg_{\C}(T,T)|\sum_{\sigma,\tau}2^{N_M(k-|T,\sigma|)}|\Wg(\sigma,\tau)|\cdot\lVert|T)(\tau|^{\otimes N_M}\otimes |T)(T|^{\otimes N_{\bar{M}}}\rVert_\Diamond \\
\leq& 2^{-N_M+O(k^2)}k!(1+k^22^{-N_M}).
\end{aligned}
\end{equation}
For $T_1\neq T_2$, the diamond norm can be greater than $2^{Nk}$, but these contributions are suppressed by $\Wg_{\C}(T_1,T_2)$ as $T_1$ and $T_2$ are not aligned. 
Using the third line in Eq.~\eqref{eq:clifford_weingartens_asymptotics}, we obtain
\begin{equation}
\begin{aligned}
&\sum_{T_1\neq T_2}|\Wg_{\C}(T_1,T_2)|\,\dnorm{|T_1)(\tau|^{\otimes N_M}\otimes |T_1)(T_2|^{\otimes N_{\bar{M}}}}\leq 2^{O(k^3)}
\end{aligned}
\end{equation}
This implies
\begin{equation}
\lVert\Phi_\rmIV^{(k)}\rVert_\Diamond\leq 2^{-N_M+O(k^3)}k!(1+k^22^{-N_M}).
\end{equation}

Putting these together, we obtain an upper bound
\begin{equation}
\lVert\Phi_{\calE}^{(k)}-\Phi_0^{(k)}\rVert_\Diamond\leq \epsilon'+2^{-N_M+O(k^3)}(1+k^22^{-N_M}),
\end{equation}
where we have absorbed factors of $k!$ into $2^{O(k^3)}$.

As a final step, we use the triangle inequality to bound the additive error of the $k$-fold channel:
\begin{equation}
\begin{aligned}
\lVert\Phi_{\calE}^{(k)}-\Phi_{\H}^{(k)}\rVert_\Diamond &\le\lVert\Phi_{\calE}^{(k)}-\Phi_0^{(k)}\rVert_\Diamond+\lVert\Phi_0^{(k)}-\Phi_{\H}^{(k)}\rVert_\Diamond \le\epsilon'+2^{O(k^3)-N_M}(1+k^22^{-N_M}).
\end{aligned}
\end{equation}

\end{proof}

\begin{corollary}[Circuit depth]
With $O(k^4\poly\log(k/\epsilon))$ two-qubit magic gates, one can realize approximate unitary $k$-designs with $\epsilon$ additive error in depth $O(N+k\poly\log (k/\epsilon))$ using 1D circuits and in depth $O(\log N+k\poly\log (k/\epsilon))$ using all-to-all circuits with $N^2$ ancillas. 
\end{corollary}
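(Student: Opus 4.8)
The plan is to combine the preceding theorem with standard circuit-synthesis results; no new technical ingredient is required. Fix the target additive error $\epsilon$. By the theorem, it suffices to take $U$ acting on a subsystem $M$ of $N_M$ qubits, drawn from an approximate unitary $k$-design with additive error $\epsilon/2$, together with a global Clifford unitary $V$ drawn from an exact Clifford $k$-design (i.e. a uniformly random Clifford), provided $N_M$ is large enough that the residual term $2^{O(k^3)-N_M}(1+k^2 2^{-N_M})$ is at most $\epsilon/2$. This holds as soon as $N_M = O(k^3+\log(1/\epsilon))$, a number independent of $N$; the total additive error is then at most $\epsilon$ by the triangle inequality for the diamond norm already used in the proof of the theorem.

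Next I would realize the two pieces. For $U$ on the $N_M$-qubit block I would invoke the shallow-circuit construction of Ref.~\cite{Schuster:2024ajb}: a 1D brickwork circuit of local two-qubit Haar-random gates of depth $O(\log(N_M/\epsilon)\cdot k\poly\log k)$ forms a $k$-design with bounded relative error, hence — by the relation between relative and additive error recalled in Sec.~\ref{sec:prelim_design_unitary} — with bounded additive error, after shrinking the error parameter by a constant factor. Since $N_M=O(k^3+\log(1/\epsilon))$ we have $\log(N_M/\epsilon)=O(\log k + \log(1/\epsilon))$, so this depth is $O(k\poly\log(k/\epsilon))$ and, crucially, independent of $N$; the number of two-qubit gates used is $O(N_M)$ times the depth, i.e. $O(k^4\poly\log(k/\epsilon))$. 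These are the only non-Clifford (``magic'') gates in the whole construction.

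For $V$ I would use that a uniformly random Clifford is an exact Clifford $k$-design for every $k$ and can be compiled, using only Clifford gates, in depth $O(N)$ with a 1D architecture~\cite{Bravyi:2020xdv}, or in depth $O(\log N)$ using all-to-all connectivity and $N^2$ ancillas~\cite{Jiang:2019aom}. Concatenating the two parts, the overall circuit depth is $O(N+k\poly\log(k/\epsilon))$ in 1D and $O(\log N + k\poly\log(k/\epsilon))$ in the all-to-all model with $N^2$ ancillas, while the magic budget remains $O(k^4\poly\log(k/\epsilon))$, independent of $N$.

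I expect no genuine obstacle here: the only care needed is the bookkeeping of the $\epsilon$-budget — splitting $\epsilon$ between the design error of $U$ and the residual Clifford--Weingarten term, and converting between relative and additive error for the sub-circuit — together with the observation that all $N$-dependence of the circuit depth comes solely from implementing the global random Clifford $V$, since $N_M$, and hence the entire magic portion of the circuit, has size independent of $N$.
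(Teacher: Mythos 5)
Your proposal is correct and follows essentially the same route as the paper: take $N_M=O(k^3+\log(1/\epsilon))$ from the preceding theorem, synthesize the $k$-design on the $N_M$-qubit block via the shallow construction of Ref.~\cite{Schuster:2024ajb} (giving depth and gate count independent of $N$), and implement the global Clifford in depth $O(N)$ in 1D or $O(\log N)$ with all-to-all connectivity and $N^2$ ancillas. The paper's own proof is just a terser version of this, leaving the gate-counting and Clifford-compilation steps implicit.
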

\begin{proof}
A sufficient condition for achieving approximate unitary designs with $\epsilon$ additive error is to take $N_M=O(k^3+\log(1/\epsilon))$. To realize an additive-error unitary design on these qubits, circuits with depth $O(\log (N_M/\epsilon)\cdot k\poly\log k)$ suffice~\cite{Schuster:2024ajb}.
\end{proof}

We note that, in this construction, we place the magic gate at the beginning of the circuit for a technical reason to suppress the diamond norm because $\lVert|T_1)(T_2|\rVert_{\Diamond}$ is smaller for $T_2 \in S_k$.
We do not exclude the possibility of constructing an additive-error unitary design if all magic gates are placed at the final layer.

We further note that the global Clifford unitary in this construction might be necessary.
In a recent paper~\cite{Grevink:2025nez}, it is shown that to form a Clifford $k$-design with bounded additive error, Clifford circuits are required to have $\Omega(N)$ depth in one dimension or $\Omega(\log N)$ depth with all-to-all connectivity. This theorem can be modified to show that shallow-Clifford circuits with the insertion of magic gates acting on a constant number of qubits, even at arbitrary locations within the Clifford circuit, cannot realize a unitary $k$-design with bounded additive error, since the causal light-cone from the magic gates cannot cover a finite fraction of the qubits in the system.

\section{Statistical mechanics picture}\label{sec:stat_mech}
In this section, we provide a physical understanding of several results in this work based on the mapping from the $k$-th moment of the random circuit to classical statistical mechanics models.
Section~\ref{sec:stat_mech_uniformity} develops a stat-mech understanding of why Clifford circuits operating on qubit blocks of $O(\log N)$ qubits can satisfy the approximate uniformity condition.
This also explains why shallow Haar random circuits can form a unitary design with bounded relative error~\cite{Schuster:2024ajb}.
Section~\ref{sec:stat_mech_state_additive} provides an understanding for the required number of magic gates to convert the shallow Clifford circuits into state designs with bounded additive errors.

\subsection{Uniformity condition as strong ordering}\label{sec:stat_mech_uniformity}
Here, we consider the $k$-fold channel of random Clifford circuits and bound its deviation from the uniformity condition using the partition functions of statistical mechanics models.
We show that the two-layer circuit operating on log-size qubit blocks can achieve approximation uniformity as the stat-mech model develops \emph{strong ordering}, i.e. excitations at all scales are suppressed due to large excess free energy.

The $k$-fold channel of two-layer random Clifford circuits can be expressed as
\begin{equation}
\begin{aligned}
\Phi^{(k)}_\calE&=2^{-Nk}\sum_{\vec{T}_1,\vec{T_2}}g(\vec{T}_1,\vec{T}_2)|\vec{T}_1)(\vec{T}_2|=2^{-Nk}\left[\sum_T|T)(T|+\sum_{\vec{T}_1,\vec{T_2}}f(\vec{T}_1,\vec{T}_2)|\vec{T}_1)(\vec{T}_2|\right].
\end{aligned}
\end{equation}
The deviation from the uniformity condition is given by
\begin{align}
\Delta(\Phi^{(k)}_\calE)=\sum_{\vec{T}_1,\vec{T_2}}|f(\vec{T}_1,\vec{T}_2)
=& \sum_{\vec{T}_1,\vec{T_2}}|g(\vec{T}_1,\vec{T}_2)| - \abs{\Sigma_{k,k}} 
+ \sum_{T}\left( 1 - |g(T\vec{1},T\vec{1})| + |g(T\vec{1},T\vec{1}) - 1|\right) \nn\\
\leq& \sum_{\vec{T}_1,\vec{T_2}}|g(\vec{T}_1,\vec{T}_2)| - \abs{\Sigma_{k,k}} + \sum_{T}2|g(T\vec{1},T\vec{1}) - 1|.\label{eq:stat_mech_deviation_uniformity} 
\end{align}

Each term $g(\vec{T}_1,\vec{T}_2)$ in Eq.~\eqref{eq:stat_mech_deviation_uniformity} can be written as a summation over classical ``spins" that label the stochastic Lagrangian subspace in the Clifford commutant $\Sigma_{k,k}$, i.e.
\begin{equation}\label{eq:g_T1_T2}
\begin{aligned}
g(\vec{T}_1,\vec{T}_2)=2^{Nk}&\sum_{\vec{T}_{3,4}} \prod_x \bbrakket{T_{3,x}}{T_{4,x-1}}\bbrakket{T_{3,x}}{T_{4,x}}\Wg_{\C}(T_{1,x},T_{3,x})\Wg_{\C}(T_{4,x},T_{2,x}).
\end{aligned}
\end{equation}
Each spin $T$ originates from the average over random Clifford gates and is arranged as shown below.
\begin{equation*}
\includegraphics[width=15cm]{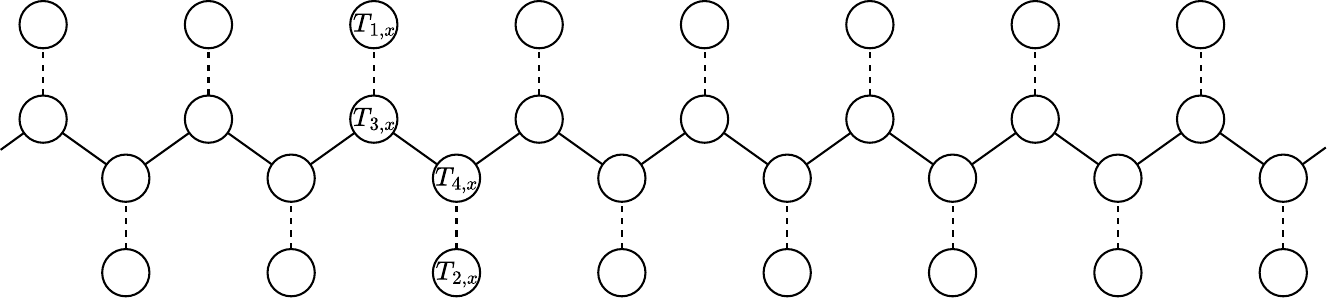}
\end{equation*}
Note that the summation is not yet the partition function of a stat-mech model because the Weingarten function can be negative. 
We introduce a function $g'(\vec{T}_1,\vec{T}_2)$, replacing the Weingarten function in Eq.~\eqref{eq:g_T1_T2} with its absolute value,
\begin{equation}
\begin{aligned}
    g'(\vec{T}_1,\vec{T}_2)=2^{Nk}&\sum_{\vec{T}_{3,4}} \prod_x \bbrakket{T_{3,x}}{T_{4,x-1}}\bbrakket{T_{3,x}}{T_{4,x}}|\Wg_{\C}(T_{1,x},T_{3,x})|\,|\Wg_{\C}(T_{4,x},T_{2,x})|.
\end{aligned}
\end{equation}
The function $g'(\vec{T}_1,\vec{T}_2)$ is an upper bound of $|g(\vec{T}_1, \vec{T}_2)|$ and is given by a summation over positive terms.
Hence, it can be interpreted as a partition function with fixed boundary conditions set by $\vec{T}_1$ and $\vec{T}_2$.

First, we consider the quantity $\calZ:=\sum_{\vec{T}_1,\vec{T_2}}g'(\vec{T}_1,\vec{T}_2)$, which can be viewed as the partition function of a statistical mechanics model of $\vec{T}_{1,2,3,4}$ with free boundary conditions. 
We consider the low-temperature expansion of the partition function.
The configuration with the lowest energy has spins aligned in space and a degeneracy of $|\Sigma_{k,k}|$.
We invoke the Peierls argument to analyze the excitations.
The interaction between two neighboring spins $T_i$ and $T_j$ contributes a factor of $\bbrakket{T_i}{T_j}=2^{\xi(k-|T_i,T_j|)}$ or $|\Wg_{\C}(T_i,T_j)|$ to the Boltzmann weight. 
The Boltzmann weight is suppressed by $2^{-\xi}$ when $T_i$ and $T_j$ are not aligned. 
For $\xi = a\log N$ with a sufficiently large prefactor $a$, the energetic cost of creating a domain wall outweighs the entropic factor.
In this case, the domain walls in the stat-mech model are suppressed, and the partition function is governed by the aligned spin configurations:
\begin{equation}
\calZ=|\Sigma_{k,k}|(1+N2^{-\xi-\log\xi+O(k^2)}).
\end{equation}
To make this more rigorous, one can analytically compute the partition function using the transfer matrix method.

Next, consider the quantity $g'(T\vec{1},T\vec{1})$, which maps to the partition function of the same statistical mechanics model but with the boundary spins fixed to be $T$. 
A similar analysis gives
\begin{equation}
g'(T\vec{1},T\vec{1})=1+N2^{-\xi-\log\xi+O(k^2)}.
\end{equation}
We note that $g(T\vec{1},T\vec{1})$ and $g'(T\vec{1},T\vec{1})$ differ only in the subleading term, and $|g(T\vec{1},T\vec{1})-1| \leq |g'(T\vec{1},T\vec{1})-1| \leq N2^{-\xi-\log\xi+O(k^2)}$.

The above analysis leads to an upper bound on the deviation from the uniformity in Eq.~\eqref{eq:stat_mech_deviation_uniformity},
\begin{align}
\Delta(\Phi^{(k)}_\calE)\leq N2^{-\xi-\log\xi+O(k^2)}.
\end{align}
Hence, one can achieve approximate uniformity with $\xi = O(\log N + k^2)$, i.e. in depth $O(\log N +k^2)$.

We remark that the approximate uniformity requires the stat-mech model to develop \emph{strong ordering}.
In the usual notion of an ordered phase, the domain wall excitations that flip an extensive number of spins are suppressed; however, small domain walls with $O(1)$ energy costs are usually proliferated.
In contrast, the strong ordering here requires the smallest possible domain walls to be suppressed as the small domain walls can cause an $O(1)$ deviation from the uniformity.
In the 1D stat-mech model we obtain, two notions coincide because the domain walls cause the same amount of energy regardless of the number of spins it flips.
However, in higher dimensions, these two notions are generally different. For example, in two dimensions, the spin model can order when each gate in the circuit acts on a constant number of qubits~\cite{Napp:2019wcz,Bao:2021con}; however, strong order still requires each gate to act on $O(\log N)$ qubits.

The stat-mech mapping also explains the results in Ref.~\cite{Schuster:2024ajb} that relative-error unitary design requires two-layer random unitary circuits with each gate drawn from a $k$-design acting on $O(\log N)$ qubits. 
In that case, we can introduce a similar uniformity condition for the $k$-fold channel of Haar random circuits, 
\begin{equation}
    \Phi_\calE^{(k)} = \underbrace{\frac{1}{2^{Nk}} \sum_{\sigma \in S_k} |\sigma)(\sigma|}_{\Phi_0^{(k)}} + \frac{1}{2^{Nk}} \sum_{\vec{\sigma},\vec{\tau}} f(\vec{\sigma},\vec{\tau}) |\vec{\sigma})(\vec{\tau}|.
\end{equation}
Its deviation from the diagonal approximation $\Phi_0^{(k)}$ of the $k$-fold channel of the Haar random unitary, $\Delta_{\H}(\Phi_\calE^{(k)}) = \sum_{\vec{\sigma},\vec{\tau}} \abs{f(\vec{\sigma},\vec{\tau})}$ has a similar statistical mechanics mapping.
Here, the only difference is that spins take values only in the permutation group $S_k$ due to Haar averaging.
We can show that each gate acting on $O(\log N)$ qubits is sufficient for a small deviation.
Crucially, a small deviation here implies that the unitary ensemble forms a design with bounded relative error because for the infinite norm $\infnormsm{\mathrm{r}(\sigma)} = 1$ for a permutation $\sigma \in S_k$.
However, for the Clifford circuits, $\infnorm{\mathrm{r}(T)} = 2^{\dim \rmN}$ can take a large value; the uniformity condition does not imply Clifford $k$-design with bounded relative error.

We further remark that the gluing lemma essentially implies strong ordering.
It requires the spins to align while introducing one additional spin to the stat-mech model; the energetic cost of misaligned spins overcomes the entropic factor.

\subsection{Additive-error state designs: magic as symmetry breaking field}\label{sec:stat_mech_state_additive}

The additive error of state $k$-th moment is controlled by the frame potential. 
Here, we map the frame potential of random Clifford circuits with magic insertions to the partition function of a classical statistical mechanics model.
The mapping provides physical understandings of the depth of the Clifford circuit and the number $\tilde{O}(k^2)$ of magic gates required for an approximate state $k$-design.  
Similar stat-mech mapping has been developed for random tensor networks~\cite{Hayden:2016cfa,Vasseur:2018gfy}, Haar random circuits~\cite{Nahum:2017yvy,vonKeyserlingk:2017dyr}, and also more specifically random Clifford circuits~\cite{Li:2021dbh,Zhang:2024fyp}.

First, we consider the state ensemble $\calE$ generated by two-layer random Clifford circuits with open boundary conditions. 
The frame potential provides an upper bound for the trace distance between the $k$-th moment of $\calE$ and that of the ensemble $\C$ of random stabilizer states
\begin{equation}\label{eq:stat_mech_1norm_bound_from_frame_potential}
\lVert\rho^{(k)}_{\calE}-\rho_{\C}^{(k)}\rVert_1^2\leq 2^{Nk}\lVert\rho^{(k)}_{\calE}-\rho_{\C}^{(k)}\rVert_2^2\leq 2^{Nk}[F^{(k)}_{\calE}-F^{(k)}_{\C}],
\end{equation}
where
\begin{equation}\begin{aligned}
F^{(k)}_{\C}&=\Ens_{V, V'\sim\C} \abs{\bra{0}^{\otimes N} V'^\dagger V\ket{0}^{\otimes N}}^{2k}
=\bra{0}^{\otimes Nk}\rho_{\C}^{(k)}\ket{0}^{\otimes Nk}\\
&=Z_{N,k}^{-1}\sum_{T\in\Sigma_{k,k}}\bra{0}^{\otimes Nk}r(T)\ket{0}^{\otimes Nk}=Z_{N,k}^{-1}|\Sigma_{k,k}|=2^{-Nk}|\Sigma_{k,k}|(1+O(k2^{k-N})).
\end{aligned}\end{equation}
Here, we use $\bra{0}^{\otimes Nk}r(T)\ket{0}^{\otimes Nk} = 1$ as $(0, 0, \cdots, 0)\in T$ for any $T$.
The frame potential of $\calE$ is given by
\begin{align}
F_\calE^{(k)} &= \Ens_{V_1, V_2\sim\calE} \abs{\bra{0}^{\otimes N} V_1^\dagger V_2\ket{0}^{\otimes N}}^{2k} =\bra{0}^{\otimes Nk}\rho_{\tilde{\calE}}^{(k)}\ket{0}^{\otimes Nk},
\end{align}
where $\tilde{\calE} = \{V_1^\dagger V_2\ket{0}\}$ for $V_1, V_2 \in \calE$ contains states generated by a three-layer random Clifford circuit $V_1^\dagger V_2$.
Expressing the average $k$-th moment of each Clifford gate in terms of operators in the Clifford commutant [Eq.~\eqref{eq:clifford_commutant_weingarten_def}], we have
\begin{equation}\begin{aligned}
F_{\calE}^{(k)}=&Z_{2\xi,k}^{-2L}\sum_{T_{1,x}\sim T_{4,x}}\bbrakket{T_{1,1}}{T_{4,1}}\bbrakket{T_{1,L}}{T_{4,L}}\left[\prod_{x=1}^{L-1}\bbrakket{T_{1,x}}{T_{2,x}}\Wg_{\C}(T_{2,x},T_{3,x})\bbrakket{T_{3,x}}{T_{4,x}}\right] \\
&\qquad\qquad\bbrakket{T_{1,L}}{T_{2,L-1}}\bbrakket{T_{3,L-1}}{T_4},
\end{aligned}\end{equation}
where $L=N/(2\xi)$. 
We note that the left-hand side of Eq.~\eqref{eq:stat_mech_1norm_bound_from_frame_potential} is positive, and therefore $F^{(k)}_{\calE}\geq F^{(k)}_{\C}$.

The frame potential $F_{\calE}^{(k)}$ is expressed as a summation over classical variables $T_x$.
At this moment, the summation involves terms with negative weights, as the Clifford Weingarten function can be negative.
Instead, we can consider an upper bound $F_\calE^{(k)} \leq F'^{(k)}_{\calE}$,
\begin{equation}\begin{aligned}
F'^{(k)}_{\calE}=&Z_{2\xi,k}^{-2L}\sum_{T_{1,x}\sim T_{4,x}}\bbrakket{T_{1,1}}{T_{4,1}}\bbrakket{T_{1,L}}{T_{4,L}}\left[\prod_{x=1}^{L-1}\bbrakket{T_{1,x}}{T_{2,x}}\abs{\Wg_{\C}(T_{2,x},T_{3,x})}\bbrakket{T_{3,x}}{T_{4,x}}\right] \\
&\qquad\qquad\bbrakket{T_{1,L}}{T_{2,L-1}}\bbrakket{T_{3,L-1}}{T_4},
\end{aligned}\end{equation}
which is a sum over positive terms and can be viewed as the partition function of a one-dimensional classical statistical mechanics model (illustrated below).
\begin{equation*}
\includegraphics[width=14cm]{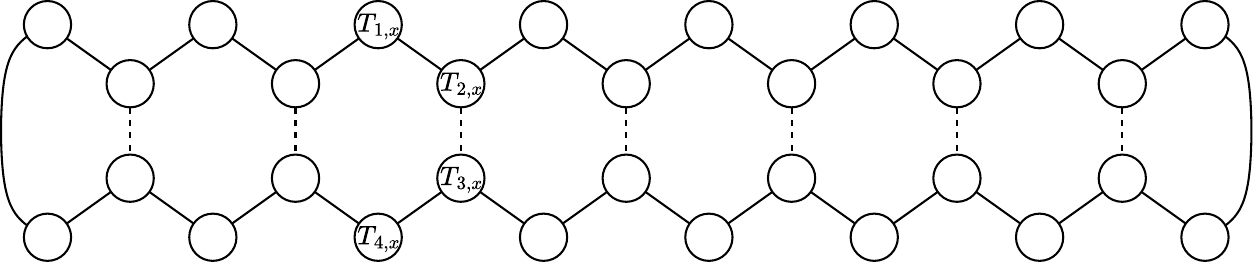}
\end{equation*}
The interaction between two neighboring spins $T_i$ and $T_j$ contributes a factor of $\bbrakket{T_i}{T_j}=2^{\xi(k-|T_i,T_j|)}$ (represented by a solid line) or $|\Wg_{\C}(T_i,T_j)|$ to the Boltzmann weight (represented by a dashed line).
The Boltzmann weight is suppressed by $2^{-\xi}$ when $T_i$ and $T_j$ are not aligned.
For $\xi = a\log N$ with a sufficiently large prefactor $a$, the energetic cost of creating a domain wall outweighs the entropic factor.
In this case, the domain walls in the stat-mech model are suppressed, and the frame potential is given by the aligned spin configurations, leading to a value close to $F_{\C}^{(k)}$.

Next, we consider inserting single-qubit magic gates in the circuit. 
We insert the gates in the leftmost qubit block for the simplicity of presentation. 
We comment on inserting magic gates at other locations later on in this section.
The frame potential $F_\calE^{(k)}$ is then given by a slightly modified partition function below.
\begin{equation*}
\includegraphics[width=14cm]{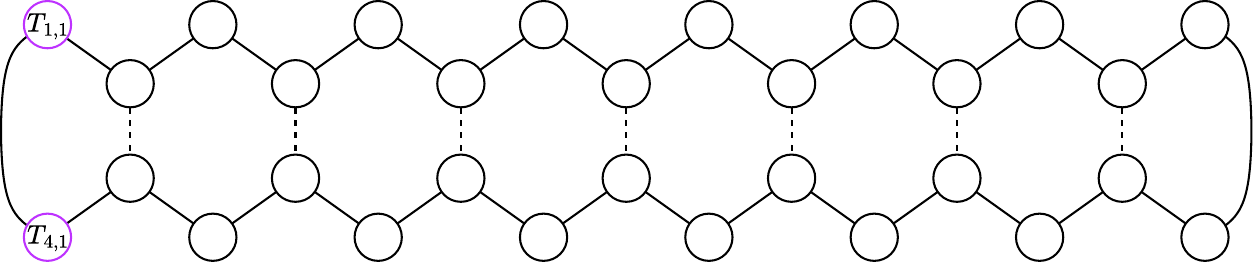}
\end{equation*}
Averaging over single-qubit random gates drawn from either an exact $k$-design or $\calE_u = \{\mathds{1},u,u^\dagger\}$ with $u$ being a magic gate creates a suppressed Boltzmann weight (i.e. an energy cost) when the spins on the leftmost boundary $T_{1,1}$ or $T_{4,1}$ above are not permutations.
The weight associated with $T_{1,1}$ and $T_{4,1}$ then acquires an upper bound,
\begin{equation}
\begin{aligned}
&\bbra{T_{1,1}}\bigotimes_{i=1}^{N_M} (\hat{\Phi}_{u})_{i}\kket{T_{1,4}}=2^{\xi(k-|T_{1,1},T_{4,1}|)},\,\text{if }T_1,T_2\in S_k, \\
&|\bbra{T_{1,1}}\bigotimes_{i=1}^{N_M} (\hat{\Phi}_{u})_{i}\kket{T_{1,4}}|\leq 2^{k\xi}a^{\frac{N_M}{2}},\, \text{if }T_{1,1} \text{ or } T_{4,1}\notin S_k,
\end{aligned}
\end{equation}
where $(\hat{\Phi}_{u})_{i}=\Ens_{U_i\sim\calE_u}U_i^{\otimes k}\otimes U_i^{*\otimes k}$ is the $k$-fold operator associated with the random gate set on the $i$-th qubit, and $0< a < 1$ depends on the gate set (see Eqs.~\eqref{eq:overlap_bound_magic} and~\eqref{eq:overlap_bound_k-design}).
The suppressed weight can be regarded as a magnetic field acting on $T_{1,1}$ and $T_{1,4}$.
The field strength is $O(1)$ for $k$-design single-qubit gates and is $O(1/\log^2k)$ for single-qubit gates drawn from $\{\mathds{1},u,u^\dagger\}$.

The ``magnetic field" favors permutation $T \in S_k$ over other stochastic Lagrangian subspace $T \in \Sigma_{k,k}$.
The energy cost for $T \notin S_k$ is proportional to $N_M$, while the number of $T \in \Sigma_{k,k}$ but $T \notin S_k$ is $2^{O(k^2)}$~\footnote{Here, we consider $\xi > k - 1$ such that operators $r(T)$ in the commutant are linearly independent. Hence, the number of internal degrees of freedom for each spin is $2^{O(k^2)}$.}.
Hence, $N_M = \tilde{O}(k^2)$ is sufficient to overcome the entropic factor.
In this case, the frame potential $F'^{(k)}_\calE$ is dominated by the lowest energy configurations, in which the spins are aligned spatially and are given by the same permutation.
This leads to a frame potential $F'^{(k)}_\calE \approx k!/2^{Nk}$ close to the Haar random value.

We note that inserting Haar random gates at other locations in the circuit modifies the stat-mech model in a similar way by introducing magnetic fields at the corresponding locations.
Without a rigorous proof, we believe that adding $\tilde{O}(k^2)$ single-qubit magic gates (drawn from either $\{\mathds{1},u,u^\dagger\}$ or an exact $k$-design) at random locations in the log-depth Clifford circuit is sufficient to realize a state design with bounded additive error.

\section{Discussion}\label{sec:discussion}
In this work, we have presented constructions of approximate state and unitary designs, which improve the previous results in terms of either the circuit depth or the number of non-Clifford gates.
We prove no-go theorems for various architectures to form designs with bounded relative errors.
We also provide physical understanding for several key results based on the statistical mechanics mapping.

Our study leaves several open questions:
\begin{itemize}
\item The scaling of $k$ in the circuit depth of our constructions for relative error designs may not be optimal. 
Our constructions require each Haar random gate to act on a qubit cluster of size $O(k\log k)$ because we use a rather loose bound on the distance between operators in the Clifford commutant, $|\sigma, T| \geq 1$ for $\sigma \in S_k$ and $T \notin S_k$.
Suppose one can obtain a bound on the number of permutations of a certain distance from stochastic Lagrangian subspaces $T \notin S_k$. 
In that case, we might be able to reduce the size of the Haar random gates, which would improve the scaling of $k$ in the total circuit depth.

\item Our design constructions use gates acting on qubit blocks. It remains open whether the circuit ensemble still satisfies the approximate uniformity if one replaces two-layer random circuits with a brickwork random circuit of depth $O(\log N)$ operating on qubits.

\item In our construction for additive-error state design, the magic gates are not required to be drawn from unitary $k$-designs and can instead be drawn from a less random gate set $\{\mathds{1},u,u^\dagger\}$ with $u$ being a magic gate (e.g. T-gate). 
A natural question is, in our other constructions of approximate designs, whether one can replace the $k$-design random unitaries with a specific set of magic gates, which might be easier to realize in practice.

\item Different from the unitary group, orthogonal $k$-designs, which approximate the $k$-th moment of Haar random orthogonal matrices, cannot be generated in sub-linear depth~\cite{Schuster:2024ajb}. Reducing the magic depth or the usage of magic gates in the construction is of particular interest in this case. It remains open whether one can use linear-depth Clifford circuits augmented by constant-depth magic gates to construct such designs.
\end{itemize}

\section*{Acknowledgments}
We are especially grateful to Thomas Schuster and Nicholas Hunter-Jones for many insightful discussions. 
We thank Thomas Schuster and Yunchao Liu for helpful feedback on our manuscript.
We also thank Matthew P. A. Fisher, Haimeng Zhao, Lorenzo Piroli, Xiaoliang Qi, and Shao-Kai Jian for discussions and comments. 

\emph{Note added.---} After the completion of this work, we became aware of an independent work that achieves $\epsilon$-approximate $k$-designs over $N$ qubits using a different circuit architecture of depth $O(\log k \cdot \log\log Nk/\epsilon)$~\cite{CuiSchuster2025}.
The scaling of circuit depth with $N$ is similar to ours, while the scaling with $k$ is different. 
Our circuit depth for additive-error state designs saturates the depth lower bound proved in their paper, and is therefore optimal.

\appendix

\section{Operational meaning of additive-error state designs}\label{app:operation_state_additive_error}

In this appendix, we consider an adaptive experiment that queries a random state $\ket{\psi}$ drawn from an ensemble for $k$ times.
If the ensemble forms a state $k$-design with bounded additive error, the output state would have a bounded trace distance from the output of the experiment that queries genuine Haar random states.

Any adaptive experiment that queries the state $k$ times can be achieved by querying the states in parallel.
\begin{equation*}
\includegraphics[width=12cm]{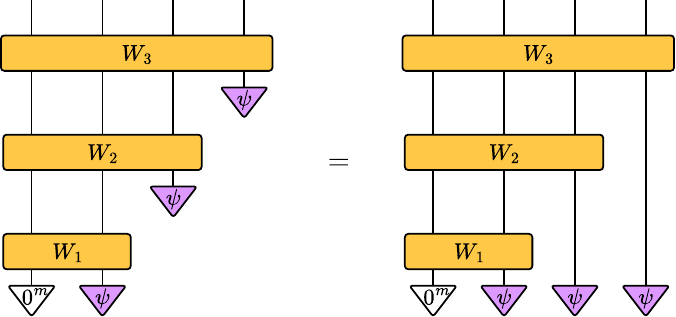}
\end{equation*}
The most general adaptive experiment that queries $k$ copies of the state $\ket{\psi}$ is shown on the left. 
Let $m$ denote the number of ancillas used in the experiments, and these ancillas are initialized in $\ket{0}^{\otimes m}$. 
In the first step, we apply a unitary $W_1$ acting on the ancillas and a copy of $\ket{\psi}$. 
In the $t$-th step, we introduce one copy of $\ket{\psi}$ and apply a unitary $W_t$ that couple $\ket{\psi}$ to all the degrees of freedom that we have introduced previously.
In the end, the output state of the experiment is $\ket{\psi_{\text{out}}}=W_k\cdots W_1\ket{0}^{\otimes m}\ket{\psi}^{\otimes k}$, which can be viewed as applying a unitary to the initial state $\ket{\psi}^{\otimes k}$, as shown on the right. 
If the state $\ket{\psi}$ is drawn from an ensemble $\calE$ that forms an approximate state $k$-design with $\epsilon$ additive error, the trace distance between the output state and the output of an experiment that queries Haar random states is
\begin{equation}
\begin{aligned}
&\left\lVert\Ens_{\psi\sim\calE}\ketbra{\psi_{\text{out}}}-\Ens_{\psi\sim\H}\ketbra{\psi_{\text{out}}}\right\rVert_1 =\lVert\rho_{\calE}^{(k)}-\rho_{\H}^{(k)}\rVert_1\le\epsilon.
\end{aligned}
\end{equation}
The first equality is because the trace norm is invariant under unitaries.

\section{Additive-error state designs with magic gates in the first layer}\label{sec:state_additive_initial_magic}

In this appendix, we provide a construction of the state $k$-design with bounded additive errors by inserting magic gates prior to the two-layer Clifford unitary.

\begin{corollary}[Additive-error state designs from constant magic followed by low-depth Clifford unitaries]
\label{thm:state_additive_initial_magic}
Consider a system of $N$ qubits. Let ${M_i}$ denote disjoint subsystems, each containing $N_i$ qubits. Assume $N_i\geq 2\log k$. Let $N_M=\sum_i N_i$ be the total size of these subsystems. The unitaries $U_i$, acting on subsystem $M_i$, are independently drawn from exact unitary $k$-designs. Let $V$ be a two-layer brickwork circuit, where each gate is drawn form an exact Clifford $k$-design. Let $\xi$ be the smallest size of the overlapping regions of the Clifford unitaries. Then the states $V\left(\bigotimes_i U_i\right)\ket{0}^N$ form an approximate state $k$-design with additive error 
\begin{equation}
    \epsilon\leq 2^{-\xi-\log\xi+O(k^2)}+ 2^{O(k^2)}\prod_i2^{-N_i}k!(1+k^22^{-N_i}).
\end{equation} 
In other words, one can achieve $\varepsilon$ additive error with $N_M= O\big(k^2+\log(1/\epsilon)\big)$ and $\xi=O(\log(N/\epsilon)+k^2)$.
\begin{equation*}
\includegraphics[width=8.2cm]{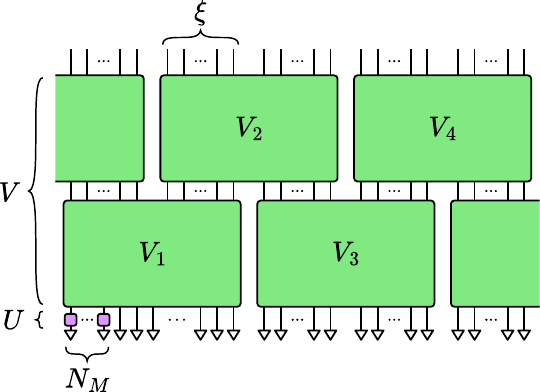}
\end{equation*}
\end{corollary}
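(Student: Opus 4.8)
The plan is to compute the $k$th moment $\rho_{\calE}^{(k)}$ directly as a superposition of operators in the Clifford commutant, and then bound its trace distance from $\rho_{\H}^{(k)}$. Since the $U_i$ act on pairwise-disjoint subsystems and precede the Clifford circuit, we have $\rho_{\calE}^{(k)}=\Phi_{V}^{(k)}\big(\rho_{0}^{(k)}\big)$, where $\Phi_{V}^{(k)}(\cdot)=\Ens_V V^{\otimes k}(\cdot)V^{\dagger\otimes k}$ is the $k$-fold channel of the two-layer Clifford circuit and $\rho_{0}^{(k)}=\big(\bigotimes_i\rho_{\H,M_i}^{(k)}\big)\otimes\ketbra{\boldzero}^{\otimes N_{\bar{M}}}$ is the $k$th moment of the magic-augmented product state, with $\rho_{\H,M_i}^{(k)}=D_{N_i,k}^{-1}\sum_{\sigma\in S_k}r(\sigma)_{M_i}$ and $\ket{\boldzero}=\ket{0}^{\otimes k}$. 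By Theorem~\ref{thm:uniformity_log_depth}, $\Phi_{V}^{(k)}$ satisfies the $\epsilon'$-approximate uniformity condition with $\epsilon'=N2^{-\xi-\log\xi+O(k^2)}$, so that
\begin{equation*}
\rho_{\calE}^{(k)}=2^{-Nk}\sum_{T}r(T)\,\mu(T)+2^{-Nk}\sum_{\vec{T}_1,\vec{T}_2}f(\vec{T}_1,\vec{T}_2)\,r(\vec{T}_1)\,\mu(\vec{T}_2),\qquad \mu(\vec{T}):=\Tr[r(\vec{T})^{\top}\rho_{0}^{(k)}],
\end{equation*}
where $\mu(T)$ denotes $\mu$ evaluated on the spatially-uniform configuration. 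The weight $\mu(\vec{T})$ factorizes over the Clifford blocks and, inside each block, over the clusters $M_i$ and the bare qubits of $\bar{M}$; on every bare qubit the factor is $\bra{\boldzero}\mathrm r(T)^{\top}\ket{\boldzero}=1$ since $(\boldzero,\boldzero)\in T$, so only the clusters contribute. I would then split the double sum into three pieces and bound each in $\onenormsm{\cdot}$, mirroring the relative-error construction of Theorem~\ref{thm:state_relative_log_depth} but with the $1$-norm in place of the $\infty$-norm.

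The first piece is the spatially uniform $T=\pi\in S_k$. The crucial computation is the identity $\sum_{\sigma\in S_k}\bbrakket{\pi}{\sigma}^{n}=\sum_{\tau\in S_k}2^{n\,\mathrm{cyc}(\tau)}=\prod_{j=0}^{k-1}(2^{n}+j)=D_{n,k}$, which follows from $\bbrakket{\pi}{\sigma}=2^{\dim(T_\pi\cap T_\sigma)}$, the fact that $\dim(T_e\cap T_\tau)$ equals the number $\mathrm{cyc}(\tau)$ of cycles of $\tau$, and the generating function $\sum_{\tau\in S_k}z^{\mathrm{cyc}(\tau)}=\prod_{j=0}^{k-1}(z+j)$. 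Hence $\mu(\pi)=\prod_i D_{N_i,k}^{-1}\sum_{\sigma}\bbrakket{\pi}{\sigma}^{N_i}=1$ exactly, so this piece equals $2^{-Nk}\sum_{\pi\in S_k}r(\pi)$; since $\rho_{\H}^{(k)}=D_{N,k}^{-1}\sum_{\pi}r(\pi)$ and $\onenormsm{\sum_\pi r(\pi)}=D_{N,k}$, it differs from $\rho_{\H}^{(k)}$ only by the scalar $|2^{-Nk}D_{N,k}-1|=O(k^{2}2^{-N})$, which is subleading.

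The second piece is the spatially uniform $T\in\Sigma_{k,k}\setminus S_k$: here $|T,\sigma|\ge 1$ for every $\sigma\in S_k$, so $\sum_\sigma\bbrakket{T}{\sigma}^{N_i}\le k!\,2^{N_i(k-1)}$ and $\mu(T)\le\prod_i k!\,2^{-N_i}(1+k^{2}2^{-N_i})$; combined with $\onenormsm{r(T)}\le 2^{Nk}$ and $\abs{\Sigma_{k,k}}=2^{O(k^{2})}$, this piece is at most $2^{O(k^{2})}\prod_i 2^{-N_i}k!(1+k^{2}2^{-N_i})$. The third piece collects the off-diagonal ($f$) terms; using $2^{-Nk}\onenormsm{r(\vec{T}_1)}\le 1$ together with $\abs{\mu(\vec{T}_2)}\le 1$ — each block factor being $1$ when $T_{2,x}\in S_k$ and at most $\prod_i k!\,2^{-N_i}\le 1$ when $T_{2,x}\notin S_k$, once the clusters are large enough — this piece is bounded by $\sum_{\vec{T}_1,\vec{T}_2}\abs{f(\vec{T}_1,\vec{T}_2)}=\epsilon'$. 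Adding the three bounds by the triangle inequality gives $\epsilon\le\epsilon'+2^{O(k^{2})}\prod_i 2^{-N_i}k!(1+k^{2}2^{-N_i})+O(k^{2}2^{-N})$, and absorbing the last term into the first yields the claimed estimate; the achievable parameters follow by choosing $N_M=\sum_i N_i=O(k^{2}+\log(1/\epsilon))$ (e.g.\ a single cluster of that size) and $\xi=O(\log(N/\epsilon)+k^{2})$.

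I expect the main obstacle to be the control of $\mu(\vec{T}_2)$ on non-permutation configurations: securing the uniform bound $\abs{\mu(\vec{T}_2)}\le 1$ is what forces each magic cluster to have size $N_i=\Omega(k\log k)$, and it rests on the same loose commutant estimate $|\sigma,T|\ge 1$ for $\sigma\in S_k$, $T\notin S_k$ already used in Theorem~\ref{thm:state_relative_log_depth}; some care is also needed in the bookkeeping when the clusters $M_i$ and the Clifford blocks are not aligned, which one can sidestep by taking each $M_i$ contained in a single block. By contrast, the identity $\mu(\pi)=1$ is the clean step that pins the permutation sector exactly onto $\rho_{\H}^{(k)}$ up to an exponentially small scalar.
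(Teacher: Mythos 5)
Your proposal is correct and follows essentially the same route as the paper's proof in Appendix~\ref{sec:state_additive_initial_magic}: apply the $\epsilon'$-approximate uniformity of the two-layer Clifford channel to the product of local Haar moments, split into the uniform-permutation, uniform-non-permutation, and non-uniform ($f$) pieces, and bound each in trace norm using $\mu(\pi)=1$, $|T,\sigma|\ge 1$, and $\onenormsm{r(\vec T)}\le 2^{Nk}$ (your cycle-index identity $\sum_\sigma\bbrakket{\pi}{\sigma}^{n}=D_{n,k}$ is just the paper's Weingarten delta-function step in disguise). Your observation that the uniform bound $\abs{\mu(\vec T_2)}\le 1$ forces $N_i=\Omega(k\log k)$ is also consistent with the paper's proof, which invokes $N_i\ge 2k\log k$ at that step (the $N_i\ge 2\log k$ in the statement appears to be a typo).
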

\begin{proof}
Before applying the Clifford unitary, the state on $M=\cup_i M_i$is a tensor product of Haar random states on $M_i$, whose $k$-th moment is
\begin{equation}
\rho_{\calE_0}^{(k)}=\left[\sum_{\sigma_i,\tau_i}\bigotimes_i\Wg(\sigma_i,\tau_i)r(\sigma_i)_{M_i}\right]\otimes \ket{0}\bra{0}^{\otimes N-N_M}.
\end{equation}
According to Lemma~\ref{thm:uniformity_log_depth}, the Clifford unitaries $V$ satisfies the $\epsilon'$-approximate uniformity condition with $\epsilon'=2^{-\xi-\log\xi+O(k^2)}$. The $k$-th moment of these Clifford unitaries can be written as
\begin{equation}
\Phi_{\calE_u}^{(k)}=2^{-Nk}\left[\sum_T|T)(T|+\sum_{\vec{T}_1,\vec{T_2}}f(\vec{T}_1,\vec{T}_2)|\vec{T}_1)(\vec{T}_2|\right].
\end{equation}
Applying it to the initial state, the state $k$-th moment becomes
\begin{equation}
\begin{aligned}
&\rho_{\calE}^{(k)}=2^{-Nk}\left[\sum_T r(T)\prod_i\sum_{\sigma_i,\tau_i}2^{N_i(k-|T,\sigma_i|)}\Wg(\sigma_i,\tau_i)\right. \\
&\qquad\qquad\left.+\sum_{\vec{T}_1,\vec{T}_2}f(\vec{T}_1,\vec{T}_2) r(\vec{T}_1)\prod_i\sum_{\sigma_i,\tau_i}\Wg(\sigma_i,\tau_i)2^{N_i(k-|T_{2,i},\sigma_i|)}\right],
\end{aligned}
\end{equation}
where $i$ is the index for subregions $M_i$, and $T_{2,i}$ is the value of $\vec{T}_2$ on region $M_i$. 
We decompose $\rho_{\calE}^{(k)}$ into three parts $\rho_{\calE}^{(k)}=\rho_{\rmI}^{(k)}+\rho_{\rmII}^{(k)}+\rho_{\rmIII}^{(k)}$, where
\begin{equation}
\begin{aligned}
\rho_{\rmI}^{(k)}=&2^{-Nk}\sum_{\pi}r(\pi)=2^{-Nk}D_{N,k}\rho_{\H}^{(k)},\\
\rho_{\rmII}^{(k)}=&2^{-Nk}\sum_{T\notin S_k} r(T)\prod_i\sum_{\sigma_i,\tau_i}2^{N_i(k-|T,\sigma_i|)}\Wg(\sigma_i,\tau_i), \\
\rho_{\rmIII}^{(k)}=&2^{-Nk}\sum_{\vec{T}_1,\vec{T}_2}f(\vec{T}_1,\vec{T}_2) r(\vec{T}_1)\prod_i\sum_{\sigma_i,\tau_i}\Wg(\sigma_i,\tau_i)2^{N_i(k-|T_{2,i},\sigma_i|)}.
\end{aligned}
\end{equation}
Using the triangle inequality, the additive error is bounded by
\begin{equation}
\lVert\rho_\calE^{(k)}-\rho_{\H}^{(k)}\rVert_1
\le\lVert\rho_{\rmI}^{(k)}-\rho_{\H}^{(k)}\rVert_1+\lVert\rho_{\rmII}^{(k)}\rVert_1+\lVert\rho_{\rmIII}^{(k)}\rVert_1.
\end{equation}
Each term on the right has an upper bound:
\begin{equation}
\lVert\rho_{\rmI}^{(k)}-\rho_{\H}^{(k)}\rVert_1=|2^{-Nk}D_{N,k}-1|\cdot\lVert\rho_{\H}^{(k)}\rVert_1\leq k^2 2^{-N}.
\end{equation}
\begin{equation}
\begin{aligned}
\lVert\rho_{\rmII}^{(k)}\rVert_1
&\leq 2^{-Nk} \sum_{T\notin S_k} \lVert r(T)\rVert_1\prod_i\sum_{\sigma_i,\tau_i}2^{N_i(k-|T,\sigma_i|)}|\Wg(\sigma_i,\tau_i)| \\
&\leq 2^{O(k^2)}\prod_i\sum_{\sigma_i,\tau_i}2^{N_i(k-1)}|\Wg(\sigma_i,\tau_i)|  \\
&\leq 2^{O(k^2)} \prod_i2^{-N_i}k!(1+k^22^{-N_i}).
\end{aligned}
\end{equation}
Here, we use $\lVert r(T)\rVert_1\leq 2^{Nk}$ and $\sum_{\sigma_i,\tau_i}|\Wg(\sigma_i,\tau_i)|\leq k!(1+k^22^{-N_i})$ for $k^2\leq 2^{N_i}$.
\begin{equation}\label{eq:state_additive_initial_magic_third_term}
\begin{aligned}
\lVert\rho_{\rmIII}^{(k)}\rVert_1\leq 2^{-Nk} \sum_{\vec{T}_1,\vec{T}_2}&|f(\vec{T}_1,\vec{T}_2)|\cdot \lVert r(\vec{T}_1)\rVert_1
\prod_i\Big|\sum_{\sigma_i,\tau_i}2^{N_i(k-|T_{2,i},\sigma_i|)}\Wg(\sigma_i,\tau_i)\Big|.
\end{aligned}
\end{equation}
If $T_{2,i}\in S_k$, then $\sum_{\sigma_i,\tau_i}\Wg(\sigma_i,\tau_i)2^{N_i(k-|T_{2,i},\sigma_i|)}=1$. If $T_{2,i}\notin S_k$, then
\begin{equation}
\begin{aligned}
&\Big|\sum_{\sigma_i,\tau_i}2^{N_i(k-|T_{2,i},\sigma_i|)}\Wg(\sigma_i,\tau_i)\Big|
\le \sum_{\sigma_i,\tau_i}2^{N_i(k-1)}|\Wg(\sigma_i,\tau_i)|
\le2^{-N_i}k!(1+k^22^{-N_i})\le1.
\end{aligned}
\end{equation}
The last inequality is from the assumption $N_i\geq 2k\log k$. Plugging it back to~\eqref{eq:state_additive_initial_magic_third_term} and using $\lVert r(\vec{T}_1)\rVert_1\leq 2^{Nk}$, we have $\lVert\rho_{\rmIII}^{(k)}\rVert_1\le\sum_{\vec{T}_1,\vec{T}_2}|f(\vec{T}_1,\vec{T}_2)| \le\epsilon'=2^{-\xi-\log\xi+O(k^2)} $.

Putting these together, the additive error is bounded by
\begin{equation}
\begin{aligned}
\lVert\rho_\calE^{(k)}-\rho_{\H}^{(k)}\rVert_1\le&2^{O(k^2)} \prod_i2^{-N_i}k!(1+k^22^{-N_i})+2^{-\xi-\log\xi+O(k^2)}.
\end{aligned}
\end{equation}
\end{proof}

In this construction of an additive-error state $k$-design, each state in the ensemble can be described by a Clifford augmented matrix product state.
The possibility of such states to realize additive error designs has been studied in Ref.~\cite{lami2025quantum,fux2024disentangling}.
Compared to the existing results, our construction has a reduced depth of the Clifford gates.

Our construction also suggests that an additive-error state $k$-design with $k=o(N^{1/2})$ does not require maximum magic. 
This improves the observation in Ref.~\cite{Liu:2020yso} that states with sub-maximum magic can realize additive-error state $k$-designs for $k = \tilde{o}(N^{1/4})$, which is based on the result of Ref.~\cite{Haferkamp:2020qel}.

\section{Collision probabilities in log-depth Clifford circuits}\label{app:collision_prob}

In this appendix, we study the $k$-th collision probability in an ensemble of Clifford unitaries. 
We show that this quantity obtained in two-layer random Clifford circuits approximates the one for global random Clifford gates with a bounded relative error.
Our result is consistent with the fact that anti-concentration happens at a logarithmic depth shown in Ref.~\cite{Barak:2020svy,Dalzell:2020vxs,Magni:2025zqf,aharonov2023polynomial}.

For a unitary ensemble $\calE= \{V\}$, the $k$-th collision probability for a given computation basis $\ket{\boldx}$ takes the form
\begin{equation}
p_{\calE}^{(k)}(\boldx):=\Ens_{V\sim\calE}|\bra{\boldx}V\ket{0}^{\otimes N}|^{2k}.
\end{equation}
For global random Clifford gates, this quantity is given by
\begin{equation}
p_{\C}^{(k)}:=\Ens_{V\sim\C}|\bra{\boldx}V\ket{0}^{\otimes N}|^{2k}=Z_{N,k}^{-1}|\Sigma_{k,k}|.
\end{equation}

\begin{corollary}[Relative error for the collision probability in log-depth Clifford circuits]
Consider an ensemble $\calE= \{V\}$ of two-layer Clifford circuits with each gate drawn independently from an exact Clifford $k$-design. 
Let $\xi$ be the size of the smallest overlapping regions in the two-layer Clifford circuit. 
The two-layer Clifford circuit achieves a relative error $\epsilon=N2^{-\xi-\log\xi+O(k^2)}$ in the $k$-th collision probability with respect to global random Clifford unitaries, i.e.
\begin{equation}
(1-\epsilon)p_{\C}^{(k)}\leq p_{\calE}^{(k)}(\boldx)\leq (1+\epsilon)p_{\C}^{(k)}\quad \text{for}\ \forall\boldx.
\end{equation}
In other words, one can achieve $\epsilon$ relative error in depth $O(\log(N/\epsilon)+k^2)$.
\end{corollary}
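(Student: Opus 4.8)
The plan is to reduce the statement to the approximate uniformity of two-layer Clifford circuits. First I would rewrite the collision probability as a matrix element of the $k$-th moment of the output-state ensemble: since $p_{\calE}^{(k)}(\boldx)=\Ens_{V\sim\calE}\bra{\boldx}^{\otimes k}\big(V\ketbra{0}^{\otimes N}V^\dagger\big)^{\otimes k}\ket{\boldx}^{\otimes k}=\bra{\boldx}^{\otimes k}\rho_{\calE}^{(k)}\ket{\boldx}^{\otimes k}$ with $\rho_{\calE}^{(k)}=\Phi_{\calE}^{(k)}(\ketbra{0}^{\otimes N})$, I can invoke Theorem~\ref{thm:uniformity_log_depth}, which gives $\rho_{\calE}^{(k)}=2^{-Nk}\big[\sum_T r(T)+\sum_{\vec{T}}f(\vec{T})r(\vec{T})\big]$ with $\sum_{\vec{T}}|f(\vec{T})|\le\epsilon'=N2^{-\xi-\log\xi+O(k^2)}$. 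Plugging this decomposition in turns $p_{\calE}^{(k)}(\boldx)$ into $2^{-Nk}\big[\sum_T \bra{\boldx}^{\otimes k}r(T)\ket{\boldx}^{\otimes k}+\sum_{\vec{T}}f(\vec{T})\bra{\boldx}^{\otimes k}r(\vec{T})\ket{\boldx}^{\otimes k}\big]$.

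The key step is to evaluate these matrix elements and show they all equal $1$, independently of $\boldx$. Write $\ket{\boldx}=W\ket{0}^{\otimes N}$ with $W=\bigotimes_i X^{x_i}$, a Pauli and hence Clifford unitary; moreover $W$ factorizes across the qubit blocks $M_x$ on which the second-layer gates act. Because $r(T)$ and $r(\vec{T})$ lie in the commutant of the $k$-th tensor power of the (block-wise) Clifford group, $W^{\otimes k}$ commutes with them, so $\bra{\boldx}^{\otimes k}r(T)\ket{\boldx}^{\otimes k}=\bra{0}^{\otimes Nk}(W^\dagger)^{\otimes k}W^{\otimes k}r(T)\ket{0}^{\otimes Nk}=\bra{0}^{\otimes Nk}r(T)\ket{0}^{\otimes Nk}$, and likewise for $r(\vec{T})$. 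The latter equals $1$ because $(\boldzero,\boldzero)\in T$ is the only element of $T$ contributing to $\braket{0}{\,\cdot\,}$-overlaps (this is the same fact used in the frame-potential computation of Eq.~\eqref{eq:stat_mech_1norm_bound_from_frame_potential}), and for $r(\vec{T})=\bigotimes_x r(T_x)_{M_x}$ it factorizes into a product of such $1$'s. Hence $p_{\calE}^{(k)}(\boldx)=2^{-Nk}\big(|\Sigma_{k,k}|+\sum_{\vec{T}}f(\vec{T})\big)$ for every $\boldx$.

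Finally I would compare to $p_{\C}^{(k)}=Z_{N,k}^{-1}|\Sigma_{k,k}|$: using $2^{-Nk}Z_{N,k}=1+O(k2^{k-N})$ for $N\ge k+\log k-2$, the ratio is
\begin{equation*}
\frac{p_{\calE}^{(k)}(\boldx)}{p_{\C}^{(k)}}=\big(1+O(k2^{k-N})\big)\left(1+\frac{\sum_{\vec{T}}f(\vec{T})}{|\Sigma_{k,k}|}\right),
\end{equation*}
and since $|\sum_{\vec{T}}f(\vec{T})|/|\Sigma_{k,k}|\le\epsilon'$ while $2^{k-N}\le 2^{-\xi+O(k)}$ for $\xi\le N$, both correction terms are absorbed into $\epsilon=N2^{-\xi-\log\xi+O(k^2)}$, giving $(1-\epsilon)p_{\C}^{(k)}\le p_{\calE}^{(k)}(\boldx)\le(1+\epsilon)p_{\C}^{(k)}$ uniformly in $\boldx$. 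I do not expect a genuine obstacle here: the only points requiring care are making the Clifford-invariance argument for the $\boldx$-independence airtight (it rests only on $\ket{\boldx}$ being a Pauli frame change of $\ket{0}^{\otimes N}$ and on $r(T)$ living in the Clifford commutant) and checking that the subleading $Z_{N,k}$ correction is dominated by $\epsilon'$, which holds whenever $\xi=O(\log(N/\epsilon)+k^2)$ as in the other results.
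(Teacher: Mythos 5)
Your proposal is correct and follows essentially the same route as the paper: apply the approximate-uniformity decomposition of the two-layer circuit, use the fact that $\ket{\boldx}$ is a (blockwise) Clifford/Pauli frame change of $\ket{0}^{\otimes N}$ so that all matrix elements $\bra{\boldx}^{\otimes k}r(T)\ket{\boldx}^{\otimes k}$ and $\bra{\boldx}^{\otimes k}r(\vec T)\ket{\boldx}^{\otimes k}$ equal $1$, and then absorb the $O(k2^{k-N})$ correction from $2^{-Nk}Z_{N,k}$ into $\epsilon$. The only cosmetic difference is that you work with the state moment $\rho_{\calE}^{(k)}$ and its single-index deviation $f(\vec T)$, whereas the paper applies the channel-level uniformity $\Phi_{\calE}^{(k)}$ directly to $\ketbra{0}^{\otimes kN}$ — the two are equivalent by Eq.~\eqref{eq:uniformity_unitary_implies_state}.
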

\begin{proof}
The $k$-fold channel of two-layer random Clifford circuits satisfies the approximate uniformity according to Theorem~\ref{thm:uniformity_log_depth}. Hence, it takes the form
\begin{equation}
\Phi_\calE^{(k)}=2^{-Nk}\left[\sum_T|T)(T|+\sum_{\vec{T}_1,\vec{T_2}}f(\vec{T}_1,\vec{T}_2)|\vec{T}_1)(\vec{T}_2|\right]
\end{equation}

The $k$-th collision probability is given by
\begin{equation}
\begin{aligned}
\Ens_{U\sim\calE}|\bra{\boldx}U\ket{0}^{\otimes N}|^{2k}&=\Tr[\ketbra{\boldx}^{\otimes k}\Phi_\calE^{(k)}(\ketbra{0}^{\otimes kN})] \\
&=2^{-Nk}\Bigg[\sum_T \Tr[\ketbra{\boldx}^{\otimes k}r(T)]\Tr[r(T)^{\top}\ketbra{0}^{\otimes kN}] \\
&+\sum_{\vec{T}_1,\vec{T_2}}f(\vec{T}_1,\vec{T}_2)\Tr[\ketbra{\boldx}^{\otimes k}r(\vec{T}_1)]\Tr[r(\vec{T}_2)^{\top}\ketbra{0}^{\otimes kN}]\Bigg] \\
&= 2^{-Nk}\Bigg[|\Sigma_{k,k}|+\sum_{\vec{T}_1,\vec{T_2}}f(\vec{T}_1,\vec{T}_2)\Bigg].
\end{aligned}
\end{equation}
Here, we note that $\Tr[\ketbra{\boldx}^{\otimes k} r(T)] = \Tr[\ketbra{0}^{\otimes kN} r(T)] = 1$ because any computational basis state $\ket{\boldx}$ is related to $\ket{0}^{\otimes N}$ by a Clifford gate, whose $k$-th tensor power commutes with $r(T)$ in the commutant.

The relative error therefore has an upper bound
\begin{equation}
\begin{aligned}
\frac{\big|p_{\calE}^{(k)}(\boldx)-p_{\C}^{(k)}\big|}{p_{\C}^{(k)}}
\le& |2^{-Nk}Z_{N,k}-1|+2^{-Nk}Z_{N,k}\sum_{\vec{T}_1,\vec{T_2}}|f(\vec{T}_1,\vec{T}_2)| \\
\le& k2^{k-N}+N2^{-\xi-\log\xi+O(k^2)}.
\end{aligned}
\end{equation}
\end{proof}

\section{Growing additive-error state designs with random Clifford gates}\label{app:gluing_lemma_haar_clifford}
We prove that a state $k$ design and a stabilizer state $k$ design can be glued by Clifford unitaries drawn from a Clifford $k$ design to form a state $k$ design over a larger region with bounded additive error.

\begin{lemma}[Gluing a state design and a stabilizer state design using unitaries from Clifford design]\label{thm:state_additive_haar_gluing}
Let $A$, $B$, $C$, and $D$ be disjoint subsystems of qubits. $N_A$, $N_B$, $N_C$, $N_D$ denote the size of the subsystems. Assume $N_C\geq k+\log k-2$. $\ket{\psi}_{AB}$ is drawn from a state $k$-design up to additive error $\varepsilon_{AB}$, $\ket{\psi}_{CD}$ is drawn from an exact stabilizer $k$-design. $V_{BC}$ is drawn from a (exact) Clifford $k$-design. Then $V_{BC}\ket{\psi}_{AB}\otimes \ket{\psi}_{CD}$ is an approximate state $k$-design with additive error $\varepsilon\leq \varepsilon_{AB} +2^{O(k^2)}(2^{-N_B}+2^{-N_C})$.
\begin{align*}
\includegraphics[width=14cm]{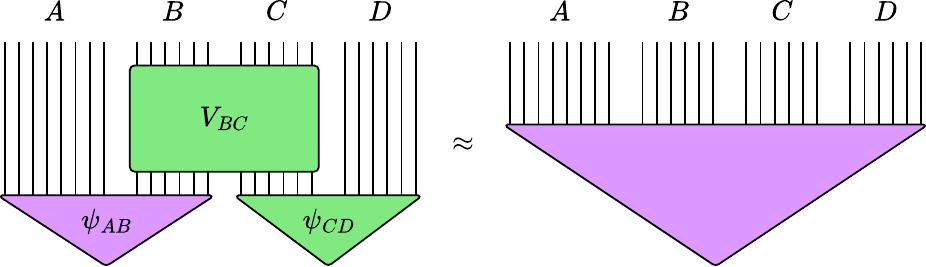}
\end{align*}
\end{lemma}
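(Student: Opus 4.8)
The plan is to compute the $k$-th moment $\rho_{\calE}^{(k)}$ of the glued ensemble $\calE = \{V_{BC}\,\ket{\psi}_{AB}\otimes\ket{\psi}_{CD}\}$ and compare it to $\rho_{\H}^{(k)}$ on $ABCD$, bounding the trace-norm difference by the triangle inequality. First I would replace $\ket{\psi}_{AB}$ with an exact state $k$-design: by monotonicity of the $1$-norm under the (CPTP) channel that tensors on $\ket{\psi}_{CD}$, applies $V_{BC}$ averaged over the Clifford $k$-design, this costs only $\varepsilon_{AB}$. So it suffices to bound the additive error of the ensemble $\calE_0$ built from an \emph{exact} state $k$-design on $AB$ and an exact stabilizer $k$-design on $CD$. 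For $\calE_0$, write $\rho^{(k)}_{AB} = \sum_{\sigma,\tau\in S_k}\Wg(\sigma,\tau)\,r(\sigma)_A\,\bbrakket{\sigma}{\tau}_B\cdots$ — more precisely, the moment of an exact state design on $N_{AB}$ qubits equals $\frac{1}{D_{N_{AB},k}}\sum_{\pi\in S_k} r(\pi)_{AB}$ — and $\rho^{(k)}_{CD} = \frac{1}{Z_{N_{CD},k}}\sum_{T'\in\Sigma_{k,k}} r(T')_{CD}$. Then apply the Clifford $k$-fold channel $\Phi^{(k)}_{\C,BC}$ from Eq.~\eqref{eq:clifford_commutant_weingarten_def}, contracting the $B$-legs of $r(\pi)_{AB}$ and the $C$-legs of $r(T')_{CD}$ against $r(T_2)_{BC}^\top$, which produces factors $\bbrakket{\pi}{T_2}^{N_B} = 2^{N_B(k-|\pi,T_2|)}$ and $\bbrakket{T_2}{T'}^{N_C} = 2^{N_C(k-|T_2,T'|)}$, together with $\Wg_{\C}(T_1,T_2)$ and the surviving operators $r(T_1)_{B}\otimes r(T_1)_C$ glued onto $r(\pi)_A$, $r(T')_D$.

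Next I would organize the resulting sum over $(\pi, T', T_1, T_2)$ by whether the ``bulk'' spins are aligned and in the permutation subgroup. The dominant contribution is $\pi = T_2 = T' = T_1 = $ some permutation $\rho\in S_k$; using $\sum_{T_2}\Wg_{\C}(T_1,T_2)\,\bbrakket{T_2}{\rho}^{N_C}$ which locks $T_2$ to $\rho$ up to a Clifford-Weingarten error (and the asymptotic $|2^{N_{BC}k}\Wg_{\C}(\rho,\rho)-1|\le 2^{-N_{BC}+O(k^2)}$ from Appendix~\ref{sec:clifford_commutant_wg_asymptotics}), this gives $\approx \frac{1}{D_{N,k}}\sum_{\rho\in S_k} r(\rho)_{ABCD} = \rho_{\H}^{(k)}$ up to a relative factor $|2^{-Nk}Z_{\cdots}-1| \le k2^{k-N}$-type correction. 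All remaining terms carry a genuine suppression: if $T_2\neq T'$ or $T_2\neq \pi$ then $|\pi,T_2|\ge 1$ or $|T_2,T'|\ge 1$, giving a $2^{-N_B}$ or $2^{-N_C}$ factor; if $T_1\neq T_2$ the Clifford Weingarten function is suppressed as $|\Wg_{\C}(T_1,T_2)|\le 2^{-N_{BC}k-N_{BC}+O(k^2)}$; and if $T_2\notin S_k$ (so $\pi$ must misalign from it since $\pi\in S_k$) we again pick up a $2^{-N_B}$ along with $|\Sigma_{k,k}|=2^{O(k^2)}$ choices. Using $\onenormsm{r(T_1)}\le 2^{Nk}$ throughout, and summing the $2^{O(k^2)}$-many subleading configurations, these contribute at most $2^{O(k^2)}(2^{-N_B}+2^{-N_C})$ to the trace norm after multiplying back the $2^{-Nk}$ prefactors against the operator norms. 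This step — carefully bookkeeping which of $2^{-N_B}$, $2^{-N_C}$, or Weingarten suppression saves each class of terms, and checking the $2^{O(k^2)}$ entropy factors never overwhelm a single exponential suppression — is the main obstacle, and it is exactly parallel to the case analysis in Lemma~\ref{thm:uniformity_gluing} and Theorem~\ref{thm:state_additive_final_magic}; the condition $N_C\ge k+\log k - 2$ is what makes $2^{-Nk}Z_{N_{CD},k}$ and the $r(T')$ linear-independence bounds usable.

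Combining, $\onenormsm{\rho^{(k)}_{\calE_0}-\rho^{(k)}_{\H}} \le 2^{O(k^2)}(2^{-N_B}+2^{-N_C})$, and adding back the $\varepsilon_{AB}$ from the first reduction step yields $\varepsilon\le \varepsilon_{AB}+2^{O(k^2)}(2^{-N_B}+2^{-N_C})$, as claimed. I would also remark that the same computation, read with $\ket{\psi}_{CD}$ a trivial product state ($D=\emptyset$, $C$ absorbed), recovers the structure used in Theorem~\ref{thm:state_additive_final_magic}, which is why this lemma gives an alternative proof route there.
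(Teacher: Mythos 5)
Your proposal follows essentially the same route as the paper's proof: reduce to the exact design on $AB$ via monotonicity of the trace norm (costing $\varepsilon_{AB}$), expand both moments in the Clifford commutant, apply $\Phi^{(k)}_{\C,BC}$, use the Weingarten delta-function identity to isolate the fully aligned permutation configurations, and bound the misaligned configurations by the $2^{-N_B}$, $2^{-N_C}$, or Weingarten suppressions against the $2^{O(k^2)}$ entropy, with $\onenormsm{r(\cdot)}\le 2^{Nk}$. The only cosmetic imprecision is writing $\bbrakket{T_2}{\rho}^{N_C}$ where the exact locking of $T_2$ uses the full $N_{BC}$-fold contraction $\bbrakket{\pi}{T_2}^{N_B}\bbrakket{T_2}{T'}^{N_C}$ with $\pi=T'=\rho$; otherwise the argument matches the paper's.
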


\begin{proof}
We write the resulting state as $\rho_{\H,AB,\varepsilon}^{(k)}\otimes\rho_{\C,CD}^{(k)}$. It can be approximated up to additive error $\varepsilon$ by
\begin{equation}
\rho_{\H,AB}^{(k)}\otimes\rho_{\C,CD}^{(k)}=D_{N_{AB},k}^{-1}Z_{N_{CD},k}^{-1}\sum_{\sigma,T}{r(\sigma)}_{AB}\otimes r(T)_{CD}
\end{equation}
where $\rho_{\H,AB}^{(k)}$ is the $k$-th moment of an exact $k$-design on $AB$.
After applying the random Clifford unitaries, the $k$-th moment of the state becomes
\begin{equation}
\begin{aligned}
&\Phi_{\C,BC}^{(k)}(\rho_{\H,AB}^{(k)}\otimes\rho_{\C,CD}^{(k)})\\
&\quad=\frac{2^{N_{BC}k}}{D_{N_{AB},k}Z_{N_{CD},k}}\sum_{\sigma,T_{1,2,3}}\Wg_{\C}(T_1,T_2)2^{-N_B|T_2,\sigma|-N_C|T_2,T_3|}r(\sigma)_A\otimes r(T_1)_{BC}\otimes r(T_3)_D
\end{aligned}
\end{equation}
\begin{equation*}
\includegraphics[width=4cm]{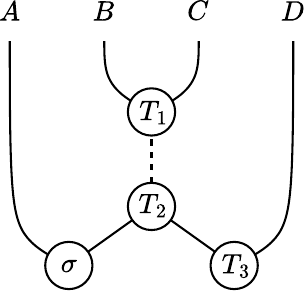}  
\end{equation*}

The trace distance from the $k$-th moment of Haar random states takes the form
\begin{equation}
\begin{aligned}
&\lVert \Phi_{\C,BC}^{(k)}(\rho_{\H,AB}^{(k)}\otimes\rho_{\C,CD}^{(k)})-\rho_{\H,ABCD}^{(k)}\lVert_1 \\
&\leq 2^{kN}\sum_{\sigma,T_{1,3}}\Bigg|\frac{\sum_{T_2}\Wg_{\C}(T_1,T_2)2^{N_B(k-|T_2,\sigma|)+N_C(k-|T_2,T_3|)}}{D_{N_{AB},k}Z_{N_{CD},k}}
-\frac{\delta_{T_1,\sigma}\delta_{T_3,\sigma}}{D_{N,k}}\Bigg|
\end{aligned}
\end{equation}

where we use $\lVert r(\sigma)_A\otimes r(T_1)_{BC}\otimes r(T_3)_D\rVert_1\leq 2^{kN}$, $N=N_A+N_B+N_C+N_D$. 
For $T_3=\sigma$, the summation over $T_2$ yields a delta function $\delta_{T_1,\sigma}$. The corresponding contribution is
\begin{equation}
\begin{aligned}
2^{kN}\sum_{\sigma \in S_k}\left|D_{N_{AB},k}^{-1}Z_{N_{CD},k}^{-1}-D_{N,k}^{-1}\right|
\leq & k!\big[(1+k^22^{-N_{AB}})(1+k2^{k-N_{CD}})-1+k^22^{-N}\big] \\
\leq &k!(k^2 2^{-N_{AB}} + k 2^{k-N_{CD}} + (k^3+k^2) 2^{-N}) %
\end{aligned}
\end{equation}
For $T_3\notin S_k$, the contribution 
\begin{equation}
\begin{aligned}
& \frac{2^{kN}}{D_{N_{AB},k}Z_{N_{CD},k}}\big|\Wg_{\C}(T_1,T_2)\big|2^{N_B(k-|T_2,\sigma|)+N_C(k-|T_2,T_3|)}
\end{aligned}
\end{equation}
is either suppressed by $2^{-N_B}$ or $2^{-N_C}$ due to misalignment of inside $2^{N_B(k-|T_2,\sigma|)+N_C(k-|T_2,T_3|)}$, or suppressed by misalignment in $T_1$ and $T_2$ to give $|\Wg_{\C}(T_1,T_2)|\leq 2^{-N_{BC} k-N_{BC}+O(k^2)}$. The number of such terms are upper bounded by $2^{O(k^2)}$. When we sum those up, we at most get $2^{O(k^2)}(2^{-N_B}+2^{-N_C})$. To sum up, we have
\begin{equation}
\begin{aligned}
&\lVert \Phi_{\C,BC}^{(k)}(\rho_{\H,AB}^{(k)}\otimes\rho_{\C,CD}^{(k)})-\rho_{\H,ABCD}^{(k)}\lVert_1 
\leq 2^{O(k^2)}(2^{-N_B}+2^{-N_C}).
\end{aligned}
\end{equation}
We can thus use the triangle inequality to upper-bound the additive error
\begin{equation}
\begin{aligned}
&\lVert \Phi_{\C,BC}^{(k)}(\rho_{\H,AB,\varepsilon}^{(k)}\otimes\rho_{\C,CD}^{(k)})-\rho_{\H,ABCD}^{(k)}\lVert_1 \\
&\leq \lVert \Phi_{\C,BC}^{(k)}(\rho_{\H,AB,\varepsilon}^{(k)}\otimes\rho_{\C,CD}^{(k)})-\Phi_{\C,BC}^{(k)}(\rho_{\H,AB}^{(k)}\otimes\rho_{\C,CD}^{(k)})\lVert_1  +\lVert \Phi_{\C,BC}^{(k)}(\rho_{\H,AB}^{(k)}\otimes\rho_{\C,CD}^{(k)})-\rho_{\H,ABCD}^{(k)}\lVert_1 \\
&\leq\lVert\rho_{\H,AB,\varepsilon}^{(k)}\otimes\rho_{\C,CD}^{(k)}-\rho_{\H,AB}^{(k)}\otimes\rho_{\C,CD}^{(k)}\lVert_1  +\lVert \Phi_{\C,BC}^{(k)}(\rho_{\H,AB}^{(k)}\otimes\rho_{\C,CD}^{(k)})-\rho_{\H,ABCD}^{(k)}\lVert_1 \\
&\leq \varepsilon+2^{O(k^2)}(2^{-N_B}+2^{-N_C}).
\end{aligned}
\end{equation}
\end{proof}

\section{More on the Clifford commutant}\label{app:clifford_commutant}

\subsection{Asymptotics of Clifford Weingarten functions}
\label{sec:clifford_commutant_wg_asymptotics}
We present the asymptotics of the Weingarten functions of the Clifford group over $n$ qubits.
The results are based on properties of an orthogonal basis analyzed in~\cite{Haferkamp:2020qel}. 

To begin, the $k$-fold channel $\Phi^{(k)}_{\C}$ can be formulated as an operator in the doubled Hilbert space via the Choi–Jamiołkowski isomorphism
\begin{equation}
\hat{\Phi}_C^{(k)}:=\Ens_{V\sim\C}V^{\otimes k}\otimes V^{*\otimes k}=\sum_{T_i,T_j}\Wg_{\C}(T_i,T_j)\kket{T_i}^{\otimes n}\bbra{T_j}^{\otimes n}.
\end{equation}
Here, the states $\{\kket{T_i}\}$, vectorization of the operators $r(T)$, are linearly independent but not orthogonal, with inner products given by $\bbrakket{T_i}{T_j}=2^{n(k-|T_i,T_j|)}$. 
One can construct an orthogonal basis using the Gram-Schmidt orthogonalization\footnote{The disadvantage of this basis is that it is not a tensor product of operators on single qubits. Therefore, we work in the $\{\kket{T_i}\}$ basis to proceed in the main text.}:
\begin{gather}
\kket{E_i}=\sum_{j=1}^iA_{ij}\kket{T_j}^{\otimes n}, \qquad
A_{ij}=\sum_{\Pi\in S_i,\Pi(i)=j}\sgn(\Pi)\prod_{l=1}^{j-1}\bbrakket{T_l}{T_{\Pi(l)}}^n.
\end{gather}
It has the following properties~\cite{Haferkamp:2020qel}
\begin{equation}
\begin{aligned}
&|A_{ii}-2^{-nk}|\leq 2^{-nk-n+O(k^2)}), \, \forall i\\
&|A_{ij}|\leq 2^{-nk-n+O(k^2)},\, \forall i\neq j\\
&|A_{ij}|\leq 2^{-n(k+|\dim\rmN_i-\dim\rmN_j|)+O(k^3)}, \, \forall i\neq j.
\end{aligned}
\end{equation}
The normalization is
\begin{equation}
|\bbrakket{E_i}{E_i}-2^{nk}|\leq 2^{nk-2n+O(k^2)}.
\end{equation}
Since $\hat{\Phi}_{\C}^{(k)}$ is a projection operator, it is an equal weight sum of an orthonormal basis:
\begin{equation}
\hat{\Phi}_{\C}^{(k)}=\sum_j\frac{1}{\bbrakket{E_i}{E_i}}\kket{E_i}\bbra{E_i}.
\end{equation}
So the Weingarten functions can be expressed as
\begin{equation}
\Wg_{\C}(T_i,T_j)=\sum_k\frac{1}{\bbrakket{E_k}{E_k}}A_{ki}A_{kj}^*.
\end{equation}
We have the following asymptotics
\begin{equation}\label{eq:clifford_weingartens_asymptotics}
\begin{aligned}
&|\Wg_{\C}(T_i,T_i)-2^{-nk}|\le2^{-nk-n+O(k^2)},\, \forall i\\
&|\Wg_{\C}(T_i,T_j)|\leq 2^{-nk-n+O(k^2)},\, \forall i\neq j \\
&|\Wg_{\C}(T_i,T_j)|\leq 2^{-n(k+|\dim\rmN_i-\dim\rmN_j|)+O(k^3)},\, \forall i\neq j
\end{aligned}
\end{equation}
For $i\neq j$, we used the triangle inequality $|\dim\rmN_i-\dim\rmN_k|+|\dim\rmN_k-\dim\rmN_j|\ge|\dim\rmN_i-\dim\rmN_j|$.

\subsection{Diamond norm}\label{app:diamond_norm}
To bound the additive error of approximate unitary designs, we use the diamond norm for the superoperator $|T_1)(T_2|(\cdot) = \mathrm{r}(T_1)\Tr[\mathrm{r}(T_2)^\top (\cdot)]$ acting on a single qubit:
\begin{equation}
\lVert |T_1)(T_2|\rVert_\Diamond=2^{k+\dim\rmN_2-\dim\rmN_1}.\label{eq:clifford_commutant_diamond_norm_bound}
\end{equation}
To prove this, we first use an upper bound of the diamond norm (Lemma 3 of~\cite{Haferkamp:2020qel}):
\begin{equation}
\begin{aligned}
\lVert|T_1)(T_2|\rVert_\Diamond&=\max_{\lVert O\rVert_1\leq 1}\lVert|T_1)(T_2|\otimes \id(O)\rVert_1 \\
&=\max_{\lVert O\rVert_1\leq 1}\lVert \mathrm{r}(T_1)\otimes\Tr_1[(\mathrm{r}(T_2)^\top\otimes I) O]\rVert_1 \\
&=\lVert \mathrm{r}(T_1)\rVert_1\max_{\lVert O\rVert_1\leq 1}\lVert \Tr_1[(\mathrm{r}(T_2)^\top\otimes I) O]\rVert_1 \\
&\le\lVert \mathrm{r}(T_1)\rVert_1\max_{\lVert O\rVert_1\leq 1}\lVert (\mathrm{r}(T_2)^\top\otimes I) O\rVert_1 \\
&=\lVert \mathrm{r}(T_1)\rVert_1\lVert \mathrm{r}(T_2)\otimes I\rVert_\infty=\lVert \mathrm{r}(T_1)\rVert_1 \lVert \mathrm{r}(T_2)\rVert_\infty\\
&=2^{k+\dim\rmN_2-\dim\rmN_1}.
\end{aligned}
\end{equation}
The notation $\Tr_1$ is the partial trace over the first copy of the system. 
The second line is because the trace norm is multiplicative under the tensor product. 
The third line is because the trace norm is non-increasing under the partial trace. 
The other direction is given by
\begin{equation}
\begin{aligned}
\lVert|T_1)(T_2|\rVert_\Diamond&\geq\frac{\lVert|T_1)(T_2|(\mathrm{r}(T_2))\rVert_1}{\lVert\mathrm{r}(T_{2})\rVert_1}=\frac{\lVert 2^k\mathrm{r}(T_1)\rVert_1}{\lVert\mathrm{r}(T_{2})\rVert_1}=2^{k+\dim\rmN_2-\dim\rmN_1}.
\end{aligned}
\end{equation}

\section{Stabilizer Rényi entropy}\label{sec:stabilizer_renyi_entropy}

The stabilizer Rényi entropies are defined as~\cite{Leone:2021rzd}
\begin{equation}
\begin{aligned}
\calM_{k}&=-\log\Bigg(2^{-N}\sum_{\boldu,\boldv\in \mathbb{F}_2^N}|\bra{\psi}P_{\boldu,\boldv}\ket{\psi}|^{k}\Bigg)=-\log\Tr\left(2^{N}P^{(k)}\ket{\psi}\bra{\psi}^{\otimes k}\right),
\end{aligned}
\end{equation}
where $P_{\boldu,\boldv}=i^{\boldu\cdot\boldv}Z^{\boldu}X^{\boldv}=i^{\boldu\cdot\boldv}\bigotimes_{i=1}^NZ^{u_i}X^{v_i}$, representing all Pauli strings up to global phases. We take $k$ to be an integer multiple of $4$.
The Pauli moment $P^{(k)}$ factorizes as a tensor product of single-qubit Pauli moments. The single-qubit Pauli moment is
\begin{equation}
\begin{aligned}
P^{(k)}&=4^{-1}\sum_{m,n}(i^{mn}Z^mX^n)^{\otimes k}=4^{-1}\sum_{m,n}Z^{m\boldone_{k}}X^{n\boldone_{k}},\qquad
\boldone_{k}=(\underbrace{1,\cdots,1}_{k})\in \mathbb{Z}_2^{k}.    
\end{aligned}
\end{equation}
This is a member of the Clifford commutant corresponding to $r(T_S)=2P_{\CSS(\{\boldone_{k}\})}$, where $P_{\CSS(\{\boldone_{k}\})} = P^{(k)}$ is the projection onto the CSS codespace associated with the defect subspace $\{\boldone_{k}\}$ that contains two elements, $\boldzero_{k}$ and $\boldone_{k}$.
In the case of $k=4$, this is the only CSS code projector in the Clifford commutant~\cite{Gross_2021}, playing an important role in the representation theory of the Clifford group~\cite{Zhu:2016gfm}.

\section{List of notation}\label{sec:list_of_notations}
\begin{itemize}
\item $\Phi_{\calE}^{(k)}$, $\Phi_{\H}^{(k)}$, $\Phi_{\C}^{(k)}$: $k$-fold channels of an unitary ensemble $\calE$, the ensemble of Haar random unitaries, and the ensemble of random Clifford unitaries.
\item $\varrho_{\calE}^{(k)}$, $\varrho_{\H}^{(k)}$: $k$-fold channels of a unitary ensemble $\calE$ and the ensemble of Haar random unitaries acting on EPR pairs.
\item $\varrho_0^{(k)}$: an approximation of $\varrho_{\H}^{(k)}$.
\item $\rho_{\calE}^{(k)}$, $\rho_{\H}^{(k)}$, $\rho_{\C}^{(k)}$: $k$-th moments of a state ensemble $\calE$, the ensemble of Haar random states, and the ensemble random stabilizer states.
\item $\mathrm{r}(T)$: a member of the Clifford commutant on a single qubit.
\item $r(T)_S$: a member of the Clifford commutant on a multi-qubit system $S$.
\item $r(\vec{T})$: $\bigotimes_x r(T_x)$.
\item $\kket{T}$: vectorization of $r(T)$.
\item $\bbrakket{T_1}{T_2}$: $\Tr[r(T_1)^{\top}r(T_2)]$.
\item $|T_1)(T_2|$: a superoperator $|T_1)(T_2|(\cdot)=r(T_1)\Tr[r(T_2)^\top(\cdot)]$.
\item $|\vec{T}_1)(\vec{T}_2|$: $\bigotimes_x |T_{1,x})(T_{2,x}|$.
\item $\sigma$: a permutation that belong to $S_k$.
\item $T\notin S_k$: a stochastic Lagrangian subspace that is not associated with permutations.
\item $r(\sigma)$: a member of the Clifford commutant that is a permutation operator.
\item $\Wg(\sigma,\tau)$: Weingarten functions for Haar random unitaries.
\item $\Wg_{\C}(T_1,T_2)$: Weingarten functions for random Clifford unitaries.
\item $|T_1,T_2|$: distance between $T_1$ and $T_2$ defined via $\Tr[\mathrm{r}(T_1)^\top \mathrm{r}(T_2)]=2^{k-|T_1,T_2|}$
\item $D_{N,k}$: $\prod_{i=0}^{k-1}(2^N+i)$, appears in $\rho_{\H}^{(k)}$.
\item $Z_{N,k}$: $2^N\prod_{i=0}^{k-2}(2^N+2^i)$, appears in $\rho_{\C}^{(k)}$.
\item $\log$: logarithmic function with base $2$.
\item $\lVert\cdot\rVert_p$: Schatten $p$-norm.

\end{itemize}

\bibliography{reference.bib}

\end{document}